\begin{document}

\title*{Sampling and Statistical Physics via Symmetry}
\author{Steve Huntsman}
\institute{\email{sch213@nyu.edu}}
%
%
\maketitle

\abstract*{
We formulate both Markov chain Monte Carlo (MCMC) sampling algorithms and basic statistical physics in terms of elementary symmetries. This perspective on sampling yields derivations of well-known MCMC algorithms and a new parallel algorithm that appears to converge more quickly than current state of the art methods. The symmetry perspective also yields a parsimonious framework for statistical physics and a practical approach to constructing meaningful notions of effective temperature and energy directly from time series data. We apply these latter ideas to Anosov systems.
}

\abstract{
We formulate both Markov chain Monte Carlo (MCMC) sampling algorithms and basic statistical physics in terms of elementary symmetries. This perspective on sampling yields derivations of well-known MCMC algorithms and a new parallel algorithm that appears to converge more quickly than current state of the art methods. The symmetry perspective also yields a parsimonious framework for statistical physics and a practical approach to constructing meaningful notions of effective temperature and energy directly from time series data. We apply these latter ideas to Anosov systems.
}

\section{Introduction}
\label{sec:introduction}

Sampling and statistical physics are essentially dual concepts. Phenomenologists sample from physical models to obtain data, and theorists construct physical models to explain data. For simple data and/or systems, the pushforward of an initial state distribution under a deterministic dynamical model may be theoretically adequate (at least up to a Lyapunov time or its ilk), but for complex data and/or systems, an intrinsically statistical model is typically necessary.

Moreover, sampling strategies and physical models are frequently manifestations of each other \cite{amey2018analysis}. For instance, Glauber (spin-flip), Kawasaki (spin-exchange), and Swendsen-Wang (spin-cluster) dynamics are each both special-purpose \emph{Markov chain Monte Carlo} (MCMC) algorithms and models for the time evolution of a spin system. Each algorithm/model has its own physical features, e.g. spin-flip dynamics are suited to the canonical ensemble; spin-exchange dynamics preserve an order parameter; and spin-cluster dynamics are both more efficient and descriptive for systems near criticality \cite{binder2019monte}. As the \emph{chaotic hypothesis} essentially stipulates that spin systems are generic statistical-physical systems \cite{gallavotti1995stationary,gallavotti1999statistical}, this blurring of the distinction between algorithm and model can also be regarded as generic.
\footnote{
NB. Even $SU(N)$ field theory can be treated as a spin system: see, e.g. \cite{durhuus1980connection}.
}

Meanwhile, physics has a long tradition of formulating theories in terms of symmetries. Perhaps surprisingly, both sampling strategies and the basic structure of statistical physics itself can also be formulated in terms of symmetries. We outline these respective formulations with an eye towards (in \S \ref{sec:1}) efficient parallel MCMC algorithms and (in \S \ref{sec:2}) effective temperatures and energy functions that can be obtained directly from data for descriptive purposes. Finally, in \S \ref{sec:Anosov} we apply the ideas of \S \ref{sec:2} to Anosov systems, where they suggest a broader framework for nonequilibrium statistical physics.

\section{Sampling via symmetry}
\label{sec:1}

MCMC algorithms estimate expected values by running a Markov chain with the desired invariant measure. Though they arose from computational physics, MCMC algorithms have become ubiquitous, particularly in statistical inference and machine learning, and their importance in the toolkit of numerical algorithms is difficult to overstate \cite{richey2010mcmc,brooks2011mcmc}.

As such, there is a vast literature on MCMC algorithms. However, there is also much still left unexplored. As we shall see, the interface between MCMC algorithms and the theory of Lie groups and Lie algebras holds a surprise. A key observation is that the space of transition matrices with a given invariant measure is a \emph{monoid} that is closely related to a Lie group. Certain natural elements of this monoid with simple closed form expressions naturally lead to constructions of the classical Barker and Metropolis MCMC samplers. These constructions generalize, leading to higher-order versions of samplers that respectively correspond to the ensemble MCMC algorithm of \cite{neal2011ensemble} and an algorithm of \cite{delmas2009does}. A further generalization leads to a new algorithm that we call the \emph{higher-order programming solver} and whose convergence appears to improve on the state of the art. Each of these algorithms is only presently defined for finite state spaces and leaves the proposal mechanism unspecified: indeed, our entire focus is on acceptance mechanisms. 
\footnote{
By repeated sampling, we can extend any proposal mechanism for single states to multiple states.
}

In this section, which is based on the conference paper \cite{huntsman2020fast}, we review the basics of MCMC, Lie theory, and related work in \S \ref{sec:Background}. We then briefly consider the Lie group generated by a probability measure in \S \ref{sec:Group}. In particular, we construct a convenient basis of the subalgebra of the stochastic Lie algebra that annihilates a target probability measure $p$. This basis only requires knowledge of $p$ up to a multiplicative factor (e.g., a partition function), and this fact is the essential reason why MCMC algorithms work in general. In \S \ref{sec:Monoid}, we show how we can analytically produce transition matrices that leave $p$ invariant. We then construct the Barker and Metropolis samplers from Lie-theoretic considerations in \S \ref{sec:BarkerMetropolis}. In \S \ref{sec:Algebra}, we extend earlier results, leading to generalizations of the Barker and Metropolis samplers that entertain multiple proposals at once and that we explicitly construct in \S \ref{sec:HigherOrder}. We then demonstrate the behavior of these samplers on a small spin glass in \S \ref{sec:Behavior}. In \S \ref{sec:LP}, we outline the construction of multiple-proposal transition matrices that are closest in Frobenius norm to the ``ideal'' transition matrix $1p$, and we introduce and demonstrate the resulting higher-order programming solver. Finally, we close our discussion of MCMC algorithms with remarks in \S \ref{sec:RemarksPhysics}.

\subsection{\label{sec:Background}Background}

\subsubsection{\label{sec:MCMC}Markov chain Monte Carlo}

As we have already mentioned in \S \ref{sec:1} and (e.g.) \cite{bremaud1999markov} discusses at length, MCMC algorithms estimate expected values of functions with respect to a probability measure $p$ that is infeasible to construct. The archetypal instance comes from equilibrium statistical physics, where $p_j = Z^{-1} \exp(-\beta E_j)$ is hard to compute because the partition function $Z$ is unknown due to the scale of the problem, but the energies $E_j$ are individually easy to compute. The miracle of MCMC is that we can construct an irreducible, ergodic Markov chain with invariant measure $p$ using only unnormalized and easily computable terms such as $\exp(-\beta E_j)$.

Let $X_t$ denote the state of such a chain at time $t$. In the limit, $X_t \sim p$ for any initial condition, and $\mathbb{E}_p f(X) = \lim_{t \rightarrow \infty} \frac{1}{t} \sum_{j=1}^t f(X_j)$ even though the $X_j$ are correlated. The problem of constructing such a chain is typically decomposed into \emph{proposal} and \emph{acceptance} steps as in Algorithm \ref{alg:mcmc}, with respective probabilities $q_{jk} := \mathbb{P}(X' = k | X_t = j)$ and $\alpha_{jk} := \mathbb{P}(X_{t+1} = k | X' = k, X_t = j)$. The proposal and acceptance are combined to form the chain transitions via $P_{jk} := \mathbb{P}(X_{t+1} = k | X_t = j) = q_{jk} \alpha_{jk}$.

\begin{algorithm}[tb]
   \caption{MCMC}
   \label{alg:mcmc}
\begin{algorithmic}
   \STATE {\bfseries Input:} Runtime $T$ and $P_{jk} = q_{jk}\alpha_{jk}$ with $pP = p$
   \STATE Initialize $t=0$ and $X_0$
   \REPEAT
   \FOR{each state $k$}
   \STATE Propose $k$ with probability $q_{jk}$
   \ENDFOR
   \STATE Accept $X_{t+1} = k$ with probability $\alpha_{jk}$
   \STATE Set $t = t+1$
   \UNTIL{$t = T$}
   \STATE {\bfseries Output:} $\{X_t\}_{t=0}^T \sim p^{\times (T+1)}$ (approximately)
\end{algorithmic}
\end{algorithm}

The reasonably generic Hastings algorithm employs an acceptance mechanism of the form $\alpha_{jk} = \frac{s_{jk}}{1+t_{jk}}$, where $t_{jk} := \frac{p_j q_{jk}}{p_k q_{kj}}$ and $s$ need only be symmetric with entries $s_{jk} \in (0, 1 + \min(t_{jk},t_{kj})]$. The Barker sampler corresponds to the choice $s_{jk} = 1$, while the Metropolis-Hastings sampler corresponds to the optimal \cite{peskun1973optimum} choice $s_{jk} = 1 + \min(t_{jk},t_{kj})$.

\subsubsection{\label{sec:Lie}Lie groups and Lie algebras}

For the sake of self-containment, we briefly restate the basic concepts of Lie theory in the real and finite-dimensional setting.
For general background on Lie groups and algebras, see, e.g. \cite{onishchik1990lie,kirillov2008lie}. 

A \emph{Lie group} is a manifold with a smooth group structure. The tangent space of a Lie group $G$ at the identity is the \emph{Lie algebra} $\mathfrak{lie}(G)$: the group structure is echoed in the algebra via a bilinear antisymmetric \emph{bracket} $[\cdot, \cdot]$ that satisfies the \emph{Jacobi identity}
\begin{equation}
[X,[Y,Z]] + [Y,[Z,X]] + [Z,[X,Y]] = 0. \nonumber
\end{equation}

By Ado's theorem, a real finite-dimensional Lie group $G$ is isomorphic to a subgroup of the group $GL(n, \mathbb{R})$ of invertible $n \times n$ matrices over $\mathbb{R}$. In this circumstance, $\mathfrak{lie}(G)$ is isomorphic to a subalgebra of real $n \times n$ matrices, with bracket as the usual matrix commutator $[X,Y] := XY-YX$. Meanwhile, the matrix exponential sends $\mathfrak{lie}(G)$ to $G$ in a way that respects both the algebra and group structures.

\subsubsection{\label{sec:Related}Related work}

The higher-order Barker and Metropolis samplers we construct have previously been considered in \cite{neal2011ensemble} and \cite{delmas2009does}, respectively. Besides ensemble algorithms, \cite{robert2018mcmc} details approaches to accelerating MCMC algorithms via multiple try algorithms as in \cite{liu2000multiple,martino2018MT,martino2018IRSM}; and by parallelization as in \cite{calderhead2014parallel}. 

Discrete symmetries that (possibly approximately) preserve the level sets of a target measure have also been exploited to accelerate MCMC algorithms in \cite{niepert2012markov,niepert2012mcmc,bui2013automorphism,shariff2015symmetries,vandenbroeck2015lifted,anand2016contextual}. Similarly, ``group moves'' for MCMC algorithms were considered in \cite{liu1999parameter,liu2000generalised}. However, we are not aware of previous attempts to consider continuous symmetries preserving a target measure in the context of MCMC.

That said, Markov models on groups have been studied in, e.g., \cite{saloffcoste2001random,ceccherini2008harmonic}. However, although notional applications of Lie theory to Markov models motivate work on the stochastic group, actual applications themselves are few in number, with \cite{sumner2012lie} serving as an exemplar. 

If we ignore considerations of analytical tractability and/or computational efficiency, we can consider generic MCMC algorithms that optimize some criterion over the relevant monoid. Optimal control considerations lead to algorithms such as those of \cite{suwa2010detailed,chen2013accelerating,bierkens2016nonreversible,takahashi2016detailed} that optimize convergence while sacrificing reversibility/detailed balance. Meanwhile, \cite{frigessi1992optimal,pollet2004optimal,chen2012optimal,wu2015optimal,huang2018optimal} seek to optimize the asymptotic variance.

\subsection{\label{sec:Group}The Lie group generated by a probability measure}

For $1 < n \in \mathbb{N}$, let $p$ be a probability measure on $[n] := \{1,\dots,n\}$. Relying on context to resolve any ambiguity, we write $1 = (1,\dots,1)^T \in \mathbb{R}^n$. Now following \cite{johnson1985markov,poole1995stochastic,boukas2015lie,guerra2018stochastic}, we define the \emph{stochastic group} 
\begin{equation}
\label{eq:STO}
STO(n) := \{P \in GL(n,\mathbb{R}): P1 = 1 \},
\end{equation}
i.e., the stabilizer fixing $1$ on the left in $GL(n,\mathbb{R})$, and
\begin{equation}
\label{eq:GEN}
\langle p \rangle := \{P \in STO(n) : pP = p \},
\end{equation}
i.e., the stabilizer fixing $p$ on the right in $STO(n)$. We call $\langle p \rangle$ the \emph{group generated by $p$}. $STO(n)$ and $\langle p \rangle$ are both Lie groups, with respective dimensions $n(n-1)$ and $(n-1)^2$. If $P \in STO(n)$ is irreducible and ergodic, then its unique invariant measure is $\langle P \rangle := 1^T(P-I+11^T)^{-1}$. Now $pP = p$ iff $\langle p \rangle = \langle \langle P \rangle \rangle$, and $\langle p \rangle - I \subset \mathfrak{lie} ( \langle p \rangle ) \subset \mathfrak{lie}(STO(n))$.

For $(j,k) \in [n] \times [n-1]$, write 
\begin{equation}
\label{eq:STObasis}
e_{(j,k)} := e_j(e_k^T-e_n^T), 
\end{equation}
where $\{e_j\}_{j \in [n]}$ is the standard basis of $\mathbb{R}^n$. Now the matrices $\{e_{(j,k)}\}_{(j,k) \in [n] \times [n-1]}$ form a basis of $\mathfrak{lie}(STO(n))$ and 
\begin{align}
\label{eq:STOproduct}
e_{(j,k)} e_{(\ell, m)} & = e_j(e_k^T-e_n^T) e_\ell(e_m^T-e_n^T) \nonumber \\
& = (\delta_{k\ell}-\delta_{n\ell}) e_{(j,m)},
\end{align}
so upon considering $j \leftrightarrow \ell, k \leftrightarrow m$ we have that
\begin{equation}
\label{eq:STOcommutator}
\left[ e_{(j,k)}, e_{(\ell,m)} \right] = (\delta_{k\ell} - \delta_{n\ell})e_{(j,m)} - (\delta_{mj} - \delta_{nj})e_{(\ell,k)}.
\end{equation}
This basis has the obvious advantage of computationally trivial decompositions.


For $j,k \in [n-1]$, we set $r_j := p_j/p_n$ and
\begin{align}
\label{eq:GENbasis}
e_{(j,k)}^{(p)} := & \ e_{(j,k)} - r_j e_{(n,k)} \nonumber \\ = & \left ( e_j - r_j e_n \right ) (e_k^T-e_n^T).
\end{align}

\begin{svgraybox}
If $p_j \equiv \mathcal{L}_j/Z$, say with $\mathcal{L}_j \equiv \exp(-\beta E_j)$, then $r_j = \mathcal{L}_j/\mathcal{L}_n$ does not depend on $Z$ at all. This is the basic reason why MCMC methods can avoid grappling with normalization factors such as partition functions, and in turn why MCMC methods are so useful. 
\end{svgraybox}

For future reference, write $r := (r_1,\dots,r_{n-1},1)$ and $r^- := (r_1,\dots,r_{n-1})$.

\begin{lemma}
\label{lem:power}
For $i \in \mathbb{N}$,
\begin{equation}
\left (e_{(j,k)}^{(p)} \right )^i = \begin{cases} I, & i = 0; \\ \left ( \delta_{jk} + r_j \right )^{i-1} e_{(j,k)}^{(p)}, & i > 0. \end{cases}
\end{equation}
\end{lemma}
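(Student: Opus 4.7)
The plan is to exploit the fact that $e_{(j,k)}^{(p)}$ is a rank-one matrix, so powers reduce to a scalar factor times the matrix itself. Specifically, from \eqref{eq:GENbasis} we can write $e_{(j,k)}^{(p)} = u v^T$ with
\[
u := e_j - r_j e_n \in \mathbb{R}^n, \qquad v := e_k - e_n \in \mathbb{R}^n.
\]
For any outer product, $(uv^T)^i = (v^T u)^{i-1}\, u v^T$ whenever $i \geq 1$, by an immediate induction on $i$ that collapses interior factors $v^T u$ to scalars. Thus the entire proof reduces to computing the scalar $v^T u$.

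First I would compute
\[
v^T u = (e_k^T - e_n^T)(e_j - r_j e_n) = \delta_{kj} - r_j \delta_{kn} - \delta_{nj} + r_j \delta_{nn}.
\]
Since the lemma is stated for the basis elements indexed by $j,k \in [n-1]$ (as in \eqref{eq:GENbasis}), the hypotheses give $\delta_{kn} = 0$ and $\delta_{nj} = 0$, while $\delta_{nn} = 1$. Hence $v^T u = \delta_{jk} + r_j$.

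Substituting this scalar back into $(uv^T)^i = (v^T u)^{i-1}\, uv^T$ yields the claimed identity for $i > 0$, and the case $i = 0$ is the usual convention $M^0 = I$. The only subtlety to flag is the implicit restriction of indices to $[n-1]$ coming from \eqref{eq:GENbasis}; without it the scalar $v^T u$ would carry extra terms. Everything else is a one-line rank-one calculation, so I do not expect any real obstacle.
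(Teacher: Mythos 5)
Your proof is correct and is essentially the paper's argument: both rest on the factorization $e_{(j,k)}^{(p)} = (e_j - r_j e_n)(e_k^T - e_n^T)$ from \eqref{eq:GENbasis}, collapse the inner product to the scalar $\delta_{jk} + r_j$ using $j,k \ne n$, and induct. The only cosmetic difference is that the paper establishes the general two-element product rule \eqref{eq:GENproduct} first (which it reuses later) and then specializes, whereas you package the same computation as the standard rank-one identity $(uv^T)^i = (v^T u)^{i-1} uv^T$.
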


\begin{proof}
Using the rightmost expression in \eqref{eq:GENbasis} and using $j,k,\ell,m \ne n$ to simplify the product of the innermost two factors, we obtain  
\begin{equation}
\label{eq:GENproduct}
e_{(j,k)}^{(p)} e_{(\ell,m)}^{(p)} = \left ( \delta_{k \ell} + r_\ell \right ) e_{(j,m)}^{(p)}.
\end{equation}
Taking $j = \ell$ and $k = m$ establishes the result for $i \le 2$. The general case follows by induction on $i$. 
\end{proof}

\begin{theorem}
\label{thm:GENbasis}
The $e_{(j,k)}^{(p)}$ form a basis for $\mathfrak{lie} ( \langle p \rangle )$ and
\begin{equation}
\label{eq:GENcommutator}
\left[ e_{(j,k)}^{(p)}, e_{(\ell,m)}^{(p)} \right] = \left ( \delta_{k\ell} + r_\ell \right ) e_{(j,m)}^{(p)} - \left ( \delta_{mj} + r_j \right ) e_{(\ell,k)}^{(p)}.
\end{equation}
\end{theorem}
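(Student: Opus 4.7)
The plan is to establish the theorem in three stages: membership in the Lie subalgebra, linear independence, and the commutator formula.

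First, I would verify that each $e_{(j,k)}^{(p)}$ actually lies in $\mathfrak{lie}(\langle p \rangle)$. Since $\langle p \rangle$ is cut out inside $GL(n,\mathbb{R})$ by the conditions $P1 = 1$ and $pP = p$, its Lie algebra consists of matrices $X$ with $X1 = 0$ and $pX = 0$. Using the factored form $e_{(j,k)}^{(p)} = (e_j - r_j e_n)(e_k^T - e_n^T)$ from \eqref{eq:GENbasis}, the right factor kills $1$ because $(e_k^T - e_n^T)1 = 1 - 1 = 0$, while $p$ kills the left factor because $p(e_j - r_j e_n) = p_j - (p_j/p_n) p_n = 0$. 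So $e_{(j,k)}^{(p)} \in \mathfrak{lie}(\langle p \rangle)$ for all $j,k \in [n-1]$.

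Next I would check linear independence. Reading off matrix entries, $e_{(j,k)}^{(p)}$ has a $1$ in position $(j,k)$ (for $j,k \in [n-1]$) and zeros in all other positions with row index $< n$ except $(j,n)$. Thus in any linear combination $\sum_{j,k \in [n-1]} c_{jk}\, e_{(j,k)}^{(p)} = 0$, inspecting the $(j,k)$ entry with $j,k < n$ immediately forces $c_{jk} = 0$. Since we have $(n-1)^2$ linearly independent vectors in a space of dimension $(n-1)^2$ (as noted immediately after \eqref{eq:GEN}), they form a basis for $\mathfrak{lie}(\langle p \rangle)$.

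Finally, the commutator identity \eqref{eq:GENcommutator} follows directly from the product formula \eqref{eq:GENproduct} already established in the proof of Lemma \ref{lem:power}: swapping $(j,k) \leftrightarrow (\ell,m)$ in
\[
e_{(j,k)}^{(p)} e_{(\ell,m)}^{(p)} = (\delta_{k\ell} + r_\ell)\, e_{(j,m)}^{(p)}
\]
gives $e_{(\ell,m)}^{(p)} e_{(j,k)}^{(p)} = (\delta_{mj} + r_j)\, e_{(\ell,k)}^{(p)}$, and subtracting yields the claimed bracket.

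There is no serious obstacle here: the factored expression in \eqref{eq:GENbasis} does all the real work, making the membership check a one-line computation for each defining condition; the entrywise structure of $e_{(j,k)}^{(p)}$ makes independence immediate; and the commutator is just antisymmetrization of \eqref{eq:GENproduct}. The only mildly subtle point is tying the count $(n-1)^2$ to the stated dimension of $\langle p \rangle$, but this is given in the paragraph introducing $\langle p \rangle$, so we may invoke it without further argument.
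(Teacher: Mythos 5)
Your proof is correct, but it reaches membership in $\mathfrak{lie}(\langle p \rangle)$ by a different route than the paper. The paper's proof verifies $p\, e_{(j,k)}^{(p)} = 0$ directly, declares linear independence and the bracket relations obvious, and then establishes membership by explicitly exponentiating: using Lemma \ref{lem:power} it computes $\exp t\, e_{(j,k)}^{(p)} = I + \frac{e^{t(\delta_{jk}+r_j)}-1}{\delta_{jk}+r_j}\, e_{(j,k)}^{(p)}$ as in \eqref{eq:GENexp} and observes this lies in $\langle p \rangle$ for all $t$ --- i.e., it works from the definition of the Lie algebra of a matrix group as the set of generators of one-parameter subgroups. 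You instead linearize the defining conditions of the stabilizer, characterizing $\mathfrak{lie}(\langle p \rangle)$ as $\{X : X1 = 0,\ pX = 0\}$ and checking both annihilation conditions on the factored form \eqref{eq:GENbasis}; this is more elementary and sidesteps the exponential entirely (to be fully airtight you should note the easy converse direction, that $X1=0$ and $pX=0$ force $\exp(tX) \in \langle p \rangle$ via the power series, but that is a one-liner). Your route also has the side benefit that the same characterization independently yields the dimension count $(n-1)^2$ (the conditions $X1=0$ and $pX=0$ impose $2n-1$ independent linear constraints), whereas you currently import that dimension from the remark after \eqref{eq:GEN}, as the paper implicitly does too. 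What the paper's exponentiation buys, and your version forgoes, is the closed-form \eqref{eq:GENexp} itself, which the paper immediately reuses --- it defines $f_{(j,k)}^{(p)}$ in \eqref{eq:fjkp} and drives \eqref{eq:explicitsemi1} and the Barker/Metropolis derivations --- so the exponential computation does double duty there. Your treatment of independence (reading off the upper-left $(n-1)\times(n-1)$ block) and of the bracket (antisymmetrizing \eqref{eq:GENproduct} under $(j,k) \leftrightarrow (\ell,m)$) spells out exactly the steps the paper labels obvious, and both are correct.
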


\begin{proof}
Note that
$p e_{(j,k)}^{(p)} = \left ( p_j - r_j p_n \right ) \left ( e_k^T-e_n^T \right ) \equiv 0$. 
Furthermore, linear independence and the commutation relations are both obvious, so we need only show that $\exp t e_{(j,k)}^{(p)} \in \langle p \rangle$ for all $t \in \mathbb{R}$. By Lemma \ref{lem:power}, 
\begin{align}
\label{eq:GENexp}
\exp t e_{(j,k)}^{(p)} & = I + e_{(j,k)}^{(p)} \sum_{i = 1}^\infty \frac{t^i \left ( \delta_{jk} + r_j \right )^{i-1}}{i!} \nonumber \\
& = I + \frac{e^{t(\delta_{jk} + r_j)} - 1}{\delta_{jk} + r_j} e_{(j,k)}^{(p)}. 
\end{align}
\end{proof}

For later convenience, we write 
\begin{equation}
\label{eq:fjkp}
f_{(j,k)}^{(p)}(t) := \frac{e^{-t(\delta_{jk} + r_j)} - 1}{\delta_{jk} + r_j}.
\end{equation}

\subsection{\label{sec:Monoid}The positive monoid of a measure}


Elements of $STO(n)$ need not be \emph{bona fide} stochastic matrices because they can have negative entries; on the other hand, stochastic matrices need not be invertible. We therefore consider the \emph{monoids} (i.e., semigroups with identity; compare \cite{hilgert1993lie})
\begin{equation}
STO^+(n) := \{P \in M(n,\mathbb{R}): P1 = 1 \text{ and } P \ge 0 \},
\end{equation}
where $P \ge 0$ is interpreted per entry, and
\begin{equation}
\langle p \rangle^+ := \{P \in STO^+(n) : pP = p \}.
\end{equation}
Note that $STO^+(n) \not \subset STO(n)$ and $\langle p \rangle^+ \not \subset \langle p \rangle$, since the left hand sides contain noninvertible elements. Also, $STO^+(n)$ and $\langle p \rangle^+$ are bounded convex polytopes.

\begin{lemma}
\label{lem:pos}
If $t_j \ge 0$, then $\exp \left ( -\sum_j t_j e_{(j,j)}^{(p)} \right ) \in \langle p \rangle^+$.
\end{lemma}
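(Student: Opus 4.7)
The plan is to first dispose of the single-index case $\exp(-t\, e_{(j,j)}^{(p)})$ by direct computation, then handle the full sum by a Lie-Trotter argument that sidesteps the noncommutativity.

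For a single term, specializing \eqref{eq:GENexp} gives
\[
\exp\bigl(-t\, e_{(j,j)}^{(p)}\bigr) = I + f_{(j,j)}^{(p)}(t)\, e_{(j,j)}^{(p)},
\]
with $f_{(j,j)}^{(p)}(t) = \tfrac{e^{-t(1+r_j)}-1}{1+r_j}$. For $t \ge 0$ this coefficient lies in $[-\tfrac{1}{1+r_j},\,0]$. I would then read off the at most four nonzero entries of $e_{(j,j)}^{(p)} = (e_j - r_j e_n)(e_j^T - e_n^T)$ (namely $+1$ at $(j,j)$, $-1$ at $(j,n)$, $-r_j$ at $(n,j)$, $+r_j$ at $(n,n)$) and check nonnegativity entry by entry: the bound on $f_{(j,j)}^{(p)}(t)$ is calibrated so that $1 + f_{(j,j)}^{(p)}(t)$ and $1 + r_j f_{(j,j)}^{(p)}(t)$ are both $\ge 0$, while the off-diagonal entries $-f_{(j,j)}^{(p)}(t)$ and $-r_j f_{(j,j)}^{(p)}(t)$ are trivially $\ge 0$. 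Combined with Theorem \ref{thm:GENbasis} (which guarantees $\exp(-t\, e_{(j,j)}^{(p)}) \in \langle p \rangle$), this places every single-index exponential in $\langle p \rangle^+$.

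For the full sum I would invoke the Lie-Trotter product formula,
\[
\exp\Bigl(-\sum_j t_j\, e_{(j,j)}^{(p)}\Bigr) = \lim_{N\to\infty}\Bigl(\prod_j \exp\bigl(-\tfrac{t_j}{N}\, e_{(j,j)}^{(p)}\bigr)\Bigr)^{\!N}.
\]
Each factor lies in $\langle p \rangle^+$ by the previous step. Since $\langle p \rangle^+$ is a monoid it is closed under finite products, and since it is cut out of $M(n,\mathbb{R})$ by the closed conditions $P1 = 1$, $pP = p$, $P \ge 0$ it is also topologically closed, so the limit lies in $\langle p \rangle^+$ as well.

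The main obstacle is precisely the noncommutativity flagged by \eqref{eq:GENcommutator}: for $j \ne k$, $[e_{(j,j)}^{(p)}, e_{(k,k)}^{(p)}] = r_k e_{(j,k)}^{(p)} - r_j e_{(k,j)}^{(p)} \ne 0$, so the naive identity $\exp(A+B) = \exp(A)\exp(B)$ fails and we cannot simply multiply single-index exponentials. Trotterization plus the closure of $\langle p \rangle^+$ bypasses this cleanly without having to track cross-terms in a Baker--Campbell--Hausdorff expansion.
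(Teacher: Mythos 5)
Your proof is correct, but it takes a genuinely different route from the paper's. The paper's proof is a one-line observation: by \eqref{eq:GENbasis}, the matrix $Q := -\sum_j t_j e_{(j,j)}^{(p)}$ has nonpositive diagonal entries and nonnegative off-diagonal entries (and zero row sums, since $e_{(j,j)}^{(p)}1 = 0$), i.e.\ it is the generator matrix of a continuous-time Markov process, so $\exp(Q)$ is automatically a stochastic matrix; $p$-invariance is immediate because each $e_{(j,j)}^{(p)}$ annihilates $p$ on the left (as shown in the proof of Theorem \ref{thm:GENbasis}). You instead settle the single-index case explicitly from \eqref{eq:GENexp} by checking the four nonzero entries of $e_{(j,j)}^{(p)}$ against the bound $f_{(j,j)}^{(p)}(t) \in \left(-\tfrac{1}{1+r_j},\,0\right]$ for $t \ge 0$, and then pass to the noncommuting sum via the Lie--Trotter product formula together with the two structural facts that $\langle p \rangle^+$ is a monoid (closed under finite products) and is topologically closed, being cut out by the closed conditions $P1=1$, $pP=p$, $P \ge 0$. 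All of these steps check out, including your commutator computation $[e_{(j,j)}^{(p)}, e_{(k,k)}^{(p)}] = r_k e_{(j,k)}^{(p)} - r_j e_{(k,j)}^{(p)} \ne 0$ for $j \ne k$, which correctly identifies why Trotterization (or some substitute) is needed on your route. Comparing the two: the paper's generator argument is shorter, avoids limits entirely, and disposes of the whole cone $\left\{\sum_j t_j e_{(j,j)}^{(p)} : t_j \ge 0\right\}$ in one stroke by citing the standard fact that exponentials of generator matrices are stochastic; but that standard fact is itself usually proved by a Trotter-type limit such as $\exp(Q) = \lim_N (I + Q/N)^N$, so your argument in effect re-derives it in this special case. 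What your version buys is full explicitness (no black-box appeal to Markov process theory) and a nice exercise of exactly the monoid and polytope structure of $\langle p \rangle^+$ that the paper introduces just before the lemma, at the cost of being considerably longer.
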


\begin{proof}
By hypothesis and \eqref{eq:GENbasis}, $-\sum_j t_j e_{(j,j)}^{(p)}$ has nonpositive diagonal entries and nonnegative off-diagonal entries; the result follows by regarding the sum as the generator matrix of a continuous-time Markov process.
\end{proof}

In particular, for $t \ge 0$ we have that 
\begin{equation}
\label{eq:explicitsemi1}
\exp \left (-t e_{(j,j)}^{(p)} \right ) = I + f_{(j,j)}^{(p)}(t) \cdot e_{(j,j)}^{(p)} \in \langle p \rangle^+,
\end{equation}
where $f_{(j,j)}^{(p)}$ is as in \eqref{eq:fjkp}. Unfortunately, aside from \eqref{eq:explicitsemi1}, Lemma \ref{lem:pos} does not give a convenient way to construct explicit elements of $\langle p \rangle^+$ in closed form. This situation is an analogue of the highly nontrivial quantum compilation problem (see \cite{dawson2006solovaykitaev}). 

Indeed, even if the sum in the lemma's statement has only two terms, we are immediately confronted with the formidable Zassenhaus formula (see \cite{casas2012zassenhaus}):
\begin{equation}
\label{eq:zassenhaus}
\exp(t(X+Y)) = \exp(tX) \exp(tY) \prod_{i = 2}^\infty \exp(t^i C_i), \nonumber
\end{equation}
where 
\begin{align}
C_2 = & \ -\frac{1}{2} [X,Y]; \nonumber \\
C_3 = & \ \frac{1}{3} [Y,[X,Y]] + \frac{1}{6} [X,[X,Y]]; \nonumber \\
C_4 = & \ -\frac{1}{8} \left ( [Y,[Y,[X,Y]]] + [Y,[X,[X,Y]]] \right ) -\frac{1}{24} [X,[X,[X,Y]]], \nonumber
\end{align}
and higher order terms have increasingly intricate structure. 
While a computer algebra system can evaluate $\exp \left ( -t_{(j,k)} e_{(j,k)}^{(p)} - t_{(\ell,m)} e_{(\ell,m)}^{(p)} \right )$ in closed form, the results involve many pages of arithmetic for the case corresponding to Lemma \ref{lem:pos}, and the other possibilities all yield some manifestly negative entries.


\subsection{\label{sec:BarkerMetropolis}Barker and Metropolis samplers}

Although Lemma \ref{lem:pos} offers only a weak foothold for explicit analytical constructions, we can still use \eqref{eq:explicitsemi1} to produce a MCMC algorithm that is parametrized by $t$. 

\begin{svgraybox}
Here and throughout our discussion of MCMC algorithms, we use a simple trick of relabeling the current state as $n$ and then reversing the relabeling, so that the transition $n \rightarrow j$ becomes generic. 
\end{svgraybox}

For $P = \exp \left (-t e_{(j,j)}^{(p)} \right )$, we have $P_{jj} = 1+f_{(j,j)}^{(p)}(t)$, $P_{jn} = -f_{(j,j)}^{(p)}(t)$, $P_{nj} = -f_{(j,j)}^{(p)}(t)r_j$, and $P_{nn} = 1+f_{(j,j)}^{(p)}(t)r_j$. In particular, 
\begin{equation}
\frac{P_{jn}}{P_{nj}} = \frac{1}{r_j} = \frac{p_n}{p_j}. \nonumber
\end{equation}
That is, detailed balance automatically holds.

The value of $t$ that is optimal for convergence is $t = \infty$, since this maximizes the off-diagonal terms. With this parameter choice, we obtain $P_{jj} = \frac{r_j}{1+r_j}$, $P_{jn} = \frac{1}{1+r_j}$, $P_{nj} = \frac{r_j}{1+r_j}$, and $P_{nn} = \frac{1}{1+r_j}$. The corresponding MCMC algorithm is the so-called Barker sampler. 

However, in light of \eqref{eq:explicitsemi1}, we can almost trivially improve on the Barker sampler. We have that $I - \tau e_{(j,j)}^{(p)} \in \langle p \rangle^+$ iff $0 \le \tau \le \min(1,r_j^{-1})$. But 
\begin{equation}
\label{eq:metropolis}
\left ( I - \min(1,r_j^{-1}) \cdot e_{(j,j)}^{(p)} \right )_{nj} = \min(1,r_j)
\end{equation}
is precisely the Metropolis acceptance ratio. In other words:

\begin{svgraybox}
We have derived the Barker and Metropolis samplers from basic considerations of symmetry and (in the latter case) optimality. 
\end{svgraybox}

\begin{algorithm}[tb]
   \caption{Metropolis}
   \label{alg:metropolis}
\begin{algorithmic}
   \STATE {\bfseries Input:} Runtime $T$ and oracle for $r$
   \STATE Initialize $t=0$ and $X_0$
   \REPEAT
   \STATE Relabel states so that $X_t = n$
   \STATE Propose $j \in [n-1]$
   \STATE Accept $X_{t+1} = j$ with probability \eqref{eq:metropolis}
   \STATE Undo relabeling; set $t = t+1$
   \UNTIL{$t = T$}
   \STATE {\bfseries Output:} $\{X_t\}_{t=0}^T \sim p^{\times (T+1)}$ (approximately)
   \end{algorithmic}
\end{algorithm}

Note that the mechanism for proposing the state $j$ is neither specified nor constrained by our construction. That is, our approach separates concerns between proposal and acceptance mechanisms, and focuses only on the latter. However, a good proposal mechanism is of paramount importance for MCMC algorithms. These observations will continue to apply throughout our later discussion, though in \S \ref{sec:Behavior} we select the elements of proposal sets uniformly at random without replacement for illustrative purposes.

\subsection{\label{sec:Algebra}Some algebra}

The Barker and Metropolis samplers are among the very ``simplest'' MCMC algorithms in that \eqref{eq:explicitsemi1} is among the very sparsest nontrivial matrices in $\langle p \rangle^+$. But if we consider possible transitions to more than one state, we can trade off sparsity for both faster convergence and increased algorithm complexity. The degenerate limiting case is the matrix $1p$, and the practical starting case is the Barker and Metropolis samplers. A central question for interpolating between these cases is how (or if) we can systematically construct denser elements of $\langle p \rangle^+$ than \eqref{eq:explicitsemi1}.

To answer this question, we first generalize Lemma \ref{lem:power}. For $\mathcal{J} := \{j_1,\dots,j_d\} \subseteq [n-1]$ and a matrix $\alpha \in M_{n-1,n-1}$, define $\alpha_{(\mathcal{J})} \in M_{d,d}$ by $(\alpha_{(\mathcal{J})})_{uv} := \alpha_{j_u j_v}$, $\alpha_{(\mathcal{J})}^{(p)} := \sum_{u,v = 1}^d \alpha_{j_u j_v} e_{(j_u,j_v)}^{(p)} \in \mathfrak{lie}(\langle p \rangle)$, and $r_{(\mathcal{J})} := (r_{j_1}, \dots, r_{j_d})$.
\begin{lemma}
\label{lem:abcX} Let $\mathcal{J} := \{j_1,\dots,j_d\} \subseteq [n-1]$. If $\gamma_{(\mathcal{J})}^{(p)} = \alpha_{(\mathcal{J})}^{(p)} \beta_{(\mathcal{J})}^{(p)}$, 
then
\begin{equation}
\label{eq:alphabeta}
\gamma_{(\mathcal{J})} = \alpha_{(\mathcal{J})} (I+1r_{(\mathcal{J})}) \beta_{(\mathcal{J})}.
\end{equation}
\end{lemma}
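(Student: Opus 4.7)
The plan is to expand both sides directly using the definition of $(\cdot)_{(\mathcal{J})}^{(p)}$ and then appeal to the product formula \eqref{eq:GENproduct} together with the linear independence of the $e_{(j,k)}^{(p)}$ furnished by Theorem \ref{thm:GENbasis}.

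First I would write out
\begin{equation}
\alpha_{(\mathcal{J})}^{(p)} \beta_{(\mathcal{J})}^{(p)} = \sum_{u,v,w,x = 1}^{d} \alpha_{j_u j_v} \beta_{j_w j_x} \, e_{(j_u,j_v)}^{(p)} e_{(j_w,j_x)}^{(p)}, \nonumber
\end{equation}
and apply \eqref{eq:GENproduct} to the inner product, which produces the scalar factor $\delta_{j_v j_w} + r_{j_w}$ times $e_{(j_u,j_x)}^{(p)}$. Since $j_u, j_x \in [n-1]$, each $e_{(j_u,j_x)}^{(p)}$ belongs to the basis of $\mathfrak{lie}(\langle p \rangle)$ from Theorem \ref{thm:GENbasis}, so these basis vectors are linearly independent and we may equate the coefficient of $e_{(j_u,j_x)}^{(p)}$ on each side of $\gamma_{(\mathcal{J})}^{(p)} = \alpha_{(\mathcal{J})}^{(p)} \beta_{(\mathcal{J})}^{(p)}$.

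Next I would split the resulting coefficient
\begin{equation}
\gamma_{j_u j_x} = \sum_{v,w} \alpha_{j_u j_v} (\delta_{j_v j_w} + r_{j_w}) \beta_{j_w j_x} \nonumber
\end{equation}
into its $\delta$ and $r$ pieces. The $\delta$-piece collapses to $(\alpha_{(\mathcal{J})} \beta_{(\mathcal{J})})_{ux}$, while the $r$-piece factors as $\bigl(\sum_v \alpha_{j_u j_v}\bigr)\bigl(\sum_w r_{j_w} \beta_{j_w j_x}\bigr) = (\alpha_{(\mathcal{J})} 1)_u (r_{(\mathcal{J})} \beta_{(\mathcal{J})})_x$, which is precisely the $(u,x)$-entry of $\alpha_{(\mathcal{J})} \, 1 r_{(\mathcal{J})} \, \beta_{(\mathcal{J})}$. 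Adding these contributions yields \eqref{eq:alphabeta}.

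I do not expect any real obstacle: the substantive content is already packaged in \eqref{eq:GENproduct}, and the rest is a rank-one rewrite of the $r$-dependent cross-term as $1 r_{(\mathcal{J})}$. The only point requiring minor care is ensuring that the equating of coefficients is legitimate, which is justified by noting that the index set $\mathcal{J}$ lies entirely in $[n-1]$ so no collision with the special index $n$ occurs in the basis.
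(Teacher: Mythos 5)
Your proposal is correct and follows essentially the same route as the paper's proof: expand $\alpha_{(\mathcal{J})}^{(p)} \beta_{(\mathcal{J})}^{(p)}$ as a quadruple sum, apply \eqref{eq:GENproduct} to get the factor $\delta_{j_v j_w} + r_{j_w}$, and then recognize the coefficient of $e_{(j_u,j_x)}^{(p)}$ as the $(u,x)$-entry of $\alpha_{(\mathcal{J})} (I+1r_{(\mathcal{J})}) \beta_{(\mathcal{J})}$ (what the paper calls ``bookkeeping''). Your explicit appeal to the linear independence of the basis from Theorem \ref{thm:GENbasis} to equate coefficients is left implicit in the paper but is exactly the right justification.
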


\begin{proof} 
\begin{align}
\alpha_{(\mathcal{J})}^{(p)} \beta_{(\mathcal{J})}^{(p)} & = \sum_{u,v,w,x} \alpha_{j_u j_v} \beta_{j_w j_x} e_{(j_u,j_v)}^{(p)} e_{(j_w,j_x)}^{(p)}  \nonumber \\
& = \sum_{u,v,w,x} \alpha_{j_u j_v} \left ( \delta_{j_v j_w} + r_{j_w} \right ) \beta_{j_w j_x} e_{(j_u,j_x)}^{(p)} \nonumber \\
& = \sum_{u,x} \left ( \alpha_{(\mathcal{J})} (I+1r_{(\mathcal{J})}) \beta_{(\mathcal{J})} \right )_{ux} e_{(j_u,j_x)}^{(p)}. \nonumber 
\end{align}
where the second equality follows from \eqref{eq:GENproduct} and the third from bookkeeping.
\end{proof}

The heavy notation introduced for Lemma \ref{lem:abcX} is genuinely worthwhile: the case $d = 2$ takes a page to write out by hand without it. More importantly, we can readily construct an analytically convenient matrix in $\mathfrak{lie} ( \langle p \rangle )$ using Lemma \ref{lem:abcX}. 

\begin{theorem}
\label{thm:dxd}
Let $\mathcal{J} := \{j_1,\dots,j_d\} \subseteq [n-1]$, $\omega \in \mathbb{R}$ and
\begin{equation}
\label{eq:dxdA}
A_{(\mathcal{J})}^{(p;\omega)} := \omega \sum_{u,v} \left ( \delta_{j_u j_v} - \frac{1}{1+r_{(\mathcal{J})}1} r_{j_v} \right ) e_{(j_u,j_v)}^{(p)}.
\end{equation}
Then
\begin{equation}
\label{eq:dxd}
\exp tA_{(\mathcal{J})}^{(p;\omega)} = I + \frac{e^{\omega t}-1}{\omega}A_{(\mathcal{J})}^{(p;\omega)}.
\end{equation}
Moreover, $\exp \left ( -tA_{(\mathcal{J})}^{(p;\omega)} \right ) \in \langle p \rangle^+ \cap GL(n, \mathbb{R})$ if $t \ge 0$. In particular, the \emph{Barker matrix}
\begin{equation}
\label{eq:BpJ}
\mathcal{B}_{(\mathcal{J})}^{(p)} := I - A_{(\mathcal{J})}^{(p;1)}
\end{equation}
is in $\langle p \rangle^+$.
\end{theorem}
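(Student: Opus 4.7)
The plan is to reduce everything to an algebraic identity about the $d \times d$ matrix $\alpha_{(\mathcal{J})} := \omega\bigl(I - \frac{1}{1+r_{(\mathcal{J})}1} \, 1 r_{(\mathcal{J})}\bigr)$, whose lift through $\sum_{u,v}(\alpha_{(\mathcal{J})})_{uv} e_{(j_u,j_v)}^{(p)}$ is exactly $A_{(\mathcal{J})}^{(p;\omega)}$. The main step is to show $\bigl(A_{(\mathcal{J})}^{(p;\omega)}\bigr)^2 = \omega A_{(\mathcal{J})}^{(p;\omega)}$, from which $\bigl(A_{(\mathcal{J})}^{(p;\omega)}\bigr)^i = \omega^{i-1} A_{(\mathcal{J})}^{(p;\omega)}$ for all $i \ge 1$ by an easy induction, and then the power series defining $\exp(tA_{(\mathcal{J})}^{(p;\omega)})$ collapses termwise to $I + \frac{1}{\omega}(e^{\omega t}-1) A_{(\mathcal{J})}^{(p;\omega)}$, giving \eqref{eq:dxd}.

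To establish $(A_{(\mathcal{J})}^{(p;\omega)})^2 = \omega A_{(\mathcal{J})}^{(p;\omega)}$, I would invoke Lemma~\ref{lem:abcX} with $\alpha = \beta = \alpha_{(\mathcal{J})}$; this reduces the claim to verifying $\alpha_{(\mathcal{J})}(I + 1 r_{(\mathcal{J})}) \alpha_{(\mathcal{J})} = \omega \alpha_{(\mathcal{J})}$. Expanding the product $(I - c \cdot 1 r_{(\mathcal{J})})(I + 1 r_{(\mathcal{J})})$ with $c := 1/(1 + r_{(\mathcal{J})}1)$ and using the rank-one identity $1 r_{(\mathcal{J})} \cdot 1 r_{(\mathcal{J})} = (r_{(\mathcal{J})}1) \cdot 1 r_{(\mathcal{J})}$, the cross terms collapse via $1 - c(1 + r_{(\mathcal{J})}1) = 0$, yielding in fact the stronger identity $\alpha_{(\mathcal{J})}(I + 1 r_{(\mathcal{J})}) = \omega I$. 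Right-multiplication by $\alpha_{(\mathcal{J})}$ then gives what we need. This rank-one-perturbation bookkeeping is the one mildly delicate computation, but it is short.

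For the positivity claim $\exp(-t A_{(\mathcal{J})}^{(p;\omega)}) \in \langle p \rangle^+$, I would interpret $-A_{(\mathcal{J})}^{(p;\omega)}$ as the generator of a continuous-time Markov process, following the strategy of Lemma~\ref{lem:pos}. Using the entry formulas for $e_{(j,k)}^{(p)}$ (namely $+1$ at $(j,k)$, $-1$ at $(j,n)$, $-r_j$ at $(n,k)$, and $+r_j$ at $(n,n)$), one computes that the off-diagonal entries of $A_{(\mathcal{J})}^{(p;\omega)}$ equal $-\omega c\, r_{j_v}$ on the block indexed by $\mathcal{J}$, $-\omega c$ on the column $n$, and $-\omega c\, r_{j_v}$ on the row $n$; the diagonal entries $\omega(1 - c r_{j_u})$ and $\omega c\, r_{(\mathcal{J})}1$ are nonnegative because $r_{j_u} \le 1 + r_{(\mathcal{J})}1 = c^{-1}$. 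Hence for $\omega, t \ge 0$, $-tA_{(\mathcal{J})}^{(p;\omega)}$ is a genuine rate matrix (nonnegative off-diagonal, row sums zero since $A_{(\mathcal{J})}^{(p;\omega)} \in \mathfrak{lie}(\langle p \rangle)$), so its exponential is stochastic and invariant under $p$; invertibility is automatic for any matrix exponential.

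Finally, for $\mathcal{B}_{(\mathcal{J})}^{(p)} = I - A_{(\mathcal{J})}^{(p;1)}$, I would pass to the limit $t \to \infty$ in \eqref{eq:dxd} with $\omega = 1$ replaced by $-1$ (i.e., evaluating $\exp(-t A_{(\mathcal{J})}^{(p;1)}) = I + (e^{-t}-1) A_{(\mathcal{J})}^{(p;1)}$), which tends to $I - A_{(\mathcal{J})}^{(p;1)} = \mathcal{B}_{(\mathcal{J})}^{(p)}$. Since each $\exp(-t A_{(\mathcal{J})}^{(p;1)})$ lies in the closed set $\langle p \rangle^+$ by the preceding step, the limit does too. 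The main obstacle is really just the algebraic identity in the second paragraph; once it is in hand, the exponential formula, the positivity, and the Barker corollary all fall out essentially for free.
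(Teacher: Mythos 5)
Your proof is correct and takes essentially the same route as the paper's: your direct expansion establishing $\alpha_{(\mathcal{J})}(I+1r_{(\mathcal{J})}) = \omega I$ is exactly the Sherman--Morrison step the paper cites, Lemma \ref{lem:abcX} then gives $\bigl(A_{(\mathcal{J})}^{(p;\omega)}\bigr)^{i+1} = \omega^i A_{(\mathcal{J})}^{(p;\omega)}$ and the termwise collapse of the exponential series just as in the paper, while your generator argument in the spirit of Lemma \ref{lem:pos} and the $t \to \infty$ limit inside the closed polytope $\langle p \rangle^+$ supply precisely the details the paper compresses into ``along lines similar to the proof of Theorem \ref{thm:GENbasis}.'' Your explicit restriction to $\omega \ge 0$ in the positivity step is a small gain in precision, since the theorem's statement (which allows $\omega \in \mathbb{R}$) leaves that hypothesis implicit.
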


\begin{proof}
The Sherman-Morrison formula \cite{horn2012matrix} gives that
\begin{equation}
\omega (I+1r_{(\mathcal{J})})^{-1} = \omega \left ( I - \frac{1}{1+r_{(\mathcal{J})}1}1r_{(\mathcal{J})} \right ); \nonumber
\end{equation}
the elements of this matrix are exactly the coefficients in \eqref{eq:dxdA}. Using the notation introduced for the statement of Lemma \ref{lem:abcX}, we can rewrite \eqref{eq:dxdA} as $A_{(\mathcal{J})}^{(p;\omega)} = \left ( \omega (I+1r_{(\mathcal{J})})^{-1} \right )_{(\mathcal{J})}^{(p)}$,
and invoking Lemma \ref{lem:abcX} itself yields $\left ( A_{(\mathcal{J})}^{(p;\omega)} \right )^{i+1} = \omega^i A_{(\mathcal{J})}^{(p;\omega)}$ for $i \in \mathbb{N}$. The result now follows along lines similar to the proof of Theorem \ref{thm:GENbasis}.
\end{proof}

Let $\Delta$ denote the map that sends a matrix to the (column) vector of its diagonal entries, and indicate the boundary of a subset of a topological space using $\partial$.
\begin{lemma}
\label{lem:metropolis}
The \emph{Metropolis matrix}
\begin{equation}
\label{eq:MpJ}
\mathcal{M}_{(\mathcal{J})}^{(p)} := I - \frac{1}{\max \Delta \left ( A_{(\mathcal{J})}^{(p;1)} \right )}A_{(\mathcal{J})}^{(p;1)}
\end{equation} 
is in $\partial \langle p \rangle^+$.
\end{lemma}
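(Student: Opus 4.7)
The plan is a direct entry-by-entry calculation. I would unpack $A := A_{(\mathcal{J})}^{(p;1)}$ from \eqref{eq:dxdA} and \eqref{eq:GENbasis}, verify that the normalization $c := \max \Delta(A)$ makes $I - A/c$ entrywise nonnegative, and note that at least one diagonal entry vanishes by construction.

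First, the rank-one factorization $e_{(j,k)}^{(p)} = (e_j - r_j e_n)(e_k^T - e_n^T)$ shows that each basis matrix has support only at the four positions $(j,k), (j,n), (n,k), (n,n)$, with respective entries $1, -1, -r_j, r_j$. Setting $S := 1 + r_{(\mathcal{J})}1$ and assembling \eqref{eq:dxdA} with coefficients $\delta_{j_u j_v} - r_{j_v}/S$, straightforward index bookkeeping yields, for $j_u, j_v \in \mathcal{J}$,
\[
A_{j_u j_v} = \delta_{j_u j_v} - \frac{r_{j_v}}{S}, \quad A_{j_u n} = -\frac{1}{S}, \quad A_{n j_v} = -\frac{r_{j_v}}{S}, \quad A_{nn} = \frac{S-1}{S},
\]
with every other entry of $A$ equal to zero.

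Two observations then close the argument. First, every off-diagonal entry of $A$ is nonpositive (since $r_{j_v}, 1/S > 0$) and every diagonal entry is at most $c$ by definition, so $(I - A/c)_{ab} = \delta_{ab} - A_{ab}/c \ge 0$ entrywise; together with $A \in \mathfrak{lie}(\langle p \rangle)$ (whence $A1 = 0$ and $pA = 0$), this places $\mathcal{M}_{(\mathcal{J})}^{(p)}$ in $\langle p \rangle^+$. Second, any index $j^*$ attaining $c$ gives $(\mathcal{M}_{(\mathcal{J})}^{(p)})_{j^* j^*} = 1 - A_{j^* j^*}/c = 0$, so one of the nonnegativity constraints defining the polytope $\langle p \rangle^+$ is tight and hence $\mathcal{M}_{(\mathcal{J})}^{(p)} \in \partial \langle p \rangle^+$. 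The only real obstacle is careful bookkeeping around the index set $\mathcal{J}$ and the distinguished state $n$; no substantive ideas beyond those already used for Lemma \ref{lem:abcX} and Theorem \ref{thm:dxd} are needed.
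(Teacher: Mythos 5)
Your proof is correct and follows essentially the same route as the paper's, which likewise rests on the three sign observations $\Delta(A) \ge 0$, $\max \Delta(A) > 0$, and $A - \Delta(\Delta(A)) \le 0$ (your explicit entries, which match \eqref{eq:entriesA}, just make these and the positivity of $c = \max\Delta(A)$ concrete). Your spelled-out boundary step --- the vanishing diagonal entry at the argmax tightening a non-implicit nonnegativity constraint of the polytope --- is exactly what the paper leaves implicit.
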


\begin{proof}
Writing $A \equiv A_{(\mathcal{J})}^{(p;1)}$ for the moment, the result follows from three basic observations: $\Delta ( A ) \ge 0$, $\max \Delta \left ( A \right ) > 0$, and $A - \Delta \left ( \Delta \left ( A \right ) \right ) \le 0$.
\end{proof}

\subsubsection{\label{sec:Example}Example}

    %
    %
    %
    %
    %

To illustrate the Barker and Metropolis matrix constructions, consider $\mathcal{J} = \{1,2,3\}$ and $p = (1,2,3,4,10)/20$. Now \eqref{eq:dxdA} is
\begin{equation}
A_{(\mathcal{J})}^{(p;\omega)} = 
\frac{\omega}{16}\begin{pmatrix} 15 & -2 & -3 & 0 & -10 \\ -1 & 14 & -3 & 0 & -10 \\ -1 & -2 & 13 & 0 & -10 \\ 0 & 0 & 0 & 0 & 0 \\ -1 & -2 & -3 & 0 & 6 \end{pmatrix}. \nonumber
\end{equation}
For $\omega = 1$ and $t = -\log 2$, \eqref{eq:dxd} is  
\begin{equation}
\exp \left ( \log 2 \cdot A_{(\mathcal{J})}^{(p;1)} \right ) = 
\frac{1}{32}\begin{pmatrix} 17 & 2 & 3 & 0 & 10 \\ 1 & 18 & 3 & 0 & 10 \\ 1 & 2 & 19 & 0 & 10 \\ 0 & 0 & 0 & 32 & 0 \\ 1 & 2 & 3 & 0 & 26 \end{pmatrix}. \nonumber
\end{equation}
whereas for $\omega = 2$ and $t = -\log 2$, \eqref{eq:dxd} is
\begin{equation}
\exp \left ( \log 2 \cdot A_{(\mathcal{J})}^{(p;2)} \right ) = 
\frac{1}{64}\begin{pmatrix} 19 & 6 & 9 & 0 & 30 \\ 3 & 22 & 9 & 0 & 30 \\ 3 & 6 & 25 & 0 & 30 \\ 0 & 0 & 0 & 64 & 0 \\ 3 & 6 & 9 & 0 & 46 \end{pmatrix}. \nonumber
\end{equation}
Finally, \eqref{eq:BpJ} and \eqref{eq:MpJ} are respectively
\begin{equation}
\mathcal{B}_{(\mathcal{J})}^{(p)} =
\frac{1}{16}\begin{pmatrix} 1 & 2 & 3 & 0 & 10 \\ 1 & 2 & 3 & 0 & 10 \\ 1 & 2 & 3 & 0 & 10 \\ 0 & 0 & 0 & 16 & 0 \\ 1 & 2 & 3 & 0 & 10 \end{pmatrix}; \quad \mathcal{M}_{(\mathcal{J})}^{(p)} =
\frac{1}{15}\begin{pmatrix} 0 & 2 & 3 & 0 & 10 \\ 1 & 1 & 3 & 0 & 10 \\ 1 & 2 & 2 & 0 & 10 \\ 0 & 0 & 0 & 15 & 0 \\ 1 & 2 & 3 & 0 & 9 \end{pmatrix}. \nonumber
\end{equation}

\subsection{\label{sec:HigherOrder}Higher-order samplers}

In order to obtain higher-order samplers from the algebra of \S \ref{sec:Algebra}, we use a familiar trick, letting $n \rightarrow j \in \mathcal{J}$ correspond to a generic transition as in \S \ref{sec:BarkerMetropolis}. (Again, we do not specify or constrain a mechanism for proposing a set $\mathcal{J}$ of candidate states to transition into.) This immediately yields more sophisticated MCMC algorithms using \eqref{eq:BpJ} and \eqref{eq:MpJ} which we call \emph{higher-order Barker and Metropolis samplers}, respectively abbreviated as HOBS and HOMS. 

The corresponding matrix elements are straightforwardly obtained with a bit of arithmetic:
\begin{align}
\label{eq:entriesA}
\frac{1}{\omega} \left ( A_{(\mathcal{J})}^{(p;\omega)} \right )_{j_u j_u} & = 1 - \frac{r_{j_u}}{1+r_{(\mathcal{J})}1}; \nonumber \\
\frac{1}{\omega} \left ( A_{(\mathcal{J})}^{(p;\omega)} \right )_{n j_u} & = - \frac{r_{j_u}}{1+r_{(\mathcal{J})}1}; \nonumber \\
\frac{1}{\omega} \left ( A_{(\mathcal{J})}^{(p;\omega)} \right )_{n n} & = \frac{r_{(\mathcal{J})}1}{1+r_{(\mathcal{J})}1},
\end{align}
which yields the HOBS:
\begin{align}
\label{eq:HOBS}
\left ( \mathcal{B}_{(\mathcal{J})}^{(p)} \right )_{n j_u} & = \frac{r_{j_u}}{1+r_{(\mathcal{J})}1}; \nonumber \\
\left ( \mathcal{B}_{(\mathcal{J})}^{(p)} \right )_{n n} & = \frac{1}{1+r_{(\mathcal{J})}1}.
\end{align}

Meanwhile, 
\begin{equation}
\frac{1}{\omega} \max \Delta \left ( A_{(\mathcal{J})}^{(p;\omega)} \right ) = \frac{1+r_{(\mathcal{J})}1-\min \{ 1, \min r_{(\mathcal{J})} \}}{1+r_{(\mathcal{J})}1} \nonumber
\end{equation}
yielding the HOMS:
\begin{align}
\label{eq:HOMS}
\left ( \mathcal{M}_{(\mathcal{J})}^{(p)} \right )_{n j_u} & = \frac{r_{j_u}}{1+r_{(\mathcal{J})}1-\min \{ 1, \min r_{(\mathcal{J})} \}}; \nonumber \\
\left ( \mathcal{M}_{(\mathcal{J})}^{(p)} \right )_{n n} & = 1 - \frac{r_{(\mathcal{J})}1}{1+r_{(\mathcal{J})}1-\min \{ 1, \min r_{(\mathcal{J})} \}}.
\end{align}

\begin{algorithm}[tb]
   \caption{HOMS}
   \label{alg:homs}
\begin{algorithmic}
   \STATE {\bfseries Input:} Runtime $T$ and oracle for $r$
   \STATE Initialize $t=0$ and $X_0$
   \REPEAT
   \STATE Relabel states so that $X_t = n$
   \STATE Propose $\mathcal{J} = \{j_1,\dots,j_d\} \subseteq [n-1]$
   \STATE Accept $X_{t+1} = j_u$ with probability \eqref{eq:HOMS}
   \STATE Undo relabeling; set $t = t+1$
   \UNTIL{$t = T$}
   \STATE {\bfseries Output:} $\{X_t\}_{t=0}^T \sim p^{\times (T+1)}$ (approximately)
   \end{algorithmic}
\end{algorithm}

It turns out that the HOBS is equivalent to the ensemble MCMC algorithm of \cite{neal2011ensemble} as described in \cite{martino2018MT,martino2018IRSM}. The proposal mechanism we use for the HOBS in \S \ref{sec:Behavior} essentially amounts to the \emph{independent} ensemble MCMC sampler (apart from non-replacement, which technically induces jointness), but in general this is not the case. A more sophisticated proposal mechanism that can exploit any joint structure in the target distribution would be more powerful, but we reiterate that our approach is completely agnostic to proposal mechanism details.

In contrast, the HOMS is different than a \emph{multiple-try Metropolis sampler} (MTMS), including the independent MTMS described in \cite{martino2018MT}. The HOMS uses a sample from $\mathcal{J} \cup \{n\}$ to perform a state transition in a single step according to \eqref{eq:HOMS}, whereas a MTMS first samples from $\mathcal{J}$ before accepting or rejecting the result. The HOMS (and for that matter, also the HOBS) actually turns out to be a slightly special case of a construction in \S 2.3 of \cite{delmas2009does}. This work uses a ``proposition kernel'' defined by assigning a probability distribution on the power set $2^{[n]}$ of the state space $[n]$ to each element of the state space. Essentially, the HOMS and HOBS result if this distribution on $2^{[n]}$ is independent of the individual element (i.e., it varies only with the subset).

\subsection{\label{sec:Behavior}Behavior of higher-order samplers}

The difference between the HOBS and HOMS decreases as $d = |\mathcal{J}|$ increases and/or $p$ becomes less uniform (e.g., in a low-temperature limit), since in either limit we have $\min \{ 1, \min r_{(\mathcal{J})} \} \ll 1+r_{(\mathcal{J})}1$. Although one might hope to gain the most benefit from improved MCMC algorithms in such situations, the HOMS can still provide a comparative advantage for $d > 1$ but small, with elements chosen in complementary ways (uniformly at random, near current/previous states, etc.), or in e.g. the high-temperature part of a parallel tempering scheme \cite{earl2005tempering}.

We use the example of a small Sherrington-Kirkpatrick (SK) spin glass \cite{bolthausen2007spin,panchenko2012sk} to exhibit the behavior of the HOBS and HOMS in 
Figures \ref{fig:SK_9spins_b025} and \ref{fig:SK_9spins_b100}. The SK spin glass is the distribution \begin{equation}
\label{eq:SK}
p(s) := \textstyle{Z^{-1} \exp \left ( -\frac{\beta}{\sqrt N} \sum_{jk} J_{jk} s_j s_k \right )}
\end{equation}
over spins $s \in \{\pm 1\}^N$, where $J$ is a symmetric $N \times N$ matrix with independent identically distributed standard Gaussian entries and $\beta$ is the inverse temperature. 

\begin{figure}[htbp]
\includegraphics[trim = 10mm 0mm 10mm 0mm, clip, width=\textwidth,keepaspectratio]{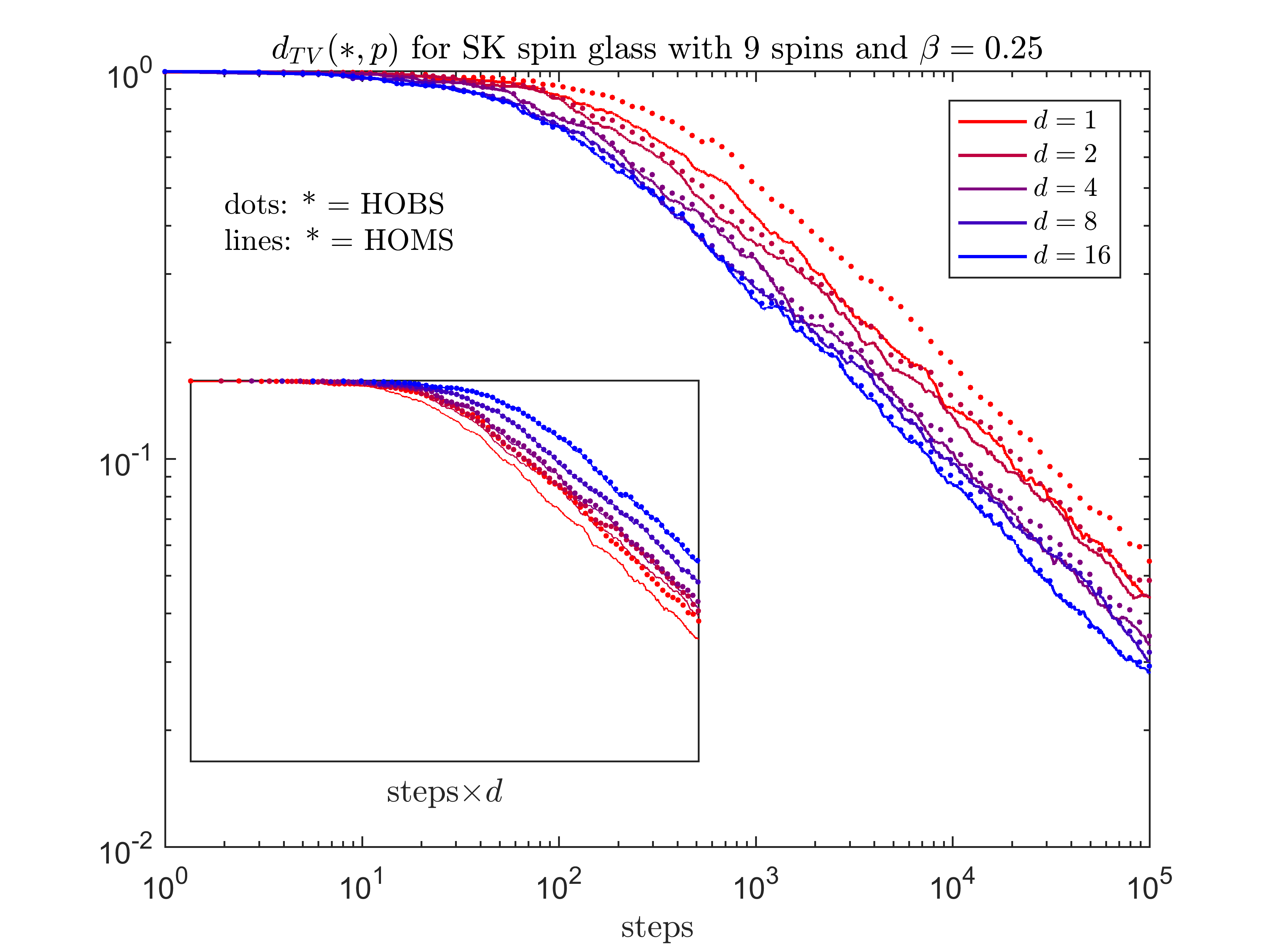}
\caption{ \label{fig:SK_9spins_b025} Total variation distance between the HOBS/HOMS with proposal sets $\mathcal{J}$ (elements distributed uniformly without replacement) of varying sizes $d$ and \eqref{eq:SK} with 9 spins and $\beta = 1/4$. Inset: the same data and window, with horizontal axis normalized by $d$.
} 
\end{figure} %


\begin{figure}[htbp]
\includegraphics[trim = 10mm 0mm 10mm 0mm, clip, width=\textwidth,keepaspectratio]{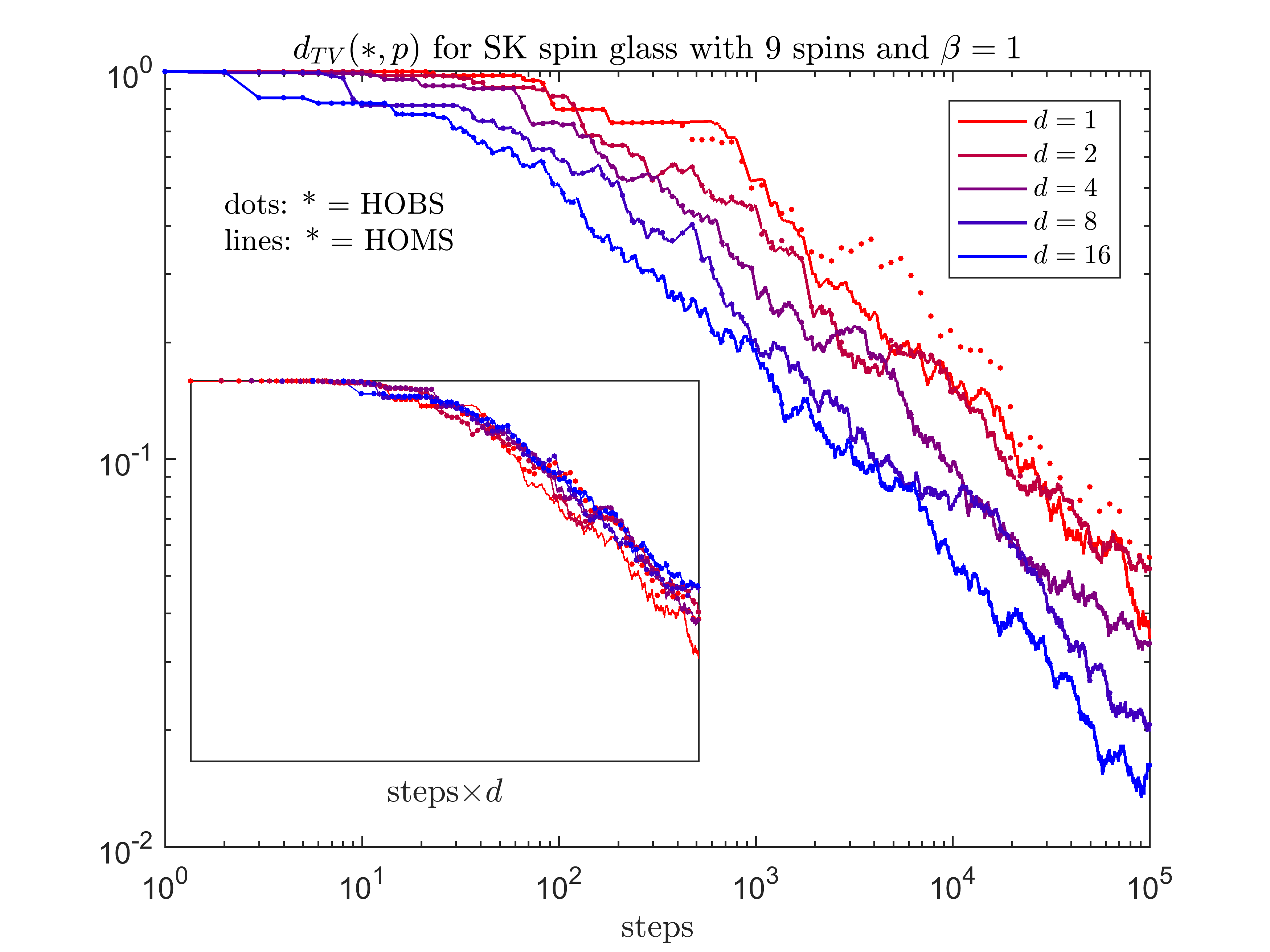}
\caption{ \label{fig:SK_9spins_b100} As in Figure \ref{fig:SK_9spins_b025} with $\beta = 1$.
} 
\end{figure} %

The SK model is well-suited for a straightforward evaluation of higher-order samplers because of its disordered energy landscape. More detailed models or benchmarks seem to require specific assumptions (e.g., the particular form of a spin Hamiltonian for Swendsen-Wang updates) and/or parameters (e.g., additional temperatures for parallel tempering, or of a vorticity matrix for non-reversible Metropolis-Hastings). In keeping with a straightforward evaluation, we do not consider sophisticated or diverse ways to generate elements of proposal sets $\mathcal{J}$. Instead, we simply select elements of $\mathcal{J}$ uniformly at random without replacement. We use the same pseudorandom number generator initial state for all simulations in order to highlight relative behavior. Finally, we choose $\beta$ low enough ($1/4$ and $1$) so that the behavior of a single run is sufficiently representative to make simple qualitative judgments.

The figure insets show that although higher-order samplers indeed converge more quickly, this comes at the cost of more overall evaluations of probability ratios. Parallelism is therefore necessary for higher-order samplers to be a wise algorithmic choice.

We reiterate in closing this section that the HOMS gives results very close to the HOBS, except for small values of $d$ or a more uniform target distribution $p$. Increasing the number of spins in the SK model and/or considering an Edwards-Anderson spin glass also yields qualitatively similar results (not shown here).

\subsection{\label{sec:LP}Linear objectives for transition matrices}

\begin{svgraybox}
We can push the preceding ideas further by using an optimization scheme to construct transition matrices with the desired invariant measures and that saturate a suitable objective function. 
\end{svgraybox}

For example, the linear objective $-1_\mathcal{J}^T \tau_{(\mathcal{J})}^{(p)} r_\mathcal{J}^T$ considered immediately after \eqref{eq:genericObjective} yields an optimal sparse approximation of the ``ultimate'' transition matrix $1p$. (To the best of our knowledge, this construction has not been considered elsewhere.) However, bringing an optimization scheme to bear narrows the regime of applicability to cases where computing likelihoods is hard enough and sufficient parallel resources are available to justify the added computational costs. 

To make this concrete, first define $1_\mathcal{J} \in \mathbb{R}^n$ by 
\begin{equation}
(1_\mathcal{J})_j := \begin{cases} 1 & \text{if } j \in \mathcal{J} \cup \{n\} \\ 0 & \text{otherwise}, \end{cases} \nonumber
\end{equation}
$1_\mathcal{J}^- := ((1_\mathcal{J})_1,\dots,(1_\mathcal{J})_{n-1})^T$, $r_\mathcal{J} := r \odot 1_\mathcal{J}^T$, and $r_\mathcal{J}^- := r^- \odot (1_\mathcal{J}^-)^T$, where $\odot$ is the entrywise or Hadamard product (note that $r_\mathcal{J} \in \mathbb{R}^n$, while $r_{(\mathcal{J})} \in \mathbb{R}^{|\mathcal{J}|}$ has been defined previously). 

Write $\Delta$ for the matrix diagonal map and recall the notation of Lemma \ref{lem:abcX}: since
\begin{equation}
\label{eq:tau}
\tau_{(\mathcal{J})}^{(p)} = \begin{pmatrix} I_{n-1} \\ -r_\mathcal{J}^- \end{pmatrix} \tau \begin{pmatrix} I_{n-1} & -1_\mathcal{J}^- \end{pmatrix},
\end{equation}
we have that $I - \tau_{(\mathcal{J})}^{(p)} \in \langle p \rangle^+$ iff
\begin{subequations}
\label{eq:constraints}
\begin{align}
0 & \le I_{n-1} - \Delta(1_\mathcal{J}^-) \tau \Delta(1_\mathcal{J}^-) \le 1; \label{eq:bodyConstraint} \\
0 & \le \tau 1_\mathcal{J}^- \le 1; \label{eq:lastColConstraint} \\
0 & \le r_\mathcal{J}^- \tau \le 1; \label{eq:lastRowConstraint} \\
0 & \le r_\mathcal{J}^- \tau 1_\mathcal{J}^- \le 1. \label{eq:lastElementConstraint}
\end{align}
\end{subequations}
The constraints \eqref{eq:lastColConstraint}-\eqref{eq:lastElementConstraint} respectively force the first $n-1$ entries of the last column, the first $n-1$ entries of the last row, and the bottom right matrix entry of $\tau_{(\mathcal{J})}^{(p)}$ to be in the unit interval. \eqref{eq:bodyConstraint} forces the relevant entries of the ``coefficient matrix'' $\tau$ (as an upper left submatrix of $\tau_{(\mathcal{J})}^{(p)}$) to be in the unit interval.

We can conveniently set to zero the irrelevant/unspecified rows and columns of $\tau$ that do not contribute to $\tau_{(\mathcal{J})}^{(p)}$ via the constraints
\begin{equation}
\label{eq:sparsityConstraint}
\Delta(1-1_\mathcal{J}^-) \tau = \tau \Delta(1-1_\mathcal{J}^-) = 0.
\end{equation}
Provided that we impose \eqref{eq:sparsityConstraint}, \eqref{eq:bodyConstraint} can be replaced with
\begin{equation}
\label{eq:bodyConstraint2}
0 \le I_{n-1} - \tau \le 1.
\end{equation}

The ``diagonal'' case corresponding to Lemma \ref{lem:pos} shows that \eqref{eq:constraints} and \eqref{eq:sparsityConstraint} jointly have nontrivial solutions. This suggests that we consider suitable objectives and corresponding linear programs for optimizing the MCMC transition matrix $I - \tau_{(\mathcal{J})}^{(p)}$. We therefore introduce the \emph{vectorization} map $\text{vec}$ that sends a matrix to a vector by stacking matrix columns in order. This map obeys the useful identity $\text{vec}(XYZ^T) = (Z \otimes X) \text{vec}(Y)$, where $\otimes$ denotes the Kronecker product.

Now a reasonably generic linear objective function is
\begin{equation}
\label{eq:genericObjective}
 x^T \tau_{(\mathcal{J})}^{(p)} y = (y^T \otimes x^T) \text{vec} \left ( \tau_{(\mathcal{J})}^{(p)} \right )
\end{equation}
for suitable fixed $x$ and $y$. In practice, we consider $x = 1_\mathcal{J}$ and $y = -r_\mathcal{J}^T$. This maximizes the Frobenius inner product of $I - \tau_{(\mathcal{J})}^{(p)}$ and $1_\mathcal{J} r_\mathcal{J}$ because
\begin{equation}
\text{Tr} \left ( \left ( I - \tau_{(\mathcal{J})}^{(p)} \right )^T 1_\mathcal{J} r_\mathcal{J} \right ) = r_\mathcal{J} 1_\mathcal{J} - 1_\mathcal{J}^T \tau_{(\mathcal{J})}^{(p)} r_\mathcal{J}^T. \nonumber
\end{equation}
Alternatives like $x = e_n, y = e_n$ (to discourage self-transitions) can lead to convergence that slows catastrophically as $d = |\mathcal{J}|$ increases, because high-probability states are less likely to remain occupied. More surprisingly, the same sort of slowing down happens for $x = e_n, y = -r_\mathcal{J}^T$, as well as for variations involving the $n$th component of $y$. We suspect that the cause is the same, albeit mediated indirectly through an objective that ``overfits'' the proposed transition probabilities to the detriment of remaining in place (or in some cases ``underfits'' by producing the identity matrix). Overall, it appears nontrivial to select better choices for $x$ and $y$ than our defaults above.

By \eqref{eq:tau} we get
\begin{equation}
\text{vec} \left ( \tau_{(\mathcal{J})}^{(p)} \right ) = \left [ \begin{pmatrix} I_{n-1} \\ -(1_\mathcal{J}^-)^T \end{pmatrix} \otimes \begin{pmatrix} I_{n-1} \\ -r_\mathcal{J}^- \end{pmatrix} \right ] \text{vec}(\tau),
\end{equation}
and in turn $(y^T \otimes x^T) \text{vec} \left ( \tau_{(\mathcal{J})}^{(p)} \right )$ equals
\begin{equation}
\label{eq:tensorObjective}
\left [ y^T \begin{pmatrix} I_{n-1} \\ -(1_\mathcal{J}^-)^T \end{pmatrix} \otimes x^T \begin{pmatrix} I_{n-1} \\ -r_\mathcal{J}^- \end{pmatrix} \right ] \text{vec}(\tau).
\end{equation}

Now the constraints and the objective of the linear program are both explicitly specified in terms of the ``coefficient'' matrix $\tau$, so in principle we have a working algorithm already. However, it is convenient to respectively rephrase the constraints \eqref{eq:lastColConstraint}-\eqref{eq:lastElementConstraint}, \eqref{eq:sparsityConstraint}, and \eqref{eq:bodyConstraint2} into different forms as
\begin{equation}
0 \le \begin{pmatrix} \left ( 1_\mathcal{J}^- \right )^T \otimes I_{n-1} \\ I_{n-1} \otimes r_\mathcal{J}^- \\ \left ( 1_\mathcal{J}^- \right )^T \otimes r_\mathcal{J}^- \end{pmatrix} \text{vec}(\tau) \le 1,
\end{equation}
\begin{equation}
\begin{pmatrix} I_{n-1} \otimes \Delta(1-1_\mathcal{J}^-) \\ \Delta(1-1_\mathcal{J}^-) \otimes I_{n-1} \end{pmatrix} \text{vec}(\tau) = 0,
\end{equation}
\begin{equation}
\text{vec}(I_{n-1})-1 \le \text{vec}(\tau) \le \text{vec}(I_{n-1}).
\end{equation}

Therefore, writing
\begin{align}
U_{(\mathcal{J})}^{(p)} := & \ \begin{pmatrix} I_{2n-1} \\ -I_{2n-1} \end{pmatrix} \begin{pmatrix} \left ( 1_\mathcal{J}^- \right )^T \otimes I_{n-1} \\ I_{n-1} \otimes r_\mathcal{J}^- \\ \left ( 1_\mathcal{J}^- \right )^T \otimes r_\mathcal{J}^- \end{pmatrix}; \nonumber \\
v := & \ \begin{pmatrix} 1_{2n-1} \\ 0_{2n-1} \end{pmatrix}; \nonumber \\
w_{(\mathcal{J})}^{(p)} := & \ -y^T \begin{pmatrix} I_{n-1} \\ -(1_\mathcal{J}^-)^T \end{pmatrix} \otimes x^T \begin{pmatrix} I_{n-1} \\ -r_\mathcal{J}^- \end{pmatrix}, \nonumber
\end{align}
and 
\begin{equation}
U_{(\mathcal{J})}^{(0)} := \begin{pmatrix} I_{n-1} \otimes \Delta(1-1_\mathcal{J}^-) \\ \Delta(1-1_\mathcal{J}^-) \otimes I_{n-1} \end{pmatrix},
\end{equation}
we can at last write the sought-after linear program (noting a minus sign included in $w_{(\mathcal{J})}^{(p)}$) in a form suitable for (e.g.) MATLAB's \texttt{linprog} solver:
\begin{subequations}
\label{eq:LP}
\begin{align}
\min_{\tau} w_{(\mathcal{J})}^{(p)} \text{vec}(\tau) \quad \text{s.t.}
	& \nonumber \\
U_{(\mathcal{J})}^{(p)} \text{vec}(\tau) \quad \le 
	& \quad v; \label{eq:LPineq} \\
U_{(\mathcal{J})}^{(0)} \text{vec}(\tau) \quad = 
	& \quad 0; \label{eq:LPeq} \\
\text{vec}(\tau) \quad \ge 
	& \quad \text{vec}(I_{n-1})-1; \label{eq:LPlb} \\
\text{vec}(\tau) \quad \le 
	& \quad \text{vec}(I_{n-1}). \label{eq:LPub}
\end{align}
\end{subequations}

The preceding discussion therefore culminates in the following
\begin{theorem}
\label{thm:LP}
The linear program \eqref{eq:LP} has a solution in $\langle p \rangle^+$. \qed
\end{theorem}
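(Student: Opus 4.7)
The theorem bundles together three assertions: (i) the LP \eqref{eq:LP} is feasible, (ii) an optimum is attained, and (iii) any feasible $\tau$ yields a matrix $I - \tau_{(\mathcal{J})}^{(p)}$ that lies in $\langle p \rangle^+$. My plan is to dispatch these in order, leaning heavily on the fact that the LP has been engineered to translate the characterization \eqref{eq:constraints}--\eqref{eq:sparsityConstraint} verbatim.

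For (i), I would produce an explicit feasible point by reading off the basis coefficients of the Barker matrix from Theorem \ref{thm:dxd}. Concretely, $\mathcal{B}_{(\mathcal{J})}^{(p)} = I - A_{(\mathcal{J})}^{(p;1)}$ lies in $\langle p \rangle^+$ and is supported on $\mathcal{J} \cup \{n\}$, so the corresponding $\tau \in M_{n-1,n-1}$ (with entries given by \eqref{eq:dxdA} restricted to the $\mathcal{J}$ block and zero elsewhere) satisfies \eqref{eq:sparsityConstraint} automatically and \eqref{eq:constraints} because every entry of $\mathcal{B}_{(\mathcal{J})}^{(p)}$ is a probability. The "diagonal" witness produced from Lemma \ref{lem:pos} also works and is in fact what the text hints at. Given the equivalence to be established in step (iii), this witness is feasible for \eqref{eq:LP}.

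For (ii), bounds \eqref{eq:LPlb}--\eqref{eq:LPub} confine $\mathrm{vec}(\tau)$ to the compact box $[\mathrm{vec}(I_{n-1})-1,\,\mathrm{vec}(I_{n-1})]$, and intersecting with the closed half-spaces \eqref{eq:LPineq} and affine subspace \eqref{eq:LPeq} leaves a compact polytope. Combined with (i), the feasible region is a nonempty compact polytope on which the linear objective attains its minimum.

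For (iii), I would verify row-by-row that the Kronecker-product constraints faithfully encode \eqref{eq:constraints} and \eqref{eq:sparsityConstraint}. Applying $\mathrm{vec}(XYZ^T) = (Z \otimes X)\mathrm{vec}(Y)$ to \eqref{eq:tau}: the three block rows of $U_{(\mathcal{J})}^{(p)}$ reproduce the left-hand sides of \eqref{eq:lastColConstraint}, \eqref{eq:lastRowConstraint}, and \eqref{eq:lastElementConstraint}, with the right-hand side vector $v$ delivering the unit-interval bounds after absorbing the sign flip $\begin{pmatrix}I\\-I\end{pmatrix}$; the block rows of $U_{(\mathcal{J})}^{(0)}$ reproduce \eqref{eq:sparsityConstraint}; and \eqref{eq:LPlb}--\eqref{eq:LPub} reproduce \eqref{eq:bodyConstraint2}, which under \eqref{eq:sparsityConstraint} is equivalent to \eqref{eq:bodyConstraint}. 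The only possible obstacle is this bookkeeping: making sure each Kronecker row corresponds to the intended scalar inequality in the right coordinates. No new ideas beyond those preceding the statement are needed, so the main labor is symbol-pushing rather than mathematical substance.
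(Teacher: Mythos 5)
Your proposal is correct and takes essentially the same route as the paper, whose ``proof'' is simply the discussion preceding the theorem: the constraints of \eqref{eq:LP} are verbatim Kronecker-vectorized encodings of \eqref{eq:constraints} and \eqref{eq:sparsityConstraint}, with feasibility witnessed by the diagonal case of Lemma \ref{lem:pos} (your Barker-matrix witness $\mathcal{B}_{(\mathcal{J})}^{(p)}$ works equally well). The extra steps you make explicit---attainment via compactness of the box \eqref{eq:LPlb}--\eqref{eq:LPub} intersected with the remaining closed constraints, and the role of \eqref{eq:sparsityConstraint} in reducing \eqref{eq:bodyConstraint} to \eqref{eq:bodyConstraint2} (indeed, the sparsity equalities are also exactly what guarantee that \eqref{eq:tau} preserves $p$ on the left and row sums on the right)---are left implicit in the paper, so your write-up is a faithful, slightly more careful rendering of the same argument.
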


\subsubsection{\label{sec:LPexample}Example}


As in \S \ref{sec:Example}, consider $p = (1,2,3,4,10)/20$ and $\mathcal{J} = \{1,2,3\}$. Solving the linear program with $x = 1_\mathcal{J}$ and $y = -r_\mathcal{J}^T$ produces the following element of $\langle p \rangle^+$:
\begin{equation}
\begin{pmatrix} 0 & 0 & 0 & 0 & 1 \\ 0 & 0 & 0 & 0 & 1 \\ 0 & 0 & 0 & 0 & 1 \\ 0 & 0 & 0 & 1 & 0 \\ 0.1 & 0.2 & 0.3 & 0 & 0.4 \end{pmatrix}. \nonumber
\end{equation}
For comparison, recall that the last row of $\mathcal{M}_{(\mathcal{J})}^{(p)}$ equals 
$(0.0 \bar 6, 0.1 \bar 3, 0.2, 0, 0.6)$.

\subsubsection{\label{sec:HOPS}The higher-order programming sampler}

We call the sampler obtained from \eqref{eq:genericObjective} and \eqref{eq:LP} with $x = -1_\mathcal{J}$ and $y = r_\mathcal{J}^T$ the \emph{higher-order programming sampler} (HOPS). We compare the HOMS and HOPS in Figures \ref{fig:SK_9spins_b025_LPcomparison} and \ref{fig:SK_9spins_b100_LPcomparison} (cf. Figures \ref{fig:SK_9spins_b025} and \ref{fig:SK_9spins_b100}). The figures show that the HOPS improves upon the HOMS, which in turn improves upon the HOBS.

\begin{figure}[htbp]
\includegraphics[trim = 10mm 0mm 10mm 0mm, clip, width=\textwidth,keepaspectratio]{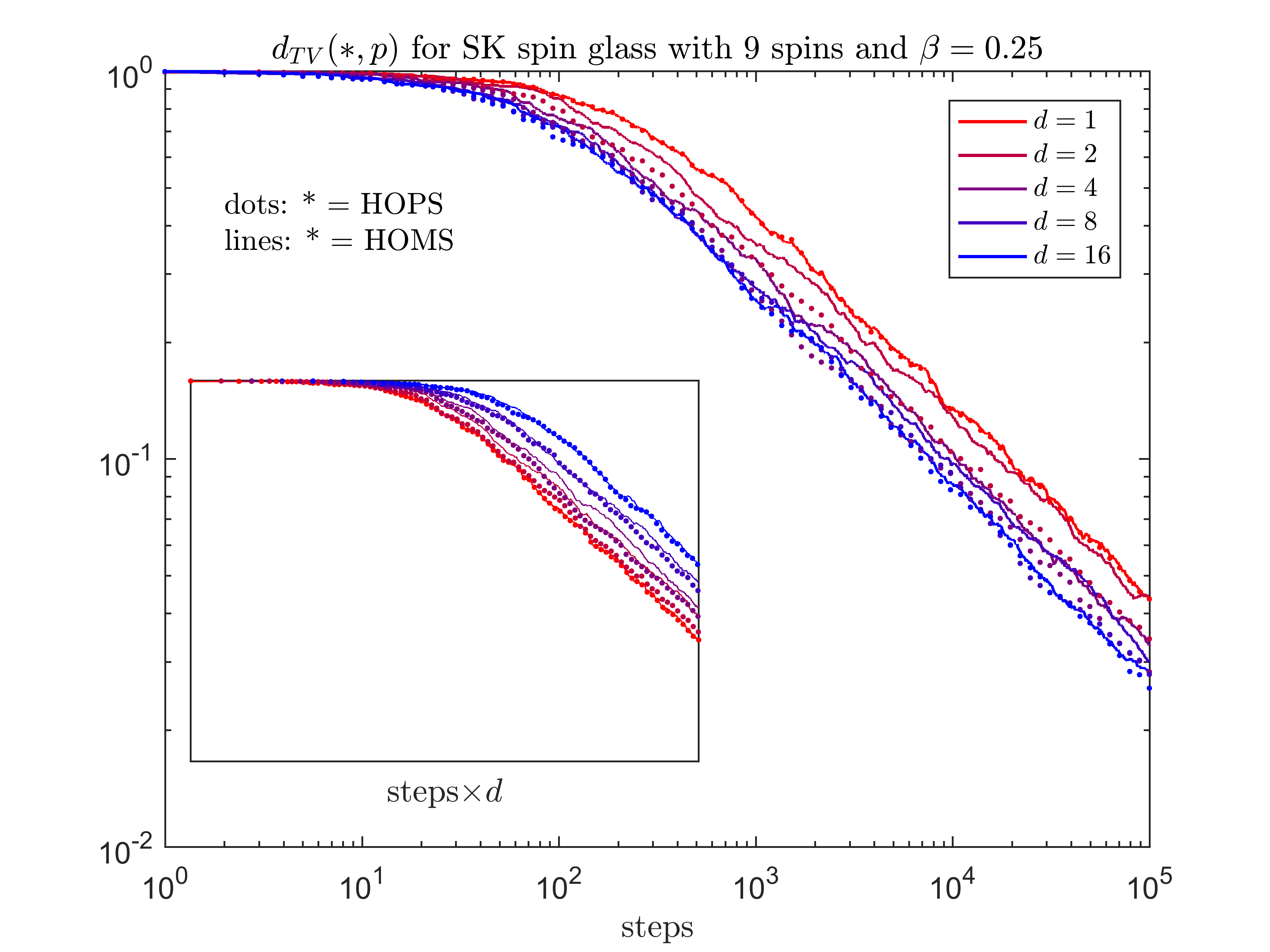}
\caption{ \label{fig:SK_9spins_b025_LPcomparison} Total variation distance between the HOPS/HOMS with proposal sets $\mathcal{J}$ (elements sampled uniformly without replacement) of varying sizes $d$ and \eqref{eq:SK} with 9 spins and $\beta = 1/4$. Inset: same data and window, with horizontal axis normalized by $d$.
} 
\end{figure} %

\begin{figure}[htbp]
\includegraphics[trim = 10mm 0mm 10mm 0mm, clip, width=\textwidth,keepaspectratio]{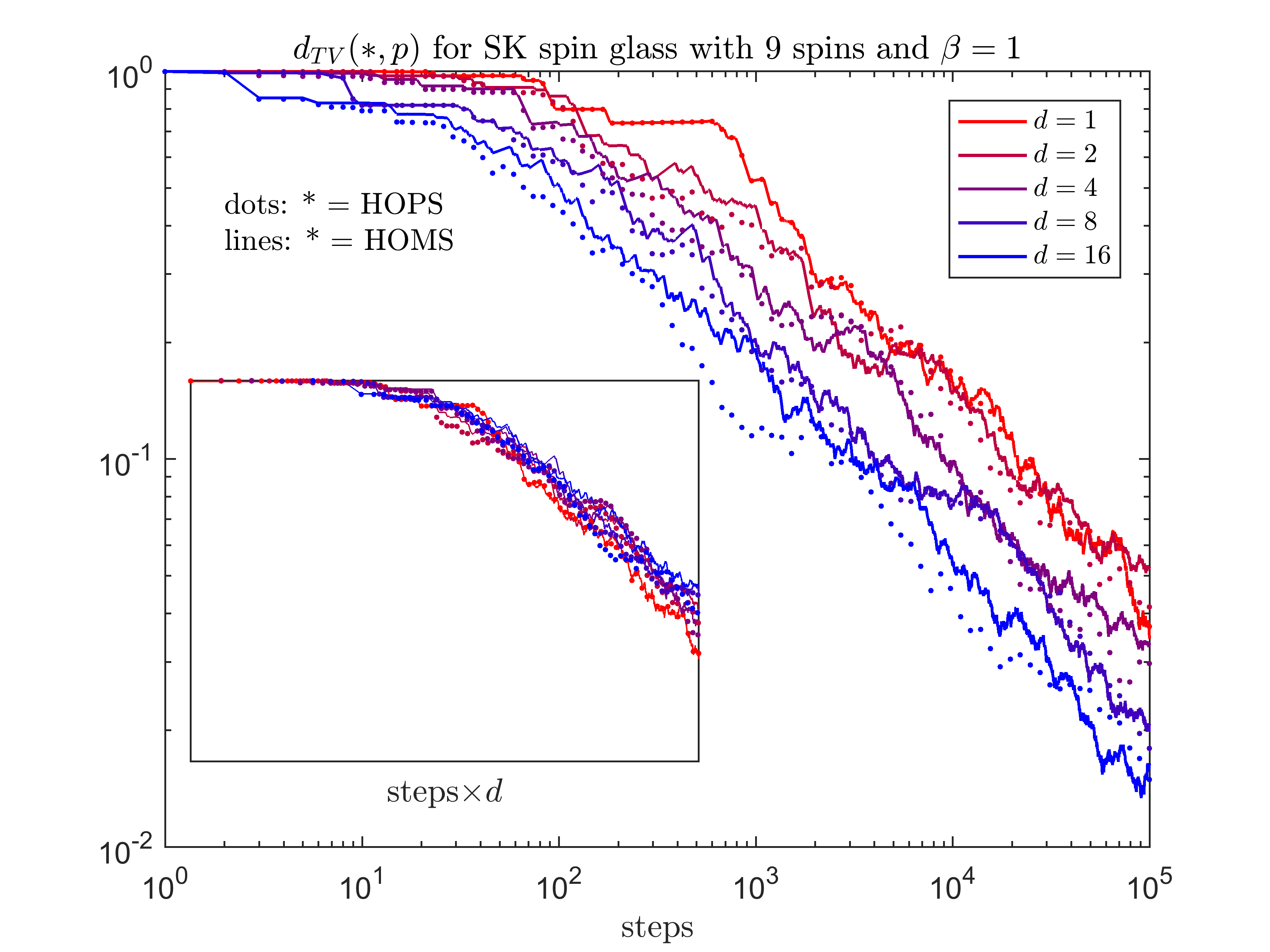}
\caption{ \label{fig:SK_9spins_b100_LPcomparison} As in Figure \ref{fig:SK_9spins_b025_LPcomparison} with $\beta = 1$.
} 
\end{figure} %

\begin{algorithm}[tb]
   \caption{HOPS}
   \label{alg:hops}
\begin{algorithmic}
   \STATE {\bfseries Input:} Runtime $T$ and oracle for $r$
   \STATE Initialize $t=0$ and $X_0$
   \REPEAT
   \STATE Relabel states so that $X_t = n$
   \STATE Propose $\mathcal{J} = \{j_1,\dots,j_d\} \subseteq [n-1]$
   \STATE Compute $\tau$ solving \eqref{eq:LP} with $x = 1_\mathcal{J}$ and $y = -r_\mathcal{J}^T$
   \STATE Set $P = I-\tau_{(\mathcal{J})}^{(p)}$ using \eqref{eq:tau}
   \STATE Accept $X_{t+1} = j_u$ with probability $P_{nj_u}$
   \STATE Undo relabeling; set $t = t+1$
   \UNTIL{$t = T$}
   \STATE {\bfseries Output:} $\{X_t\}_{t=0}^T \sim p^{\times (T+1)}$ (approximately)
   \end{algorithmic}
\end{algorithm}

\subsection{\label{sec:RemarksPhysics}Remarks on sampling}

Besides providing a framework that conceptually unifies various MCMC algorithms, symmetry principles lead to the apparently new HOPS algorithm of \S \ref{sec:LP}. It is possible that the HOPS itself might be further improved upon by developing an objective function suited for, e.g. convex optimization versus a mere linear program. These ideas might also enhance existing MCMC techniques specifically tailored for parallel computation, as in \cite{conrad2018expensive}. In particular, the Bayesian approach to inverse problems \cite{dashti2016inverse} may be fertile ground for applications. 

As we have already mentioned, our approach is agnostic with respect to proposals, focusing purely on acceptance mechanisms. However, the proposal mechanism has less impact than the acceptance mechanism in practice, especially for differentiable distributions. In practice, a stateful and/or problem-specific proposal exploiting joint structure is highly desirable and even necessary for any real utility, but we these avenues unexplored for now (one possibility is suggested by particle MTMS algorithms as in \cite{martino2014} and exploiting tensor product structure in transition matrices and $\langle p \rangle$). It would be of interest to incorporate some aspect of a proposal mechanism into the objective of \eqref{eq:LP}, but it is not clear how to actually do this. In fact, our numerical example featured a SK spin glass to illustrate our ideas precisely because its highly disordered structure (and discrete state space) are suited for separating concerns about proposal and acceptance mechanisms.

It would certainly be interesting to extend the present considerations to continuous variables. However, this would seem to require a more technical treatment, since infinite-dimensional Lie theory, distributions \emph{\`a la} Schwartz, etc. would play a role at least in principle. In a complementary vein, it would be interesting to see if the full construction of \cite{delmas2009does} could be recovered from symmetry arguments alone.

While the Barker and Metropolis samplers are reversible, it is not clear if the HOPS is, though \cite{bierkens2016nonreversible} points out ways to transform reversible kernels into irreversible ones and \emph{vice versa}.

It is possible to produce transition matrices (even in closed form) in which the $n$th row is nonnegative but other rows have negative entries. It is not immediately clear if using such a matrix actually poisons a MCMC algorithm. Though preliminary experiments in this direction were discouraging, we have not found a compelling argument that rules out the use of such matrices. 

Finally, it would be of interest to sample from the vertices of the polytope $\langle p \rangle^+$. However, (even approximately) uniformly sampling vertices of a polytope is $\mathbf{NP}$-hard (and thus presumably intractable) by Theorem 1 of \cite{khachiyan2001transversal}: see also \cite{khachiyan2008vertices}.

\section{\label{sec:2} Statistical physics via symmetry}

We have seen in \S \ref{sec:1} that sampling algorithms can be better understood in principle and also accelerated in practice through elementary considerations of symmetry. In the present section, we show how similarly basic considerations of symmetry can derive the basic structure of statistical physics. While we do not address entropy \emph{per se}, that ground is well-traveled, with the well-known characterization of Faddeev \cite{faddeev1956concept,baez2011characterization} playing an exemplary role. 

We focus instead on the role of temperature (and via closure of the Gibbs relation, energy), which classical information-theoretical considerations have not substantially accounted for. In particular, we sketch how an effective temperature can reproduce the physical temperature for conjecturally generic model systems (see also \S \ref{sec:Anosov}), while also enabling applications to data analytics, characterization of time-inhomogeneous Markov processes, nonequilibrium thermodynamics, etc.

The goal of providing a self-consistent description of stationary systems with finitely many states using the language of equilibrium statistical physics in the canonical ensemble naturally flows from the idea expressed in \cite{gallavotti2008heat} that ``there is no conceptual difference between stationary states in equilibrium and out of equilibrium.'' While the traditional aim of statistical physics is predicting statistical behavior in terms of measured physical properties, the aim here is to go in the other direction: that is, to determine effective physical properties--in and out of equilibrium--in terms of observable statistical behavior. In other words, the goal is to take one step farther the now-classical maximum entropy point of view in which statistical physics is a framework for reasoning about data.

We realize this goal by demonstrating the existence, uniqueness (up to a choice of scale), and relevance of a physically reasonable and \emph{invertible} transformation between simple effective statistical and physical descriptions of a system (see figure \ref{fig:t2H}). The effective statistical description is furnished by a probability distribution along with a characteristic timescale. The effective physical description consists of an effective energy function and an effective temperature. 
\footnote{
The use of an effective temperature in glassy systems has a long history \cite{tool1946relation,nieuwenhuizen1998thermodynamics,leuzzi2007thermodynamics} and has recently gained prominence through the \emph{fluctuation-dissipation (FD) temperature} in mean-field systems \cite{cugliandolo2011effective,puglisi2017temperature}. Discussions of the relationship between our construction and both the FD temperature (frequently called ``the'' effective temperature in the literature) and the dynamical temperature introduced by Rugh \cite{rugh1997dynamical,rugh1998geometric} can be found in \cite{huntsman2010anosov}. 
}
The transformation between these descriptions will be derived from the elementary Gibbs relation and basic symmetry considerations along lines first explored in \cite{ford2005surfaces,ford2006descriptive}. 

The utility and naturalness of the effective physical description that results from performing this transformation on an effective statistical description will depend entirely on the utility and naturalness of the underlying state space and of the characteristic timescale. In the event that the actual state space of a real physical system in thermal equilibrium is finite and an appropriate characteristic timescale can be determined, the corresponding effective physical description will manifestly reproduce the actual physics. Moreover, in near-equilibrium, the effective temperature and energies will remain near the actual values of their equilibrium analogues by a continuity argument. Consequently, the framework discussed here may inform principled characterizations of quasi-equilibria.

However, as the system is driven away from equilibrium, its effective energy levels will shift, while the actual energy levels of a real physical system may be fixed and intrinsic. Nevertheless, such shifts are still of interest for characterizing nonequilibrium situations, even for real physical systems. For example, a system such as a laser undergoing population inversion will exhibit an effective level crossing as the driving parameter varies. In a related vein, a negative absolute temperature \cite{braun2013negative,dunkel2014consistent,frenkel2015gibbs} would correspond in our framework to a negative characteristic timescale, indicating antithermodynamic behavior such as ``antimixing'' or ``antirelaxation.''

Even very limited knowledge about the energy levels and temperature of a system is sufficient to determine the remainder of that information as a trivial exercise in algebra using the Gibbs relation. Nevertheless, the preceding discussion should not distract from the observation that the framework discussed here provides its most substantial advantage in the situation where inverting the Gibbs relation might initially seem like an ill-posed problem. Therefore, the primary goal of the framework discussed below is to give effective physical descriptions of systems that have no \emph{a priori} physical characterization, while maintaining total consistency with equilibrium statistical physics in situations where an \emph{a priori} physical characterization \emph{is} available.

Highlighting this consistency is the example of Anosov systems (see \S \ref{sec:Anosov}), and specifically paradigmatic chaotic model systems such as the cat map and the free particle or ideal gas on a surface of constant negative curvature, where using a careful iterative discretization scheme indicates how the actual energy and temperature may be reproduced by suitable effective analogues, despite the fact that the underlying state spaces are continuous. Thermostatting subsequently indicates how the transformation at the heart of our discussion could be applied in principle to essentially arbitrary physical systems \cite{huntsman2010anosov}. 

While the perspective we shall take below does not confer extensive predictive power in the realm of physics, it does have some (see, e.g. \S \ref{sec:TwoStateSystems}) and its descriptive and explanatory power nevertheless suggests a wide and significant scope for applications, including to nonequilibrium statistical physics, the renormalization group, information theory, and the characterization of both stochastic processes and experimental data. More provocatively, it can be regarded as illuminating the fundamental meaning of both energy and temperature independently of references to work, force, mass, or the underlying spatial context upon which the latter concepts ultimately depend for their definition. 

In this section, we derive the Gibbs relation from symmetry considerations in \S \ref{sec:GibbsDerivation} before introducing the coordinate systems that respectively underlie experimental/probabilistic and theoretical/physical descriptions of systems in \S \ref{sec:Descriptions}. With the stage set, we perform some preliminary algebra in \S \ref{sec:PrelliminaryAlgebra}. After obtaining intermediate results on the scaling behavior of inverse temperature as a function of time in \S \ref{sec:Scaling} and on the geometry of any reasonable transformation between the two descriptions above in \S \ref{sec:Geometry}, we complete the derivation of the effective temperature in \S \ref{sec:EffectiveTemperature}. We then outline constraints on the form of a characteristic timescale imposed by considering product systems in \S \ref{sec:ProductSystems}. Finally, we outline a number of examples and applications in \S \ref{sec:ExamplesAndApplications} before remarks in \S \ref{sec:RemarksStatPhys}. 

Later, \S \ref{sec:Anosov} considers the effective temperature for Anosov systems.

\begin{svgraybox}
At times, we may write $\beta$ to denote the \emph{physical or actual inverse temperature} as well as an effective analogue. Context should serve to eliminate any ambiguity, especially as we make an effort to separate discussion of these two quantities impinging on equations.
\end{svgraybox}

\subsection{\label{sec:GibbsDerivation}The Gibbs distribution}

The first step in deriving the basic structure of statistical physics from symmetry is to derive the Gibbs relation between state probabilities and energies. We do this for a finite system from the basic postulate that the probability of a state depends only on its energy. 
\footnote{
In a similar if slightly less parsimonious vein, Blake Stacey has pointed out that the Gibbs distribution can be derived ``based on the idea that if [two systems] $A$ and $B$ are at the same temperature, a noninteracting composite system $AB$ [formed from $A$ and $B$] is also at that temperature. Suppose that  $E_j$ is an energy level of system $A$ and $E_k$ is an energy level of $B$. 
Then, if there is no interaction between the two systems, $AB$ will have an energy level $E_j + E_k$. If we assume that for all systems prepared at temperature $T$, $\mathbb{P}(E_n) = \frac{1}{Z} f(E_n)$, then we have $f(E_j) f(E_k) Z_A Z_B = f(E_j+E_k) Z_{AB}$. But we have the freedom to adjust $f$ by an overall multiplicative constant, since the meaningful quantities are the probabilities and any prefactor will cancel when we divide by the partition function. 
So, we can declare $f(0) = 1$, which yields $Z_A Z_B = Z_{AB}$ and thus $f(E_j+E_k) = f(E_j) f(E_k)$. And this is just Cauchy’s functional equation for the exponential. So, provided that $f$ is continuous at even a single point, then $f(E) = e^{-\beta E}$, where the `coolness' $\beta$ labels the equivalence classes of thermal equilibrium.'' 
See \url{https://golem.ph.utexas.edu/category/2020/06/getting_to_the_bottom_of_noeth.html}, accessed 1 October 2020.
}
This derivation will implicitly motivate the construction of the effective temperature that culminates in \S \ref{sec:EffectiveTemperature}. While unlike more classical derivations ours does not motivate the introduction of entropy, the standard information-theoretic motivation provides a more than adequate remedy, and the Faddeev characterization of entropy is also a symmetry argument \cite{faddeev1956concept,baez2011characterization}. 

The key observation is that energy is only defined up to an additive constant, i.e., only energy differences are physically meaningful. This and the basic postulate that state probabilities are functions of state energies together imply that
\begin{equation}
\mathbb{P}(E_k) = \frac{f(E_k)}{\sum_j f(E_j)} = \frac{f(E_k + \varepsilon)}{\sum_j f(E_j + \varepsilon)}
\end{equation}
for some function $f$ and $\varepsilon$ arbitrary. Define
\begin{equation}
g_E(\varepsilon) := \frac{\sum_j f(E_j + \varepsilon)}{\sum_j f(E_j)}
\end{equation}
and note that $ g_E(0) = 1$ by definition. It follows that
\begin{equation}
\mathbb{P}(E_k) = \frac{f(E_k)}{\sum_j f(E_j + \varepsilon)} g_E(\varepsilon) = \frac{f(E_k + \varepsilon)}{\sum_j f(E_j + \varepsilon)}.
\end{equation}
Therefore $g_E(\varepsilon) f(E_k) = f(E_k + \varepsilon)$, implying that 
\begin{equation}
f(E_k + \varepsilon) - f(E_k) = (g_E(\varepsilon) - 1) \cdot f(E_k).
\end{equation} 

Since $ g_E(0) = 1$, we obtain $f'(E_k) = g'_E(0) f(E_k)$, and in turn
\begin{equation}
f(E_k) = C \exp (g'_E(0)E_k).
\end{equation}

\begin{svgraybox}
Without loss of generality, we can set $\beta := -g'_E(0)$ and $C \equiv 1$, which produces the Gibbs distribution so long as the temperature is \emph{defined} as $\beta^{-1}$. 
\end{svgraybox}

We note that the present derivation can be made rigorous (e.g., details involving continuity and the Cauchy functional equation) without substantial difficulty, but also without substantive additional insight. Also, $g_E(\varepsilon) = \exp(-\beta \varepsilon)$ so that $g_E \equiv g$, as required for the self-consistency of the argument. Although the present derivation is only appropriate for $\beta$ fixed, this just amounts to considering the canonical ensemble in the first place.

Finally, we reiterate that there are just a handful of symmetry and scaling principles collectively underlying the present derivation and that of the effective temperature below. In concert with the standard information-theoretical infrastructure for entropy, these principles provide an exceptionally parsimonious framework for the equilibrium statistical physics of finite systems.

\subsection{\label{sec:Descriptions} Statistical and physical system descriptions}

Consider now a stationary system with state space $[n]$. For our purposes, a sufficient \emph{statistical} description of such a system is provided by the $(n+1)$-tuple $(p_1,\dots,p_n,t_\infty) = (p,t_\infty)$, where $p_j := \mathbb{P}(s(t) = j)$ is the probability for the system to be in state $j \in [n]$, and where $t_\infty$ is a suitable characteristic or ``effective'' timescale. 
\footnote{
For technical reasons we will impose the nondegeneracy requirement $p_j > 0$ throughout our discussion. 
}
\footnote{
As we shall see, it turns out that physical considerations constrain $t_\infty$ to share many of the features of a mixing time or inverse energy gap (i.e., a relaxation time). 
}
Defining $t_j := t_\infty p_j$, the $n$-tuple $t := t_\infty p = (t_1,\dots,t_n)$ provides an alternative but completely equivalent description of the system, since the probability constraint $\sum_j p_j = 1$ implies that $t_\infty = \sum_j t_j$. We will use both of these descriptions interchangeably below without further comment. 

Meanwhile, a sufficient \emph{physical} description of the system is provided by the $(n+1)$-tuple $H := (E_1,\dots,E_n,\beta^{-1}) = (E,\beta^{-1})$, where $E_j$ is an effective energy for state $j$, and where $\beta$ is an effective inverse temperature. It will also be convenient to introduce $\gamma := \beta E$, noting that $\beta H = (\gamma,1)$.

Below, we will construct well-defined and essentially unique physically reasonable and mutually inverse maps (see Figure \ref{fig:t2H})
\begin{equation}
\label{eq:maps}
F_H(t) = H, \quad F_t(H) = t.
\end{equation}
The map $F_t$ extends the familiar Gibbs relation \eqref{eq:Gibbs}, and the relationship between $t_\infty$ and $\beta$ plays a pivotal role in the construction of both $F_H$ and $F_t$. In particular, we will determine $\beta$ as a function of $t$ in \eqref{eq:explicitFH1}, whereupon the Gibbs relation and equation \eqref{eq:E0} for the reference energy will complete the detailed specification of $F_H$.

\begin{figure}[htbp]
\includegraphics[trim = 10mm 20mm 10mm 25mm, clip, width=\textwidth,keepaspectratio]{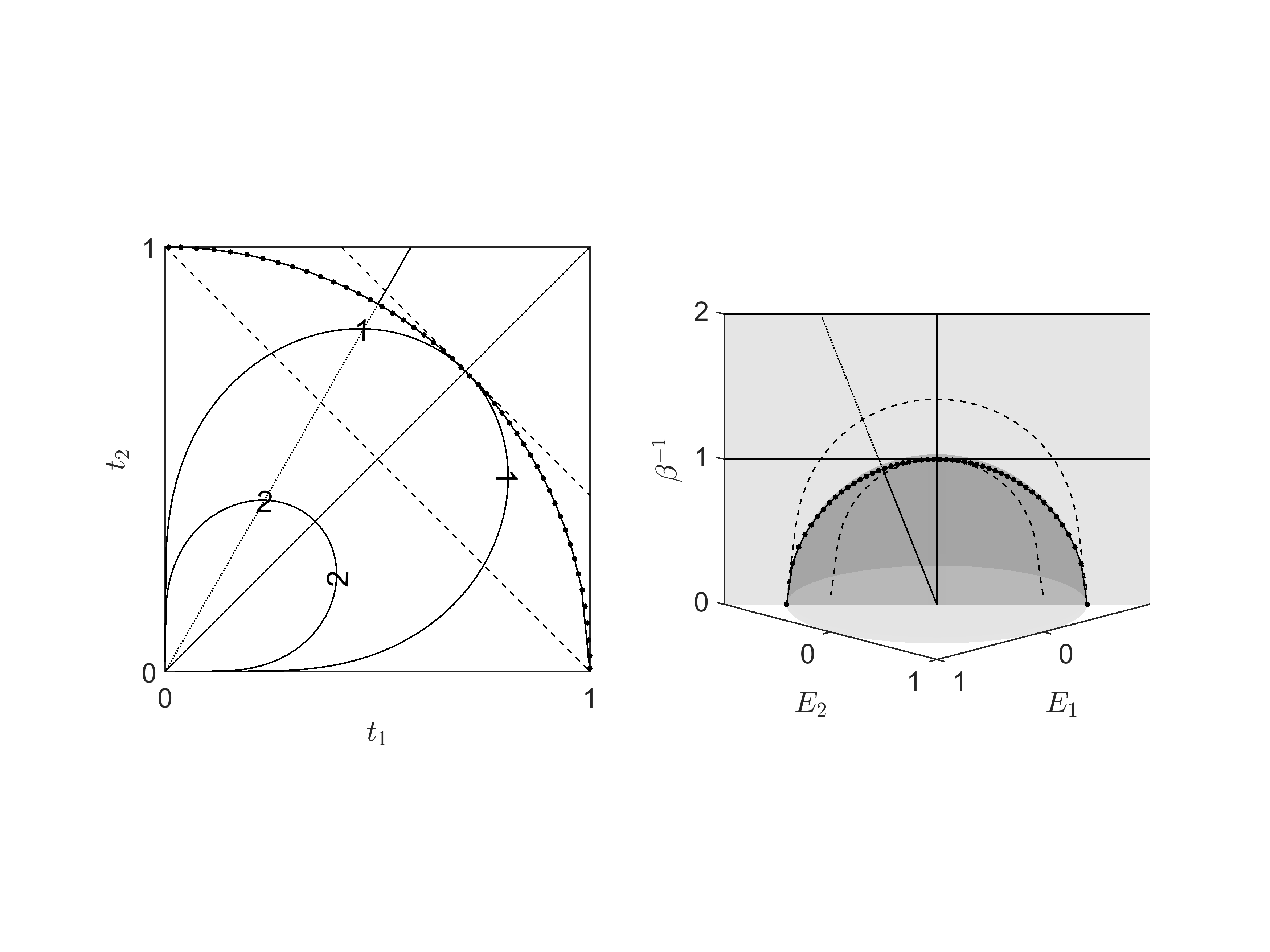}
\caption{ \label{fig:t2H} Geometry of the maps \eqref{eq:maps} for $n = 2$ states. Level curves of $\beta^{-1} = 1,2$ (solid contours) and of $t_\infty = 1, \sqrt{2}$ (dashed contours) are shown in both coordinate systems. The actions of the maps are also shown explicitly for circular arcs and rays.}
\end{figure} 

Because adding an arbitrary constant to the effective state energies merely amounts to a shift of a potential with no physical relevance, it is convenient to specify a reference energy, at least temporarily. With the preceding considerations in mind, and without any loss of generality, we impose the constraint
\footnote{
NB. This does not entail a specification of the internal energy (or any other physically meaningful quantity) \emph{\`{a} la} Jaynes \cite{jaynes1957information}.
}
\begin{equation}
\label{eq:E0}
\frac{1}{n} \sum_j E_j = 0.
\end{equation}
Note that we may later enforce any other convenient reference energy, e.g., $\min_j E_j \equiv 0$, $n^{-1} \sum_j E_j = \beta^{-1}$, etc.

\subsection{\label{sec:PrelliminaryAlgebra} Preliminary algebra}

For systems in thermal equilibrium, it is natural to require that $\beta$ is the inverse of the physical temperature, i.e., that the effective and physical temperatures coincide. In this case the fundamental principle of equilibrium statistical physics embodied by the Gibbs relation may be expressed as
\begin{equation}
\label{eq:Gibbs}
p_k = Z^{-1}e^{-\gamma_k}
\end{equation}
and regarded as a map $p = F_p(H)$. Here as usual $Z := \sum_j e^{-\gamma_j}$ is the partition function. 

By provisionally ignoring whether or not a generic stationary system is actually in thermal equilibrium, \eqref{eq:Gibbs} can be viewed as a constraint linking its physical and statistical descriptions. We will justify this interpretation below by using elementary symmetries and scaling relationships to specify (up to an overall constant) the inverse $F_H$ of an augmentation $F_t$ of the Gibbs map $F_p$. 

Taking logarithms on both sides of \eqref{eq:Gibbs} yields 
\begin{equation}
\label{eq:logp}
-\log Z - \gamma_k = \log p_k.
\end{equation} 
Meanwhile, the constraint \eqref{eq:E0} implies that $n^{-1} \sum_j \gamma_j = 0$. Combining this observation with arithmetic averaging of both sides of \eqref{eq:logp} leads to the result
\begin{equation}
\label{eq:logZ}
-\log Z = \frac{1}{n} \sum_j \log p_j.
\end{equation} 

Substituting \eqref{eq:logZ} into \eqref{eq:logp} and solving for $\gamma_k$ shows that
\begin{equation}
\label{eq:betaEk}
\gamma_k = \frac{1}{n}\sum_j \log p_j - \log p_k.
\end{equation}
Since $\beta F_H(t) = (\gamma,1)$,
\begin{equation}
\label{eq:normbetaH2}
\lVert \beta  F_H(t) \rVert^2 = \lVert \gamma \rVert^2 + 1,
\end{equation}
where $\lVert \cdot \rVert$ denotes the usual Euclidean norm. 
That is, $\lVert \beta  F_H(t) \rVert = \sqrt{\lVert \gamma \rVert^2 + 1}$ can be explicitly computed in terms of $\gamma$ (and by \eqref{eq:betaEk} also in terms of $p$) alone.

Therefore, in order to determine $\beta$, it remains chiefly to determine $\lVert F_H(t) \rVert$, since $\lVert \beta  F_H(t) \rVert$ is known from \eqref{eq:normbetaH2} and we tautologically have that
\begin{equation}
\label{eq:betaNormQuotient}
\beta = \lVert \beta  F_H(t) \rVert/\lVert F_H(t) \rVert.
\end{equation}
To determine $\lVert F_H(t) \rVert$, we will establish two results on scaling and geometry next.

\subsection{\label{sec:Scaling} A scaling result}

Dimensional considerations imply that if $\beta$ is determined by any well-behaved map $F_H$, then it must depend on some constant governing parameter $S$ in addition to $t$. That is, $\beta = f(t; S)$. By the Buckingham $\Pi$-theorem \cite{buckingham1914physically, barenblatt2003scaling}, $\beta = S^\xi t_\infty^\omega \Psi(p)$ for some $\xi$ and $\omega$, where $\Psi$ is dimensionless.

Consider for the moment a system governed by a Hamiltonian $\mathcal{H}(X,P)$. If $C$ is a constant, the transformation $t \mapsto t' := t/C$ induces the transformation $t_{\infty} \mapsto t'_{\infty} := t_{\infty}/C$ as well as the extended canonical (pure scale) transformation \cite{goldstein2001classical}
\begin{equation}
\label{eq:extendedcanonicaltransformation}
X \mapsto X' = X, \quad P \mapsto P' = CP, \quad \mathcal{H} \mapsto \mathcal{H}' = C\mathcal{H}.
\end{equation}

Since the transformation \eqref{eq:extendedcanonicaltransformation} can be considered as a change of units, it necessarily leaves the actual (vs. effective) Gibbs factor $e^{-\beta \mathcal{H}}$ invariant. That is, $\beta \mathcal{H} = \beta' \mathcal{H}' = \beta' C \mathcal{H}$. This observation immediately yields that $\beta \mapsto \beta' = \beta/C$. Physical consistency therefore demands that 
\begin{equation}
\label{eq:abstractscaling}
F_H(t/C) = C \cdot F_H(t).
\end{equation}
From this, it follows that $\omega = 1$, so without loss of generality
\begin{equation}
\label{eq:scaling}
\beta = S^{-1} t_\infty \Psi(p),
\end{equation} 
where the constant $S$ carries units of action (say, $S = \hbar$).

\subsubsection{\label{sec:ScalingArguments}Additional arguments in support of \eqref{eq:abstractscaling} and \eqref{eq:scaling}}

A reader fully convinced by the argument just above can safely skip this section.

\paragraph{\label{sec:idealgasscaling} Ideal gas systems}

Consider a \emph{Gedankenexperiment} with two systems, comprised respectively of finite ideal gas samples with particle masses $m$ and $m' = m/C$, each in identical freefalling containers in contact with isotropic thermal baths, and with the same initial conditions in phase space. Let $P = P'$ denote the common rms momentum of particles in both systems: the respective inverse temperatures of the two systems are then in common proportion to $m/P^2$ and $m'/P'^2 = (m/C)/P^2$. 

Insofar as the system microstates are not of interest in equilibrium, the systems may be respectively described by, e.g. the quintuples $(m, v, P, t_\infty, \beta)$ and $(m', v', P', t_\infty', \beta') = (m/C, Cv, P, t_\infty/C, \beta/C)$, where $v^{(\cdot)}$ denotes a rms velocity and here $t_\infty^{(\cdot)}$ denotes \emph{any} characteristic timescale of the same nature in both systems.

Both systems follow the same trajectory through phase space, albeit at rates that differ by constant factors, and we see that $\beta$ scales as $t_\infty$ for ideal gases, and hence (by coupling with an ideal gas bath) for general systems also. Therefore, consistency with elementary equilibrium statistical physics requires that $\beta$ also scales as $t_\infty$.

\paragraph{\label{sec:ckms}The classical KMS condition}

Another argument along similar lines to that in \S \ref{sec:Scaling} for the scaling behavior of $\beta$ w/r/t $t_\infty$ directly invokes the classical Kubo-Martin-Schwinger (KMS) condition. To begin, we recall the usual (quantum) KMS condition before formally deriving its classical analogue in the limit $\hbar \rightarrow 0$ by way of background.

A quantum Hamiltonian $\mathcal{\hat H}$ has thermal density matrix
\begin{equation}
\label{eq:thermaldensitymatrix}
\hat \rho := Z^{-1}e^{-\beta \mathcal{\hat H}},
\end{equation}
where $Z := \mbox{Tr}(e^{-\beta \mathcal{\hat H}})$, and the time evolution of an observable $\hat A$ in the Heisenberg picture is given as usual by $\tau_t(\hat A) := e^{i\mathcal{\hat H}t/\hbar} \hat A e^{-i\mathcal{\hat H}t/\hbar}$. 

The quantum Gibbs rule $\langle \hat A \rangle = \mbox{Tr}(\hat \rho \hat A)$, with $\hat \rho$ given by \eqref{eq:thermaldensitymatrix}, is generalized by the KMS condition \cite{gallavotti1975classical,parisi1998statistical}
\begin{equation}
\label{eq:kms}
\left \langle \tau_t(\hat A) \hat B \right \rangle = \left \langle \hat B\tau_{t+i\hbar\beta}(\hat A) \right \rangle.
\end{equation}
For convenience, we recall a formal derivation of \eqref{eq:kms} from the Gibbs rule and the cyclic property of the trace:
\begin{eqnarray}
\left \langle \tau_t(\hat A) \hat B \right \rangle & = & Z^{-1}\mbox{Tr}(e^{-\beta \mathcal{\hat H}} e^{i\mathcal{\hat H}t/\hbar} \hat A e^{-i\mathcal{\hat H}t/\hbar} \hat B) \nonumber \\
& = & Z^{-1}\mbox{Tr}(\hat B e^{i\mathcal{\hat H}z/\hbar} \hat A e^{-i\mathcal{\hat H}t/\hbar}) \nonumber \\
& = & Z^{-1}\mbox{Tr}(\hat B e^{i\mathcal{\hat H}z/\hbar} \hat A e^{-i\mathcal{\hat H}z/\hbar} e^{-\beta \mathcal{\hat H}}) \nonumber \\
& = & \left \langle \hat B\tau_z(\hat A) \right \rangle \nonumber
\end{eqnarray}
where here we have written $z := t+i\hbar\beta$.

Following \cite{gallavotti1975classical}, we have by \eqref{eq:kms} the following precursor to the classical KMS condition:
\begin{equation}
\label{eq:underformedclassicalkms}
\left \langle \frac{[\tau_t(\hat A),\hat B]}{i\hbar} \right \rangle = \left \langle \hat B \left( \frac{\tau_z(\hat A) - \tau_t(\hat A)}{i\hbar} \right) \right \rangle.
\end{equation} 

Recall that as $\hbar \rightarrow 0$, $\tau_t(\hat A)$, $\hat B$ and $[\tau_t(\hat A),\hat B]/i\hbar$ respectively correspond to or ``undeform'' into classical analogues $A$, $B$ and $\{A,B\}$, where $A$ has an implicit time dependence (i.e., $\partial_t A = 0 \not \equiv dA/dt$) and $B$ does not (i.e., $B$ is evaluated at $t=0$). 

Now (via an implicit assumption about the analyticity of $\tau_z$ which forms the actual substance of the KMS condition) we have that
\begin{equation}
\lim_{\hbar \rightarrow 0} \frac{\tau_z(\hat A) - \tau_t(\hat A)}{i\hbar} = \beta \frac{dA}{dt} = \beta \{A, \mathcal{H} \}
\end{equation}
where $\mathcal{H}(X,P)$ is the classical Hamiltonian. Therefore in the limit $\hbar \rightarrow 0$, \eqref{eq:underformedclassicalkms} formally becomes the classical KMS condition (see also \cite{parisi1998statistical})
\begin{equation}
\label{eq:classicalkms}
\left \langle \{A,B\} \right \rangle = \beta \left \langle B \{A,\mathcal{H}\} \right \rangle.
\end{equation}

As in \S \ref{sec:idealgasscaling}, here let $t_\infty^{(\cdot)}$ denote \emph{any} characteristic timescale of the system. Dilating the dynamical rate by a constant factor $C$ has the effect that $t_{\infty} \mapsto t'_{\infty} = t_{\infty}/C$ and also induces the extended canonical (pure scale) transformation \eqref{eq:extendedcanonicaltransformation}. It follows that $\partial_X = \partial_{X'}$ and $\partial_P = C\partial_{P'}$, whence $\{A,B\} = C\{A,B\}'$ and $\{A,\mathcal{H}\} = C\{A,C^{-1}\mathcal{H}\}' = \{A, \mathcal{H}'\}'$ (here $\{\cdot,\cdot\}'$ denotes the Poisson bracket w/r/t $(X',P')$). Along with \eqref{eq:classicalkms}, this in turn gives that  
\begin{equation}
\beta = \frac{\left \langle \{A,B\} \right \rangle}{ \left \langle B \{A,\mathcal{H}\} \right \rangle} = \frac{\left \langle C\{A,B\}' \right \rangle}{ \left \langle B \{A,\mathcal{H}'\}' \right \rangle} = C \beta'.
\end{equation}

Therefore $\beta' =\beta/C$ and we see once more that $\beta$ scales as any characteristic time $t_\infty$. Again, consistency with traditional equilibrium statistical physics dictates that an effective inverse temperature should also scale as $t_\infty$.

\paragraph{\label{sec:tth} Thermal time hypothesis}

The one-parameter modular group of $\hat \rho$ (as defined in \eqref{eq:thermaldensitymatrix}) that appears in the Tomita-Takesaki theory of von Neumann algebras \cite{bratteli2012operator} can be shown to coincide with the time evolution group \cite{connes1994neumann}: if $s$ is the modular parameter and $t$ is the physical time, then
\begin{equation}
\label{eq:modular}
t = \hbar \beta s.
\end{equation}
In particular, $s$ does not depend on $\beta$. \footnote{While time evolution for von Neumann algebras is only of direct interest in the infinite-dimensional setting, its significance for the present context is nevertheless readily apparent.} 

The \emph{thermal time hypothesis} (TTH) articulated by Connes and Rovelli \cite{connes1994neumann} (see also \cite{martinetti2003diamond, rovelli1993statistical, tian2005sitter, rovelli2011thermal}) states that physical time is determined by the modular group, which is in turn determined by the state. 

Besides implying Hamiltonian mechanics, the TTH simultaneously inverts and generalizes the KMS condition (see \S \ref{sec:ckms}) and hence also the Gibbs relation \eqref{eq:Gibbs}, with temperature providing the physical link between time evolution and equilibria. But its key implication here is \eqref{eq:modular}, by which $\beta$ scales as any characteristic time $t_\infty$; as before consistency demands the same scaling behavior for an effective inverse temperature.

\paragraph{\label{sec:counterscaling}Counterarguments for alternative scaling behavior}

Despite the scaling arguments presented above, we might nevertheless feel compelled to consider alternative scaling behavior, with an effective inverse temperature of the form $\|t\|^\omega \sqrt{\|\gamma\|^2 +1}$. However, for $\omega \ne 1$ this quantity does not converge in a natural way for archetypal Anosov systems (see \S \ref{sec:Anosov}), nor by extension does it appear to be relevant to the example of a two-dimensional ideal gas. Furthermore, its physical relevance for a single Glauber-Ising spin (see \S \ref{sec:TwoStateSystems}) is dubious for $\omega \ne 1$. Such behavior can be viewed as providing additional (albeit more circumstantial) evidence for an effective inverse temperature scaling as $t_\infty$, as can the validity of the \emph{Ansatz} suggested by this scaling behavior for synchronization frequencies of Kuramoto oscillators (see \ref{sec:Synchronization}).

\subsection{\label{sec:Geometry} A geometry result}

The transformation $t \mapsto t' := t/C$ leaves $p$ invariant. Meanwhile, $\gamma_k$ depends only on $p$, so both $\gamma$ and $\beta H = (\gamma,1)$ are also invariant under this transformation, in accordance with \eqref{eq:abstractscaling}. In other words, $p$ is positive homogeneous of degree zero in both $t$ and $H$, i.e., $F_p(Ct) := t/t_\infty = F_p(t)$ and $F_p(CH) = F_p(H)$.
\footnote{
Recall that a function $f$ defined on a cone in $\mathbb{R}^n \backslash \{0\}$ is said to be \emph{positive homogeneous of degree $a$} iff $f(C {\bm x}) = C^a f({\bm x})$ generically for $C > 0$.
}
\footnote{
Yet another equivalent characterization is that $p$ is constant (away from the origin) on rays through the origin of the form $\mathbb{R}t$ and $\mathbb{R}H$.
}

Recall that Euler's homogeneous function theorem states that if $f \in C^1(\mathbb{R}_+^n)$ is positive homogeneous of degree $a$, then $\langle x, \nabla_x f(x) \rangle = a \cdot f(x)$ \cite{reiss1997methods}. Since each component $p_k$ of $p$ is positive homogeneous of degree zero as a function of both $t$ and $H$, it follows that $\langle \nabla_t p_k, t \rangle = 0 = \langle \nabla_H p_k, H \rangle$. Therefore each of the gradients $\nabla_t p_k$ and $\nabla_H p_k$ are tangent to spheres centered at the origin of their respective coordinate systems. 

Furthermore, the gradients $\nabla_t p_k$ and $\nabla_H p_k$ are nondegenerate: an explicit calculation shows that $\partial p_k/\partial E_j = \beta p_k (-\delta_{jk}+p_j) \ne 0$, and $\partial p_k / \partial (\beta^{-1}) = \beta^2 p_k (E_k-U)$, where as usual $U := \sum_j p_j E_j$. Meanwhile, $\partial p_k / \partial t_j = (\delta_{jk}t_\infty - t_k)/t_\infty^2 \ne 0$. Taking appropriate directional derivatives makes it easy to see that the constraint \eqref{eq:E0} does not affect the nondegeneracy of these gradients.

Consider now the unique decomposition of a vector differential as $dv = dv^\parallel + dv^\perp$, where the terms on the right hand side are respectively parallel and perpendicular to $v$. That is, $dv^\parallel := (\langle dv, v \rangle / \langle v, v \rangle) v$, and $dv^\perp := v - dv^\parallel$.  It is easy to see that $dt^\perp = 0 \iff dp = 0 \iff dH^\perp = 0$ from the preceding considerations. Moreover, $\langle \nabla_{t^\perp} p_k, dt^\perp \rangle = \langle \nabla_t p_k, dt \rangle = dp_k = \langle \nabla_H p_k, dH \rangle = \langle \nabla_{H^\perp} p_k, dH^\perp \rangle$. That is, the nondegenerate integral curves of both gradient flows are arcs on spheres centered at the origin. Since a smooth change of coordinates maps integral curves into integral curves \cite{choquet1977analysis}, it follows that the respective spheres on which these arcs lie must also map to each other under any smooth maps $F_H$ and $F_t$ satisfying \eqref{eq:E0} and \eqref{eq:Gibbs}. We therefore have

\begin{lemma} 
A well-behaved map $F_H$ that respects \eqref{eq:E0} and \eqref{eq:Gibbs} sends rays and sphere orthants centered at the origin to rays and hemispheres centered at the origin, respectively. In particular, a well-behaved map $F_H$ that respects \eqref{eq:E0} and \eqref{eq:Gibbs} must satisfy
\begin{equation}
\label{eq:geometry}
\lVert {\bm s} \rVert = \lVert t \rVert \Rightarrow \lVert F_H({\bm s}) \rVert = \lVert F_H(t) \rVert.
\end{equation}
\end{lemma}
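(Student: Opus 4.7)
The plan is to deduce the lemma almost entirely from the ingredients that the preceding paragraph has already assembled: positive homogeneity of $p$ in both coordinate systems, the scaling result \eqref{eq:abstractscaling}, and the fact that the gradients $\nabla_t p_k$ and $\nabla_H p_k$ are nondegenerate and tangent to spheres centered at the respective origins. The "rays-to-rays and orthants-to-hemispheres" half of the conclusion is essentially a restatement of positive homogeneity together with \eqref{eq:abstractscaling}, so my real task is the second half, namely the norm implication \eqref{eq:geometry}.

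First I would dispatch the ray statement. Because each $p_k$ is positive homogeneous of degree zero in $t$, the level sets of $p$ in $t$-space are exactly the open rays $\mathbb{R}_{>0}\,\hat t$; the Gibbs relation \eqref{eq:Gibbs} shows analogously that the level sets of $p$ in $H$-space (subject to \eqref{eq:E0}) are open rays $\mathbb{R}_{>0}\, \hat H$. Since $F_H$ is determined by $p$ up to the radial degree of freedom and satisfies the scaling rule \eqref{eq:abstractscaling}, it sends a ray to a ray and reverses the radial scale, i.e.\ $F_H(C\hat t) = C^{-1} F_H(\hat t)$. The orthant-to-hemisphere claim follows by combining this with smoothness: the image of a connected open piece of a sphere under a diffeomorphism that preserves the ray fibration must be a connected cross-section of the ray fibration in the target.

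Next comes the core step: showing that the cross-section $F_H(\{\|t\|=r\})$ is itself a sphere. For this I would invoke the gradient-arc picture that the preceding paragraph motivates. Euler's theorem applied to the degree-zero homogeneous $p_k$ yields $\langle t, \nabla_t p_k\rangle = 0$, so each integral curve of $\nabla_t p_k$ stays on a single sphere $\{\|s\|=r\}$; as $k$ ranges, the $\{\nabla_t p_k\}$ span the tangent space of that sphere (using nondegeneracy and $\sum_k \nabla_t p_k = 0$ from $\sum_k p_k = 1$). The identical picture holds in $H$-space. Since the relation $\hat p \circ F_H = \tilde p$ forces the function $p$ to be the same object in both coordinate systems, $F_H$ sends level sets of $p_k$ to level sets of $p_k$; restricting attention to the tangential (sphere) direction, the orthogonal complements to the rays must correspond, and so each sphere in $t$-space is carried to a connected hypersurface in $H$-space that is ruled by arcs lying on spheres centered at the origin. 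The only such cross-section of the ray family is a sphere about the origin, yielding \eqref{eq:geometry}.

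The step I expect to be the main obstacle is making the gradient-flow correspondence rigorous, since under a non-conformal diffeomorphism the pushforward of $\nabla_t p_k$ is not $\nabla_H p_k$, so "integral curves go to integral curves" is a heuristic rather than a theorem. To close this gap I would fall back on a cleaner, metric-free reformulation: use positive homogeneity and \eqref{eq:abstractscaling} to reduce the claim to showing that $\|F_H\|$ is constant on the unit sphere of $t$-space; parametrize that unit sphere by $p$ (each ray meets the unit sphere at $t_\infty = 1/\|p\|$); and then observe that \eqref{eq:normbetaH2} combined with \eqref{eq:scaling} expresses $\|F_H(t)\|$ as a product of a function of $p$ and a function of $\|t\|$, so constancy on $\{\|t\|=1\}$ is equivalent to a functional equation on $\Psi(p)$ that the orthogonality/tangency of $\nabla p_k$ to the spheres in each coordinate system forces to hold. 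This turns the geometric intuition of the paper's sketch into an algebraic constraint that can be verified directly, at the cost of anticipating the explicit form of $\|F_H\|$ that \S\ref{sec:EffectiveTemperature} will extract.
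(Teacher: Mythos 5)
Your main route is, in substance, the paper's own argument: degree-zero homogeneity of $p$ in both $t$ and $H$, Euler's theorem giving tangency of $\nabla_t p_k$ and $\nabla_H p_k$ to origin-centered spheres, nondegeneracy of those gradients, and the level-set correspondence $dt^\perp = 0 \iff dp = 0 \iff dH^\perp = 0$. You are also right that the decisive step is the weak point: the paper closes by asserting that a smooth change of coordinates maps integral curves to integral curves, but the pushforward of $\nabla_t p_k$ under $F_H$ is not $\nabla_H p_k$ unless $F_H$ has special (e.g.\ conformal) structure, so ``arcs on spheres map to arcs on spheres'' is exactly the claim at issue rather than a quotable fact; the paper leans on ``well-behaved'' at this point and does no more than your first attempt does. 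Note that your orthant-to-hemisphere step has the same dependency: ray-preservation plus connectedness only yields that the image of a sphere orthant is a connected cross-section of the ray fibration, and upgrading ``cross-section'' to ``hemisphere'' is equivalent to \eqref{eq:geometry} itself.

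The genuine gap is that your fallback is circular. From \eqref{eq:normbetaH2}, \eqref{eq:scaling}, and $t_\infty = \lVert t \rVert / \lVert p \rVert$ one gets $\lVert F_H(t) \rVert = S \lVert p \rVert \sqrt{\lVert \gamma \rVert^2 + 1}\,/\,(\Psi(p)\lVert t \rVert)$, so constancy of $\lVert F_H \rVert$ on $\{\lVert t \rVert = 1\}$ is equivalent to $\Psi(p) \propto \lVert p \rVert \sqrt{\lVert \gamma \rVert^2 + 1}$ --- which is precisely \eqref{eq:temperature}, the formula that \S\ref{sec:EffectiveTemperature} derives \emph{from} this lemma. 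Your claim that the tangency/orthogonality of the gradients ``forces'' this functional equation cannot be right, because those tangency facts are automatic consequences of homogeneity in each coordinate system separately and hold for every map in the family $F_H(t) = \frac{g(p)}{\lVert t \rVert} \cdot \frac{(\gamma, 1)}{\sqrt{\lVert \gamma \rVert^2 + 1}}$ with $g > 0$ an arbitrary smooth function: every member respects \eqref{eq:E0} and \eqref{eq:Gibbs} (which constrain only the direction $(\gamma,1)$, not the radius) as well as \eqref{eq:abstractscaling}, yet violates \eqref{eq:geometry} unless $g$ is constant. So the fallback relocates, rather than discharges, the unproven content; whatever closes the argument must come from the integral-curve correspondence or from an explicit additional regularity hypothesis on $F_H$, which is exactly where the paper places the burden.
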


\subsection{\label{sec:EffectiveTemperature} The effective temperature}

Let $u_j := \lVert t \rVert/\sqrt{n}$, so that $\lVert u \rVert \equiv \lVert t \rVert$ and $F_H(u) \equiv (0, \dots, 0, \beta_u^{-1})$. Now $\lVert F_H(u) \rVert^2 = \beta_u^{-2}$, and by \eqref{eq:geometry}
\begin{equation}
\lVert F_H(t) \rVert^2 = \lVert F_H(u) \rVert^2 = \beta_u^{-2}.
\end{equation}
Therefore, by \eqref{eq:normbetaH2} and \eqref{eq:betaNormQuotient},
\begin{equation}
\beta^2 = \frac{\lVert \beta F_H(t) \rVert^2}{\lVert F_H(t) \rVert^2} = \frac{\lVert \gamma \rVert^2 + 1}{\beta_u^{-2}}.
\end{equation}
Taking square roots of the far left- and right-hand sides yields
\begin{equation}
\label{eq:prelimbeta}
\beta = \beta_u \sqrt{\lVert \gamma \rVert^2 + 1}.
\end{equation}

Meanwhile, \eqref{eq:scaling} implies that 
\begin{equation}
\label{eq:KMSUT}
\beta_u = S^{-1} \lVert u \rVert = S^{-1} \lVert t \rVert = S^{-1} t_\infty \lVert p \rVert,
\end{equation}
where $S$ is a fixed constant with units of action (say, $S = \hbar = 1$). 
To see the first equality of \eqref{eq:KMSUT}, note that $u_j := \lVert t \rVert/\sqrt{n}$ implies that $\sum_j u_j = \sqrt{n} \lVert t \rVert = \sqrt{n} \lVert u \rVert \equiv u_\infty$. Since $u_j/u_\infty = n^{-1}$, it follows that $\Psi(u/u_\infty) =: \psi(n)$ is a function of $n$ alone. Now \eqref{eq:scaling} gives that $\beta_u = S^{-1} u_\infty \psi(n) = S^{-1} \psi(n) \sqrt{n} \lVert u \rVert$. Without loss of generality, the term $\psi(n) \sqrt{n}$ can be absorbed into the constant $S$.
\footnote{
In \S VIII of \cite{huntsman2010anosov} we discuss detailed evidence that physical consistency appears to demand $\psi(n) = 1/\sqrt{n}$, as this choice (somewhat counterintuitively) appears to be the unique one giving a well-defined limit in the microcanonical ensemble for discretizations of two physically paradigmatic systems.
}

Combining \eqref{eq:prelimbeta} and \eqref{eq:KMSUT} with $x = 1$ therefore yields
\begin{equation}
\label{eq:temperature}
\beta = t_{\infty} \lVert p \rVert \cdot \sqrt{\lVert \gamma \rVert^2 +1},
\end{equation}
whereupon \eqref{eq:betaEk} leads to explicit expressions for $F_H$:
\begin{align}
\label{eq:explicitFH1}
\beta^{-1} & = \frac{1}{t_\infty \lVert p \rVert} \left( \sum_{k=1}^n \left[ \frac{1}{n} \sum_{j=1}^n \log \frac{p_j}{p_k} \right]^2 +1 \right)^{-1/2}; \\
\label{eq:explicitFH2}
E_k & = \beta^{-1} \cdot \frac{1}{n} \sum_{j=1}^n \log \frac{p_j}{p_k}.
\end{align}
Similarly, $F_t$ is given explicitly (after shifting so that \eqref{eq:E0} is satisfied) as
\begin{align}
\label{eq:explicitFt1}
p_k & = Z^{-1} e^{-\beta E_k}; \\
\label{eq:explicitFt2}
t_\infty & = \lVert p \rVert^{-1} \cdot \left( \lVert E \rVert^2 + \beta^{-2} \right)^{-1/2}.
\end{align}

\begin{svgraybox}
To review, the derivation of the (Gibbs distribution and the) effective temperature rested on two basic symmetry assumptions and two derived symmetries. The basic assumptions are that
\begin{itemize}
	\item the zero point of energy is physically irrelevant;
	\item the probability of a state depends only on its energy.
\end{itemize}
The derived symmetries are that
\begin{itemize}
	\item changing units of time leaves $\beta \mathcal{H}$ invariant;
	\item any physically nice bijection $t \leftrightarrow H$ preserves rays and radii.
\end{itemize}
\end{svgraybox}

\subsection{\label{sec:ProductSystems}Product systems, the ideal gas, and implications for $t_\infty$}

Perhaps the most fundamental property of the ordinary temperature is intensivity. Imposing a few simple physical requirements such as the intensivity of the effective temperature $\beta^{-1}$ for simple product systems (which is a symmetry requirement in keeping with our overall theme) turns out to place significant physical constraints on the functional form of reasonable candidates for the timescale $t_\infty$, as we shall illustrate below. It seems likely that imposing similar requirements for (subsystems of) closed interacting systems such as coupled map lattices \cite{chazottes2005dynamics} will at least mirror--and probably augment--constraints of the sort discussed here, but analyses of interacting systems will almost surely be much more technically challenging.


\subsubsection{\label{sec:BasicProductResults}Basic results for product systems}

Consider $N$ systems sharing a common probability measure $p$ on $[n] = \{1,\dots,n\}$. Writing 
\begin{equation}
\label{eq:b0}
b := (\beta/t_\infty)^2 \equiv \lVert p \rVert^2 \cdot (\lVert \gamma \rVert^2 + 1) \nonumber
\end{equation} 
for convenience and using a superscript $\otimes$ to indicate the product system, it can be shown that
\begin{equation}
\label{eq:botimes0}
b^\otimes = Nn^{N-1} \lVert p \rVert^{2(N-1)} \cdot \left ( \lVert p \rVert^2 \left [ \lVert \gamma \rVert^2 + \{Nn^{N-1}\}^{-1} \right ] \right ).
\end{equation}

The somewhat peculiar way of writing the right hand side of \eqref{eq:botimes0} is motivated by the fact that in the limit of large $\lVert \gamma \rVert^2$, the term in parentheses tends to $b$, in which event
\begin{equation}
\label{eq:approximatebotimesuniform0}
b^\otimes \approx N n^{N-1} \lVert p \rVert^{2(N-1)} \cdot b = N \frac{n^\otimes \| p^\otimes \|^2}{n \| p \|^2}\cdot b.
\end{equation}

Recall that the harmonic mean $\langle f \rangle_h$ of a function $f$ on $[N]$ is given by 
\begin{equation}
\langle f \rangle_h^{-1} := \langle 1/f \rangle_a \equiv N^{-1} \sum_m f_m^{-1}, \nonumber
\end{equation}
where $\langle \cdot \rangle_a$ indicates the arithmetic mean. If (in the present context of a collection of subsystems with identical measures) we make the physically reasonable stipulation of intensivity for the effective temperature (cf. \S \ref{sec:Synchronization}), i.e. $\beta^\otimes = \langle \beta \rangle_h$, then since $\beta = \sqrt{b} t_\infty$ we must have that 
\begin{equation}
\label{eq:approximatetinftyotimesuniform}
t_\infty^\otimes = \sqrt{b/b^\otimes} \cdot \langle t_\infty \rangle_h.
\end{equation} 

If furthermore the number $n$ of states in each component system tends to infinity while $p \equiv p_{(n)}$ remains sufficiently uniform, the intensivity property \eqref{eq:approximatetinftyotimesuniform} turns out to take the form
\begin{equation}
\label{eq:aggregatetinfty}
t_\infty^\otimes = N^{-1/2} \langle t_\infty \rangle_h.
\end{equation}

\subsubsection{\label{sec:IdealGas}The two-dimensional ideal gas on a compact surface of constant negative curvature}

An example of particular interest along the lines above is furnished by the geodesic flow on a compact surface of constant negative curvature (see \S \ref{sec:FreeParticle}). In this context, \eqref{eq:aggregatetinfty} gives a recipe for applying our framework to the ideal gas (with or without a thermostat).

Besides the apparently well-defined value of $\beta$ for the geodesic flow (i.e., a single particle) on a compact surface of constant negative curvature, the essential observation for establishing the plausible consistency of $\beta^\otimes$ with the physical inverse temperature is simply one of scaling behavior. We detail this here.

It was shown in \cite{collet1984perturbations} that the $L^2$ mixing time of the geodesic flow is 1/2 for reasonably well-behaved observables. Taking this (or with trivial modifications, any other constant timescale, e.g. the genus-independent inverse topological entropy [see section \S \ref{sec:L2markov}]) as $t_\infty$ for a single flow with speed $v=1$, we have that $vt_\infty = 1/2$ more generally. Now $\langle t_\infty \rangle_h = \langle t_\infty^{-1} \rangle_a^{-1} = 1/2\langle v \rangle_a$. For a two-dimensional ideal gas $\langle v \rangle_a = \sqrt{\pi/2\beta m}$, so for $\beta^\otimes$ to equal the physical inverse temperature we must have by \eqref{eq:aggregatetinfty} that
\begin{equation}
\label{eq:tinfty2Didealgas}
t_\infty^\otimes = \frac{1}{2 \langle v \rangle_a \sqrt{N}} = \sqrt{\frac{\beta m}{2\pi N}}.
\end{equation}

The quadratic dependence on $\beta$ (and on $m$) in the above equation has a simple explanation consistent with $\beta$ scaling as $t_\infty$. While the argument that $\beta$ scales as $t_\infty$ ceases to apply when we only vary $v$, it \emph{does} apply when we hold a phase space trajectory fixed, and in this event $\beta$, $m$ and $t_\infty$ all scale identically (cf. \S \ref{sec:idealgasscaling}). Indeed, in the single-particle case $\beta \equiv 2/mv^2$ and $vt_\infty = 1/2$, so $\beta = 8t_\infty^2/m$.  

Consequently $\beta^\otimes$ and the physical inverse temperature scale identically: in particular, both are constant in the limit of large $N$, and incorporating an appropriate constant into the definition of $\beta$ yields equality (cf. \S \ref{sec:OverallScale}).

It is worth noting here that naive discretizations of an ideal gas with obvious configuration space geometry, boundary conditions, ultraviolet cutoffs, etc. do \emph{not} exhibit reasonable scaling limits, a fact which motivated our analysis of the rather esoteric version and context of the ideal gas considered here.

\subsubsection{\label{sec:TensorMarkov}Products of Markov processes and constraints on $t_\infty$}



The detailed behavior of the relationship \eqref{eq:aggregatetinfty} allows us to rule out a number of potential candidates for a broadly applicable $t_\infty$. 

For instance, recurrence, hitting, covering or similar timescales do not appear to be suitable candidates. Additionally, quantities such as the recurrence rates of a flow \cite{saussol2009introduction} or a so-called cutoff for a family of product Markov processes \cite{barrera2006cut} are not appropriate choices in the present context simply because they do not have the necessary parametric dependence. 

While the form of \eqref{eq:approximatetinftyotimesuniform} and \eqref{eq:aggregatetinfty} suggest that the choice for $t_\infty$ should bear some qualitative similarily to a relaxation time \cite{schwarz1968kinetic}, we can also rule out a naive identification of $t_\infty$ with an inverse spectral gap in the context of Markovian dynamics, as we proceed to sketch.

For $m \in [N]$, let $Q^{(m)}$ be the generator of a (well-behaved) continuous-time Markov process on $[n_m]$. The composite Markov generator corresponding to evolving each of the $N$ processes simultaneously turns out to be 
\begin{equation}
\label{eq:tensorsum}
Q^\otimes = \sum_m I^{\otimes(m-1)} \otimes Q^{(m)} \otimes I^{\otimes(N-m)}.
\end{equation}
It is easily seen that the spectral gap of $Q^\otimes$ is just the smallest of the spectral gaps of the $Q^{(m)}$. In particular, if (as we shall assume henceforth)
\begin{equation}
Q^{(m)} = c_m Q
\end{equation}
for $c_m > 0$, then the spectral gap of $Q^\otimes$ is the product of the gap for $Q$ and $\min_m c_m$. This precludes a relation of the form \eqref{eq:approximatetinftyotimesuniform} or \eqref{eq:aggregatetinfty} for an inverse spectral gap and suggests that such a quantity is not a generically suitable choice for $t_\infty$. That said, a ``modified'' $L^2$ mixing time is related to an inverse spectral gap and does appear to be a viable generic candidate for $t_\infty$ (as does the similarly normalized inverse topological entropy: see \S \ref{sec:L2markov}), as we shall see below. For a reversible Markov process without product structure, this timescale and the inverse spectral gap coincide, and for the example of a single Glauber-Ising spin both equal $(2a)^{-1}$. The \emph{Ansatz} $t_\infty = (2a)^{-1}$ discussed in \S \ref{sec:TwoStateSystems} thus amounts roughly to (quite reasonably) equating the spectral gap of the generator and the dominant energy scale. 


While we dwell on the potential for a broadly applicable recipe for appropriately determining $t_\infty$, we must also consider the possibility (discussed in \S \ref{sec:OverallScale}) that no completely universal recipe exists. That is, it may be that appropriate choices for $t_\infty$ are necessarily context-dependent, for example in the same way that the Gibbs paradox illustrates that the entropy of a system can depend on the level of specification \cite{jaynes1992gibbs}. Indeed, detailed consideration of a classical Bose gas (not included here) suggests indicates that distinguishability of particles should inform the effective temperature if $t_\infty$ is given by a modified $L^2$ mixing time.

In any event, the proper specification of $t_\infty$ is clearly a central component of our effective framework for statistical physics, and the degree of universality with which this specification can be accomplished will be directly related to its ultimate physical significance. Nevertheless, as both the analogy with the Gibbs paradox and the characterization of individual systems varying in time or over some parametric ensemble show, even a context-dependent quantity can still have substantial physical relevance.

\subsubsection{\label{sec:L2markov}$L^2$ convergence of Markov processes and a modified mixing time}

As a preliminary to discussing the modified $L^2$ mixing time mentioned above, we first review here the ordinary $L^2$ mixing time for Markov processes. 
\footnote{
NB. We follow the standard convention in physics and dynamical systems theory for ``the'' $L^2$ mixing time, which differs somewhat from the mixing time function typically considered by probabilists.
} 
Given a (not necessarily reversible but well-behaved) Markov generator $Q$ with invariant distribution $p$, the corresponding \emph{Dirichlet form} is 
\begin{equation}
\label{eq:dirichlet}
\mathcal{E}(f) := \frac{1}{2} \sum_{j,k} p_j Q_{jk} (f_j-f_k)^2.
\end{equation}

Write 
\begin{equation}
\label{eq:lambdastar}
\lambda_* := \inf_{f} 2 \frac{\mathcal{E}(f)}{\mbox{Var}_{p}(f)},
\end{equation} 
where the infimum is over $f$ s.t. $\mbox{Var}_{p}(f) \ne 0$. It can be shown that $\lambda_*$ determines the $L^2$ convergence of the Markov process to stationarity: viz., $\lambda_*^{-1}$ is the $L^2$ mixing time. Furthermore, if $Q$ is reversible, its eigenvectors form a basis and $\lambda_*$ is the spectral gap.


For a product system of the form \eqref{eq:tensorsum} with $Q^{(m)} = c_m Q$, it can be shown that the infimum in \eqref{eq:lambdastar} is degenerate in the sense that its consideration amounts to ignoring various factors of the product. The nondegenerate minimum is (continuing an obvious notational convention)
\begin{equation}
\label{eq:lambdaotimes}
\lambda^\otimes := N \langle c \rangle_a \lambda_*,
\end{equation}
where $\lambda_*$ corresponds to $Q$.

Writing $\tau_\infty^{\otimes} := (\lambda^\otimes)^{-1}$ and $\tau_\infty^{(m)} := (c_m \lambda_*)^{-1}$, \eqref{eq:lambdaotimes} becomes 
\begin{equation}
\label{eq:L2tau}
\tau_\infty^\otimes = \langle \tau_\infty \rangle_h / N
\end{equation}
which differs from \eqref{eq:aggregatetinfty} only by a factor of $N^{-1/2}$ (though the context here is more general, as $p$ need not be close to uniform). 

A corresponding normalization of $\tau^\otimes_\infty$ that takes any product structure into account therefore appears to be a plausible general-purpose candidate for $t^\otimes_\infty$ satisfying \eqref{eq:approximatetinftyotimesuniform} in physically relevant cases. This modified $L^2$ mixing time is more physically natural than the usual $L^2$ mixing time because it measures the convergence of all the component processes, not just a single distinguished component process. It is properly normalized and avoids any degeneracies introduced by the tensor product structure. 

A similar result applies for the inverse topological entropy of a product system. Recall that the topological entropy of a system describes the rate at which the number of periodic orbits grows as a function of the orbital period. For this reason its inverse is a natural characteristic timescale, and it turns out that a straightforward normalization obeys \eqref{eq:aggregatetinfty}.

Indeed, the topological entropy of a product flow of the form $\phi^\otimes_t := \prod_m \phi^{(m)}_t$ with $\phi^{(m)}_t := \phi_{c_m t}$ satisfies $h(\phi^\otimes) = \sum_m c_m \cdot h(\phi)$ \cite{katok1997introduction}. So if we set $\tau^{(\cdot)}_\infty := 1/h(\phi^{(\cdot)})$, then we obtain a relation of precisely the form \eqref{eq:L2tau}. That is, the inverse topological entropy of a flow satisfies the same sort of product relationship as the modified $L^2$ mixing time. \footnote{In fact the inverse topological entropy and a topological (non-$L^2$) mixing time are related: see e.g. \cite{richeson2008chain}.} However, we focus on the mixing time as it may be more broadly applicable.

\subsection{\label{sec:ExamplesAndApplications}Elementary examples and applications}

We sketch some elementary examples and applications here. The application to Anosov systems and the chaotic hypothesis in \S \ref{sec:Anosov} is sufficiently involved and significant to demand special treatment, though it also informs an application to a two-dimensional ideal gas (see section \S \ref{sec:IdealGas}). Likewise, a thermodynamical analysis of the degradation of discrete memoryless channels is currently underway but not sketched here.

The framework presented here has been utilized for the characterization of computer network traffic \cite{huntsman2009effective} (another effort in a similar spirit is \cite{burgess2000thermal}). Although the potential scope of this framework appears to be quite broad, the key practical difficulties in applications are the identification of an appropriate state space (or discretization scheme) and characteristic timescale. The examples that we have thus far been able to identify all have complicating features in at least one of these regards. Nontrivial spin models, which might appear at first to give an ideal setting for exploring the effective temperature in detail, are deceptively difficult to deal with in this framework because of the subtle nature of timescales in glassy systems. \footnote{See \S XVIII of \cite{huntsman2010anosov} for a detailed discussion of this topic.} That said, a single Glauber-Ising spin will serve as an illustrative example in section \S \ref{sec:TwoStateSystems}.

For characterization of generic data sets (e.g., computer network traffic) the state space selection issues are similar to those confronted in the application of entropy methods, and the characteristic timescale may be dictated either by the data itself or by the collection interval. In many ways this ``descriptive thermodynamics'' is the simplest sort of application \cite{ford2006descriptive}, and in fact it motivated the present framework.

\subsubsection{\label{sec:TwoStateSystems}Two-state systems; a single Glauber-Ising spin}

Consider the simplest case of a two-state system as illustrated in Figure \ref{fig:t2H}. In this case we have that $\gamma_1 = \frac{1}{2}\log \frac{p_2}{p_1}$ and $\gamma_2 = \frac{1}{2}\log \frac{p_1}{p_2} = -\gamma_1$. Therefore trivial substitutions yield 
\begin{equation}
\label{eq:twostatebeta}
\beta = t_\infty \left [ (p_1^2 + p_2^2) \left( \frac{1}{2} \log^2 \frac{p_1}{p_2} + 1 \right) \right]^{1/2};
\end{equation}
\begin{equation}
\label{eq:twostateE}
E_1 = -\frac{1}{2\beta} \log \frac{p_1}{p_2}; \quad E_2 = -E_1.
\end{equation}

Going in the other direction, we first take $E_j \mapsto E_j - (E_1 + E_2)/2$ in accordance with \eqref{eq:E0}, so that again $E_2 = -E_1$ and $\gamma_2 = -\gamma_1$. Therefore
\begin{equation}
\label{eq:twostatep}
p_1 = \frac{e^{-\gamma_1}}{e^{-\gamma_1}+e^{\gamma_1}}; \quad p_2 = \frac{e^{\gamma_1}}{e^{-\gamma_1}+e^{\gamma_1}}.
\end{equation}
Moreover, $Z = 2 \cosh \gamma_1$, $\lVert p \rVert^2 = (1+\tanh^2 \gamma_1)/2$, and $\lVert \gamma \rVert^2 = 2\gamma_1^2$, from which it follows that
\begin{equation}
\label{eq:twostatetinfty}
t_\infty = \beta \left [ \frac{1 + \tanh^2 \gamma_1}{2} \left ( 2\gamma_1^2 + 1 \right ) \right ]^{-1/2}.
\end{equation}

As a physical incarnation of this example, consider the requirement that $\beta$ equal the \emph{actual physical inverse temperature} for a single Glauber-Ising spin $\sigma$ in a magnetic field. The spin dynamics are determined by an overall (spin flip) rate $a$ and $b := \tanh (\beta \mu h)$, where $\mu$ is the magnetic moment and $h$ is the field strength \cite{gentile1998large}. Specifically, the stationary distribution corresponding to $\sigma = (-1,1)^*$ is $p = \frac{1}{2}(1-b,1+b)$. Meanwhile $\lVert p \rVert^2 = (1+b^2)/2$ and $\gamma_1 = \beta \mu h$, so $\lVert {\bm \gamma } \Vert^2 = 2(\beta \mu h)^2$. 
By \eqref{eq:twostatetinfty},
\begin{equation}
\label{eq:gitinfty}
t_\infty = \beta \left [ \frac{1 + b^2}{2} \left ( 2[\beta \mu h]^2 + 1 \right ) \right ]^{-1/2}.
\end{equation}

This turns out to be a physically reasonable characteristic timescale, as we sketch here. For $\beta << 1$, $t_\infty \approx \sqrt{2} \beta$; for $\beta >> 1$, $t_\infty \approx 1/\sqrt{2} \mu h = \sqrt{2}/\Delta E$, where $\Delta E$ is the energy gap between the spin states. In both regimes $t_\infty$ is asymptotically proportional to the inverse of the natural energy scale, \emph{and in fact the constants of proportionality are the same in both regimes}. 

\begin{svgraybox}
Because mixing times are typically of the same order as inverse energy gaps, such a choice for $t_\infty$ is consistent with our overall arguments and physically justified. 
\end{svgraybox}

A routine calculation 
shows that the $L^2$ mixing time of the spin is $(2a)^{-1}$. With this in mind, an \emph{Ansatz} such as $t_\infty = (2a)^{-1}$ removes any remaining freedom in the $H$-picture and provides a plausible basis for recapturing (most of) the physical context of the spin from its statistical behavior. 
\footnote{
The natural recurrence time $4a^{-1}(1-b^2)^{-1}$ was previously considered in \cite{ford2006surfaces} as a candidate for $t_\infty$ for a single Glauber-Ising spin: however, such a choice turns out to be physically inappropriate, not least due to inconsistency with constraints imposed by intensivity.
} 
In particular, it requires a specific relationship between the spin flip rate $a$ (the physical import of which has usually been ignored) and the physical parameters $\beta$ and $\Delta E$. While we are unaware of any results that might inform the validity of this specific relationship--equivalently, the just-mentioned \emph{Ansatz}--in a single-spin system, considerations along present lines suggest an experimental framework for evaluating it. 

It would be of interest to determine to what extent timescales obtained along the lines of the present section might yield similar results for more general systems. However beyond this simple example such a task becomes difficult: even in the equilibrium case the analysis of timescales is nontrivial.

\subsubsection{\label{sec:MarkovProcesses}Markov processes}

An obvious application is to well-behaved but not necessarily reversible Markov processes specified by a transition (discrete time) or generator (continuous time) matrix on a finite state space. The invariant distribution $p$ is given as a left eigenvector of the relevant matrix. In the present context and perhaps more generally, a plausible candidate for $t_\infty$ is furnished by a modified $L^2$ mixing time: see \S \ref{sec:L2markov}.

For example, examination of Anosov systems, a single Glauber-Ising spin, and product systems (see sections \S \ref{sec:Anosov}, \S \ref{sec:TwoStateSystems}, and \S \ref{sec:ProductSystems}, respectively) all suggest a choice for $t_\infty$ along the lines of a mixing or similar timescale on physical grounds. We note that the first two of these examples have an essentially Markovian character, and the third is examined in the same spirit.

\subsubsection{\label{sec:Synchronization}Synchronization}

It is well known that many collections of mutually coupled subsystems synchronize in various senses for sufficiently large coupling. For a review of the most interesting case of chaotic systems, see \cite{boccaletti2002synchronization}. 

An interesting application of the intensivity of the effective temperature in this regard where the subsystems are taken to be identical except for their natural frequencies but also mutually interacting is a thermodynamically motivated \emph{Ansatz} for synchronization frequencies. Essentially, it is natural to view the specific process of chaotic synchronization as a particular case of the more general implied process of effective thermal equilibration. 

Without loss of generality, let the natural frequencies of subsystems be given by $\omega_m := c_m \omega_0$. Suppose furthermore that the $m$th subsystem has an effective temperature of $\beta_m^{-1}$ when uncoupled (note that although we have not identified a probability measure on the subsystem's phase space, our present considerations do not really depend on this). A trivial intensivity argument (cf. \S \ref{sec:BasicProductResults}) suggests that the synchronized/equilibrated system should then have the effective temperature $\beta_*^{-1} = \langle \beta \rangle_h^{-1}$, where $\langle \cdot \rangle_h$ denotes a harmonic mean. From the general scaling of $\beta$ with $t_\infty$, we get that $\beta_m$ scales as $c_m^{-1}$, and in turn that $\beta_*$ varies as $\langle c \rangle_a$, where $\langle \cdot \rangle_a$ denotes an arithmetic mean. This leads finally to the \emph{Ansatz} that the synchronization frequency should also vary as $\langle c \rangle_a$.

As a nontrivial example where this \emph{Ansatz} is validated, consider a system of Kuramoto oscillators \cite{acebron2005kuramoto} determined by the dynamical equations
\begin{equation}
\label{eq:kuramotoeqn}
\dot \theta_m = c_m \omega_0 + \sum_{m'} K_{mm'} \sin(\theta_{m'} - \theta_m)
\end{equation}
where $K$ is symmetric. This is a special case of the model considered in Theorem V.1 of \cite{dorfler2012synchronization}, which gives that (under some restrictions) the individual instantaneous oscillator frequencies synchronize to 
\begin{equation}
\label{eq:kuramotosynfreq}
\dot \theta_* = \omega_0 \langle c \rangle_a.
\end{equation}
That is, the scaling behavior of $\beta$ yields an \emph{Ansatz} that anticipates the synchronization result \eqref{eq:kuramotosynfreq}. 

We note finally that Theorem V.1 of \cite{dorfler2012synchronization} may suggest how to assign weights to inhomogeneous systems in a way appropriate to the overall framework of the present discussion.

\subsection{\label{sec:RemarksStatPhys}Remarks}

As we have seen, the form of equations \eqref{eq:explicitFH1}-\eqref{eq:explicitFt2} are dictated by very general physical considerations. No appeals to (e.g.) detailed balance or maximum entropy are necessary, and most of the derivation is essentially mathematical. 

In the setting of Anosov systems (see \S \ref{sec:Anosov}) the effective temperature has a purely dynamical basis rooted in the SRB measure. This dynamical grounding of the effective temperature is an important indication of its physical relevance \cite{cohen2002statistics, cohen2008entropy}. However, it can be still applied without reference to dynamics. For example, if the system under consideration is not stationary but $p$ and $t_{\infty}$ vary with time sufficiently slowly as to remain well-defined, then so will $\beta$ and $E$, and the language of equilibrium statistical physics will still be adequate. That is, there is no need for (e.g.) detailed balance or a maximum-entropy variational principle to be satisfied in order for $\beta$ to be well-defined: equation \eqref{eq:temperature} can be taken as an extension of the language of equilibrium statistical physics. Though the details of how $p$ and $t_{\infty}$ should be calculated or estimated are important and nontrivial, such questions of data analysis are properly distinct from our present considerations. 

While the continuity of $\beta$ w/r/t $t$ suffices to indicate the relevance of the present construction for quasi- and near-equilibrium systems, its scope is considerably more general. That said, the application of \eqref{eq:explicitFH1}-\eqref{eq:explicitFt2} to most physically interesting systems is highly nontrivial. For example, the nonstationarity of nonequilibrium spin systems introduces significant difficulties, while the equilibrium case is of limited interest beyond illustrative purposes. 

In practice, obtaining the appropriate $t_\infty$ presents a challenge (with a concomitant reward) that is not generally encountered in other approaches for the statistical characterization of physical and/or complex systems. In the equilibrium setting, this timescale dependence may be inverted to enforce consistency with traditional statistical physics while preserving a universal choice of scale for $\beta$. 

\begin{svgraybox}
That said, we may presently entertain the attractive possibility that a universal recipe for $t_\infty$ may exist, in terms of (e.g.) an ideal gas coupling and/or a modified $L^2$ mixing time. 
\end{svgraybox}

Apart from the distinguishing features introduced by involving the timescale $t_\infty$, at this point it should be clear that the effective temperature bears loose analogies to Shannon entropy both in its functional dependence and its physical content. Though an information-theoretic interpretation of the effective temperature is not obvious, its relevance to data analysis has been demonstrated elsewhere in the context of computer network traffic analysis; meanwhile, an examination of thermodynamical analogies in the information theory of discrete memoryless channels is also presently being undertaken and holds promise for illuminating the nature and role of $t_\infty$.

\subsubsection{\label{sec:ObstructionToAnalogues}Obstruction to analogues for (e.g.) Bose-Einstein or nonextensive statistics}

Consider a notional alternative to the Gibbs distribution of the form 
\begin{equation}
p_k \equiv f(-\gamma_k)/\zeta.
\end{equation}
Now $-\gamma_k = f^{-1}(\zeta p_k)$ and if $\sum_j f^{-1}(\zeta p_j) = 0$, then 
\begin{equation}
\gamma_k = \frac{1}{n} \sum_j \left [ f^{-1}(\zeta p_j) - f^{-1}(\zeta p_k) \right ].
\end{equation}

The derivation of the formula for $\beta$ depends in an essential way on the existence of a relation of the form $f^{-1}(\zeta p_j) - f^{-1}(\zeta p_k) \equiv g(p_j,p_k)$. If such a relation holds, differentiating both sides w/r/t $\zeta$ gives that $y \cdot D_y(f^{-1}(y))$ is constant. It follows that $f(x) = \exp(cx + c')$ for some $c, c'$: this amounts to reproducing the Gibbs distribution. In other words, generalizations of the effective temperature building on e.g. Bose-Einstein or nonextensive statistics cannot be constructed along obvious lines.


\subsubsection{\label{sec:ContinuousDistributions}Naive requirements for continuous distributions}

Dealing with a more general reference probability measure $\nu$ than a normalized counting measure is straightforward provided that the physical measure $\mu$ is absolutely continuous w/r/t $\nu$ and that both $p \equiv d\mu/d\nu$ and $\log p$ are in $L^1(\nu) \cap L^2(\nu)$. 

To see this, recall that $f \in L^q(\nu)$ iff $\lVert f \rVert_q := (\int |f|^q \ d\nu)^{1/q} < \infty$. Using the additional shorthand $\ell := \log p$, we have that the analogue to \eqref{eq:E0} is $\int E \ d \nu = 0$, from which it follows that $-\int \ell \ d \nu = \log Z$. Further brief manipulations yield the generalization of \eqref{eq:betaEk}, namely $\gamma(x) = \int \ell \ d\nu - \ell(x)$, and we also have that $\lVert \gamma \rVert^2_2 = \lVert \ell \rVert^2_2 - (\int \ell \ d\nu)^2$. This is well-defined if $\ell \in L^1(\nu) \cap L^2(\nu)$. If moreover we have that $p \in L^2(\nu)$ then the natural analogue of \eqref{eq:explicitFH1} is well-defined. Note that $p \in L^1(\nu)$ since $\int p \ d\nu = \int d\mu \equiv 1$.

However, these integrability conditions are rarely met in situations of interest. Even more fundamentally, SRB measures are typically not absolutely continuous w/r/t underlying Riemannian measures. For this reason the application to Anosov-like systems in \S \ref{sec:Anosov} is necessarily more involved.

\subsubsection{\label{sec:OverallScale}The choice of overall scale and the zeroth law}

The requirement that $\beta$ equal the actual physical inverse temperature for equilibrium systems strongly constrains $t_\infty$, and (modulo issues of state space discretization) completely specifies the product $S^{-1} t_\infty$ appearing in, e.g., \eqref{eq:scaling}. That is, mandating equivalence of the effective and actual temperature wherever possible links $S$ and $t_\infty$. It is clear that we may choose either $S$ or $t_\infty$ to be system-independent at the cost of admitting at least the possibility for system-dependence on the other. However, we have (without loss of generality) enforced the overall choice of scale $S \equiv 1$.
\footnote{
While $S \equiv S(n) \not \equiv 1$ might appear to be a reasonable middle ground, e.g., $S = \sqrt{n}$, considerations of the sort described elsewhere for $t_\infty$ also militate against this.
}

Subject to this choice of overall scale, the ultimate physical significance of $\beta$ will necessarily depend on the (as yet unknown) degree to which we can have $\beta$ equal the actual physical inverse temperature in different equilibrium systems without requiring $t_\infty$ to have some system-dependent definition (or to absorb some system-dependent constant) to compensate. 

Nevertheless, even in the most pessimistic case of a completely system-dependent overall scale, the effective temperature (or a ratio of effective temperatures with the same choice of overall scale) could still be fruitfully used to ``internally'' characterize individual systems that vary in time sufficiently slowly for $p$ and $t_\infty$ to remain well-defined, or to compare multiple systems that are identical save for some parametric dependence over a statistical ensemble (and perhaps especially an ensemble which permits perturbations from equilibrium). In fact, the former situation obtains in the analysis of, e.g., experimental data with long-timescale variability. 

Therefore, a system-independent choice of scale is not necessary to establish that there is some physical relevance for $\beta$, but only the scope of that relevance. However, we point out that at least a weak degree of system-independence is exhibited for the examples in the preceding paragraph as well as the ideal gas on a surface of constant negative curvature, as the genus does not appear to affect either $t_\infty$ or the value of $\beta$ (see Figures \ref{fig:geodesicbeta} and \ref{fig:geodesicbeta2}).

In general, there appear to be two basic avenues to addressing concerns of system-dependence of scale (say, as manifested in $t_\infty$ with $S \equiv 1$), which even from a pessimistic point of view would turn out to be at least approximately equivalent in some circumstances for the reasons cited just above. 

The first avenue is, in the absence of any other generically useful and identifiable recipe for computing $t_\infty$ \emph{a priori}, to take the requirement that $\beta$ equal the actual physical inverse temperature in equilibrium to operationally \emph{define} $t_\infty$. The example of a single Glauber-Ising spin in \S \ref{sec:TwoStateSystems} indicates how a $t_\infty$ obtained in this way can be physically meaningful. Taking this avenue might help to place physical constraints on and even select a preferred system-independent characterization of $t_\infty$ (e.g., as a modified $L^2$ mixing time) valid both in and beyond equilibrium. 

The second avenue is more difficult and ambitious, but likely also more sound. It involves coupling systems to an ideal gas and enforcing the constraint $\beta^{(\text{sys})} = \beta^{(\text{sys}+\text{gas})} = \beta^{(\text{gas})} \equiv \beta$ in a suitable coupling and/or large $N$ limit for the gas. That is, this approach takes the zeroth law as an \emph{Ansatz}. It can be hoped that it might be possible in principle to infer a well-defined $t_\infty^{(\text{sys})}$ from an implied timescale $t_\infty^{(\text{sys}+\text{gas})}$,
\footnote{
For considerations affecting systems with multiple independent characteristic timescales, see \S XVIII of \cite{huntsman2010anosov}.
} 
subject to the temperature constraint above. Better still would be a comparatively simple recipe for determining this $t_\infty^{(\text{sys})}$ such as those proposed in \S \ref{sec:L2markov}. 

\subsubsection{\label{sec:Coda} Coda}

Though the typical state of affairs is for the ordinary temperature to be regarded as an environmental parameter in calculations, the logic may be largely turned on its head: in many cases we can directly obtain an effective temperature and (re)construct an effective Hamiltonian from the behavior of a system. In this way the idiom of equilibrium statistical physics may be extended for many applications in nonequilibrium steady states and problems in data analysis. Finally, while a philosophical study of thermometry notes that ``there are complicated philosophical disputes about just what kind of quantity temperature is" \cite{chang2004inventing}, we hope to catalyze investigations in this direction.

\section{\label{sec:Anosov} Application to Anosov systems}

The examples and applications in \S \ref{sec:ExamplesAndApplications} of the framework of \S \ref{sec:2} are only a partial list. More substantive efforts have been or are focused on, e.g. characterization of computer network traffic \cite{huntsman2009effective}, physical correspondences in the information theory of discrete memoryless channels, and data science. Here, however, we will discuss an application to mixing Anosov systems in some detail, as this realistic physical context underlines the equivalence of statistical ($t$) and physical ($H$) descriptions. 

A more comprehensive treatment of the material in this section is in \cite{huntsman2010anosov}.

\subsection{\label{sec:AnosovOverview} Overview}

Essentially, a \emph{mixing Anosov system} is a well-behaved uniformly hyperbolic dynamical system (see Figure \ref{fig:AnosovFlow} for a schematic and \cite{bowen1975equilibrium,katok1997introduction,chernov2002invariant,jiang2004mathematical} for background elaborating on \S \ref{sec:AnosovBackground}). Such systems are particularly relevant to statistical physics: indeed, the so-called \emph{chaotic hypothesis} is that many-particle systems are essentially mixing and Anosov insofar as their macroscopic properties are concerned \cite{gallavotti1995stationary,gallavotti1999statistical}. Underpinning this conjecture is the existence (for a compact phase space, which we assume for convenience) of the SRB measure $\mu_{SRB}$, an invariant physical probability measure generalizing the microcanonical ensemble \cite{young2002srb}.

The two quintessential examples of Anosov systems (both mixing) are the discrete-time Arnol'd-Avez cat map (more generally, a hyperbolic toral automorphism) and the geodesic flow describing a free particle on a compact surface of constant negative curvature. 
\footnote{
A discrete-time version of the latter is obtained by considering a Poincar\'e or timing map.
}

We first outline how to deal with the continuous phase space of a mixing Anosov system in a natural way. One of two key observations in this regard is that the hyperbolic dynamics furnish a physically natural family of phase space discretizations, called \emph{Markov partitions}. 
We recall that a Markov partition $\mathcal{R}$ for an Anosov diffeomorphism $T$ is a decomposition of phase space into so-called \emph{rectangles} $R_j$ with local product structure compatible with the hyperbolic structure of $T$ and such that the images of rectangles under $T$ stretch completely across the original rectangles in the unstable direction and vice versa for $T^{-1}$ (see Figure \ref{fig:MarkovSchematic}).
Here the probability distribution corresponding to a Markov partition $\mathcal{R}$ is simply given by $p_j := \mu_{SRB}(R_j)$ for each $R_j \in \mathcal{R}$, and the $L^2$ mixing time of the system is taken as (a placeholder/approximation for) $t_\infty$. 
\footnote{
The \emph{Ruelle-Bowen hypothesis} that mixing Anosov flows are exponentially mixing (with respect to H\"older observables and equilibrium measures with H\"older potentials) \cite{dolgopyat1998decay,liverani2004contact,butterley2020open,tsujii2020smooth} would further support the existence of a mathematically and physically natural $t_\infty$.
}
Ergodicity ensures that this is equivalent to considering the time series of indices of rectangles that contain a test particle.

The second key observation for dealing with the continuous phase space is that its geometrical measure (i.e., the Riemannian and not the SRB measure) determines a procedure for obtaining \emph{greedy refinements} of any initial Markov partition (see Figures \ref{fig:MarkovSchematic} and \ref{fig:Greedy0}-\ref{fig:Greedy2}). 
\footnote{
While not every Anosov system will preserve a natural Riemannian measure, the archetypes we consider do: for hyperbolic toral automorphisms, this is just the (pushforward of) Lebesgue measure, and for the geodesic flow on a surface of constant negative curvature it is the Liouville measure. More generally, so-called \emph{conservative} diffeomorphisms preserve a natural Riemannian measure (and diffeomorphisms in general preserve an equivalence class of measures) \cite{Wilkinson}. A wide class of conservative diffeomorphisms is furnished by Hamiltonian systems, and it is natural to couch the otherwise implicit notion of a ``natural'' Riemannian measure in this context.
}

These greedy refinements are Markov by construction, and a subsequence of them is maximally uniform w/r/t the geometrical measure. Therefore, from the physical point of view, intermittent greedy refinements of an initial Markov partition are particularly natural. Indeed, their uniformity enables the proof of a finite value for $\lim \inf \beta$ for the cat map, where the limit inferior is taken over successive greedy refinements.
\footnote{
Based on some explicit calculations for the cat map, it may be that $\lim \inf \beta$ is independent of the choice of initial Markov partition: however, if this turns out not to be the case, a suitable extremum over Markov partitions may be considered.
}
Furthermore, considering greedy refinements for the geodesic flow also gives compelling numerical evidence that $\lim \inf \beta$ is not only finite but independent of global structure (see Figures \ref{fig:geodesicbeta} and \ref{fig:geodesicbeta2}). 

\begin{figure}[htbp]
\includegraphics[trim = 5mm 0mm 10mm 5mm, clip, width=\textwidth,keepaspectratio]{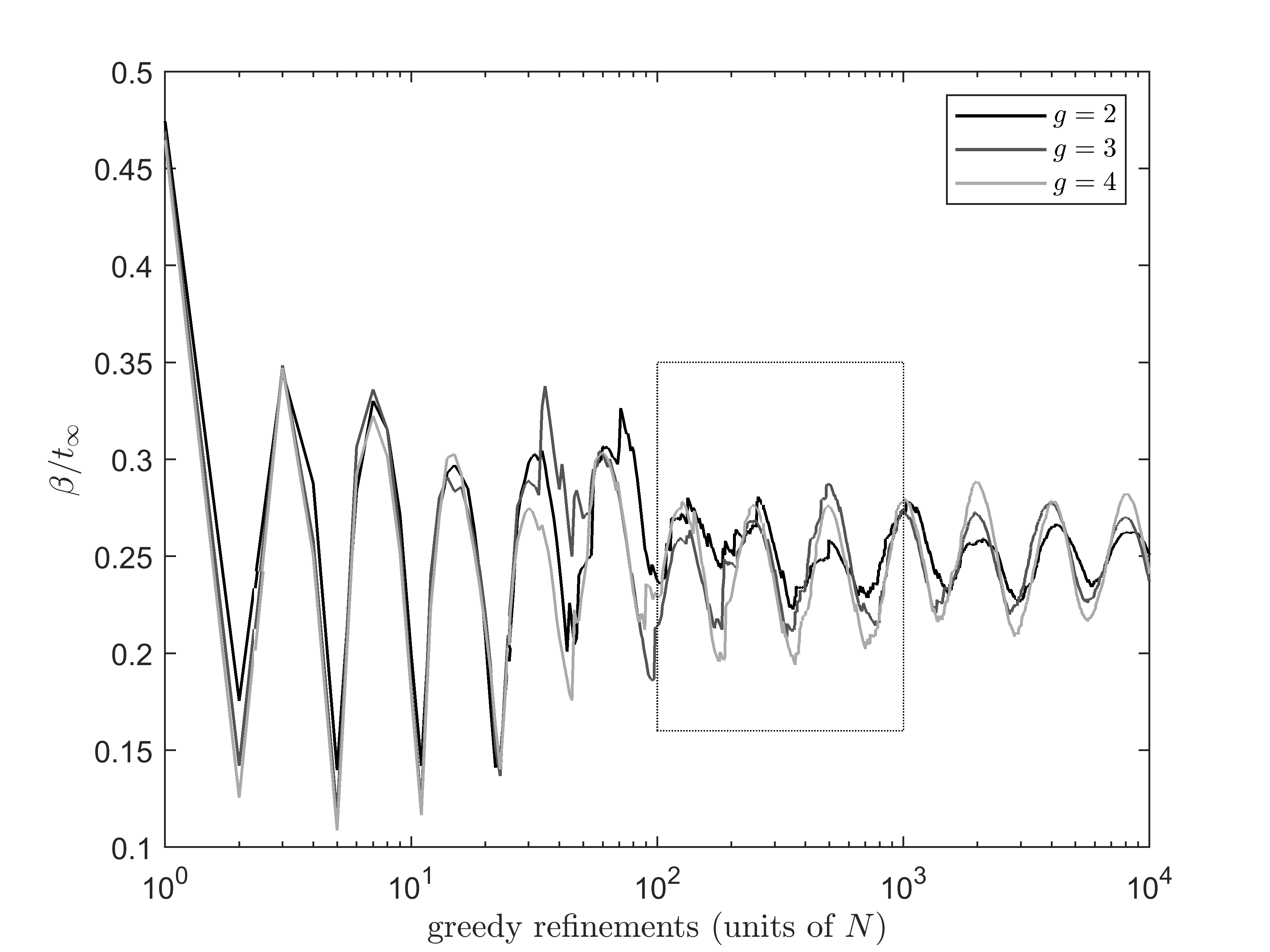}
\caption{ \label{fig:geodesicbeta} $\beta/t_\infty$ under successive greedy refinements of an initial Markov partition for a map isometrically topologically conjugate to a timing or Poincar\'e map of the geodesic flow on a compact surface of constant negative curvature with genus $g=2,3,4$. Note the logarithmic horizontal scale: here $N=8g-4$ is related to the number of rectangles in the initial partition. The inset box indicates axis limits for Figure \ref{fig:geodesicbeta2}.}
\end{figure}

\begin{figure}[htbp]
\includegraphics[trim = 5mm 0mm 10mm 5mm, clip, width=\textwidth,keepaspectratio]{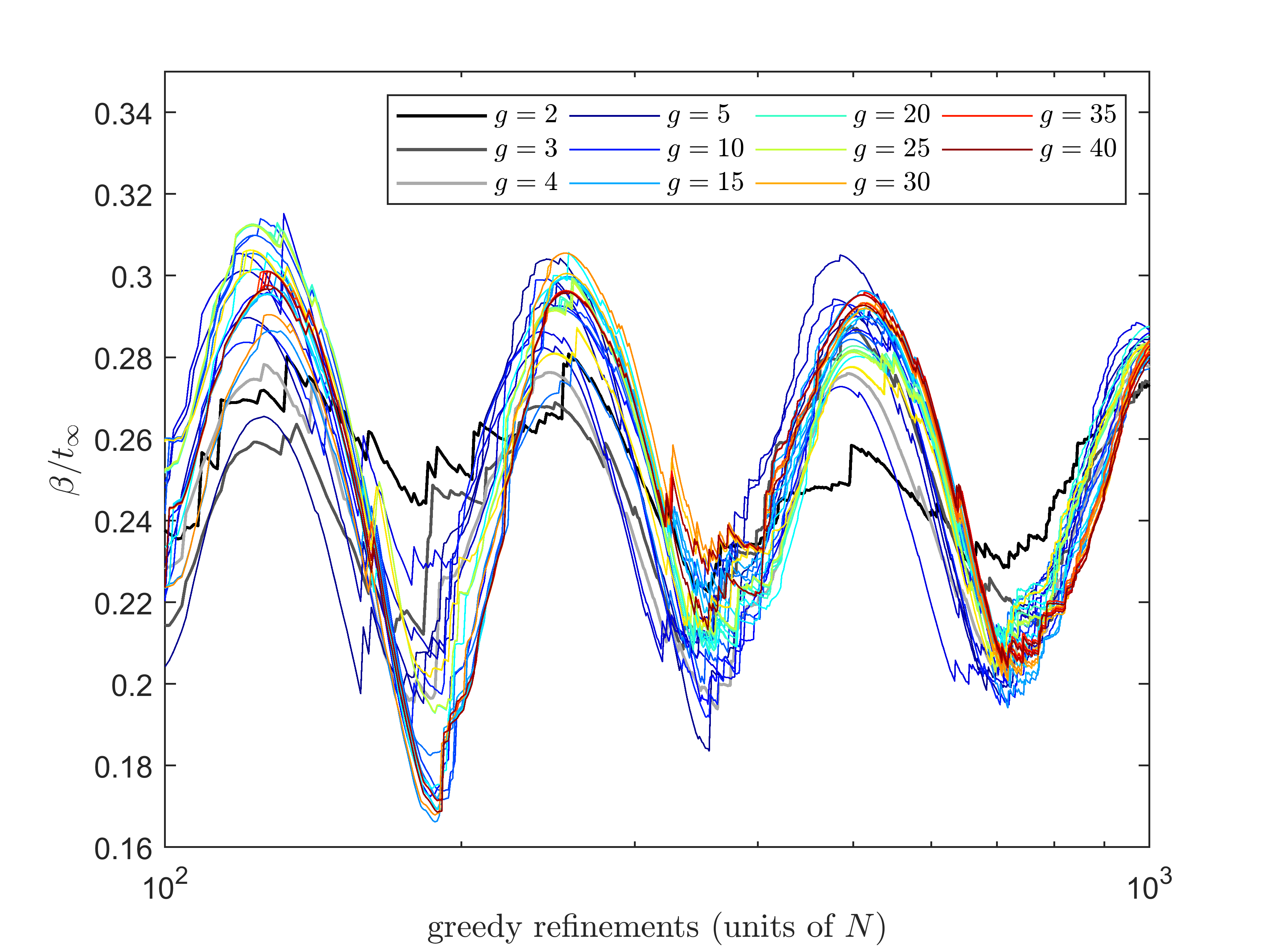}\caption{ \label{fig:geodesicbeta2} As in Figure \ref{fig:geodesicbeta} for $g=2,\dots,40$. Axis limits correspond to the inset box in Figure \ref{fig:geodesicbeta}. The existence of a nontrivial limit inferior independent of $g$ is strongly suggested by such numerical results. Meanwhile, both the $L^2$ mixing time for sufficiently well-behaved observables and the inverse topological entropy turn out to be plausible genus-independent approximations for $t_\infty$.}
\end{figure}

%
%
%

We now turn to briefly describing how the preceding results bear on generic systems of interest in statistical physics. A product system formed from copies of the geodesic flow with canonically distributed initial conditions is just an ideal gas.
\footnote{
The consideration of a product system formed from statistically identical subsystems places a very strong intensivity constraint on the form of $t_\infty$ that forces this quantity to be similar to, yet necessarily distinct from, the $L^2$ mixing time. 
}
By considering this gas as an environment and weakly coupling a subsystem to it, our focus shifts from a generalized microcanonical ensemble to the canonical ensemble, and a definition of a generic subsystem's effective temperature in terms of that of its environment. 

Such a procedure should often if not always give physically reasonable and self-consistent results, as we proceed to sketch. The results of \cite{amaricci2007analyticity} and the general phenomenon of structural stability of Markov partitions for Anosov systems show that the effect on $\beta/t_\infty$ of coupling a Gaussian thermostat to the geodesic flow is analytic in the strength of the coupling. 
\footnote{
In the thermodynamical limit, we expect dynamics to be insensitive to the details of thermostatting, i.e., the various SRB measures should tend to the same limit \cite{gallavotti1999statistical,ES,bonetto2006chaotic,gallavotti2009thermostats,gallavotti2010nonequilibrium,gallavotti2010thermodynamic}.
}
It is also reasonable to expect that $t_\infty$ will exhibit a similar regularity as a function of the coupling based on (e.g.) the stability of rapid mixing \cite{field2007stability}. Many other examples are known in which well-behaved Anosov systems exhibit considerable stability of SRB measures 
and mixing times w/r/t perturbations, and it is reasonable to expect such behavior in general for physically relevant cases.

\begin{svgraybox}
Taken together, these observations strongly suggest the existence and essential uniqueness of a physically preferred effective temperature and concomitant energy function intrinsic to generic mixing Anosov systems. Moreover, they suggest an avenue for extending Ruelle's thermodynamical formalism \cite{Ruelle1} into a more comprehensive theory of statistical physics for nonequilibrium steady states obeying the chaotic hypothesis. 
\end{svgraybox}

\subsection{\label{sec:AnosovBackground} Background on Anosov systems}

A smooth endomorphism $T$ of a Riemannian manifold (``phase space'') is an \emph{Anosov map} if it is both
\begin{itemize} 
	\item \emph{uniformly hyperbolic}, i.e. at every point $x$ there are transverse local stable and unstable surfaces on which points respectively converge and diverge exponentially at a rate independent of $x$; and
	\item \emph{invariant}, i.e. the tangent spaces to these surfaces are mapped by the derivative of $T$ into the tangent spaces to the corresponding surfaces at $Tx \equiv T(x)$.
\end{itemize}
If the global stable and unstable surfaces of $T$ are dense, then $T$ is also said to be \emph{mixing}. An \emph{Anosov flow} is a continuous-time analogue of an Anosov map with a neutral surface transverse to the time evolution as schematically indicated in Figure \ref{fig:AnosovFlow}. We refer to both Anosov maps and flows as \emph{Anosov systems}.

\begin{figure}[htbp]
\centering
	\includegraphics[trim=0mm 5mm 0mm 0mm, clip, width=5cm,keepaspectratio]{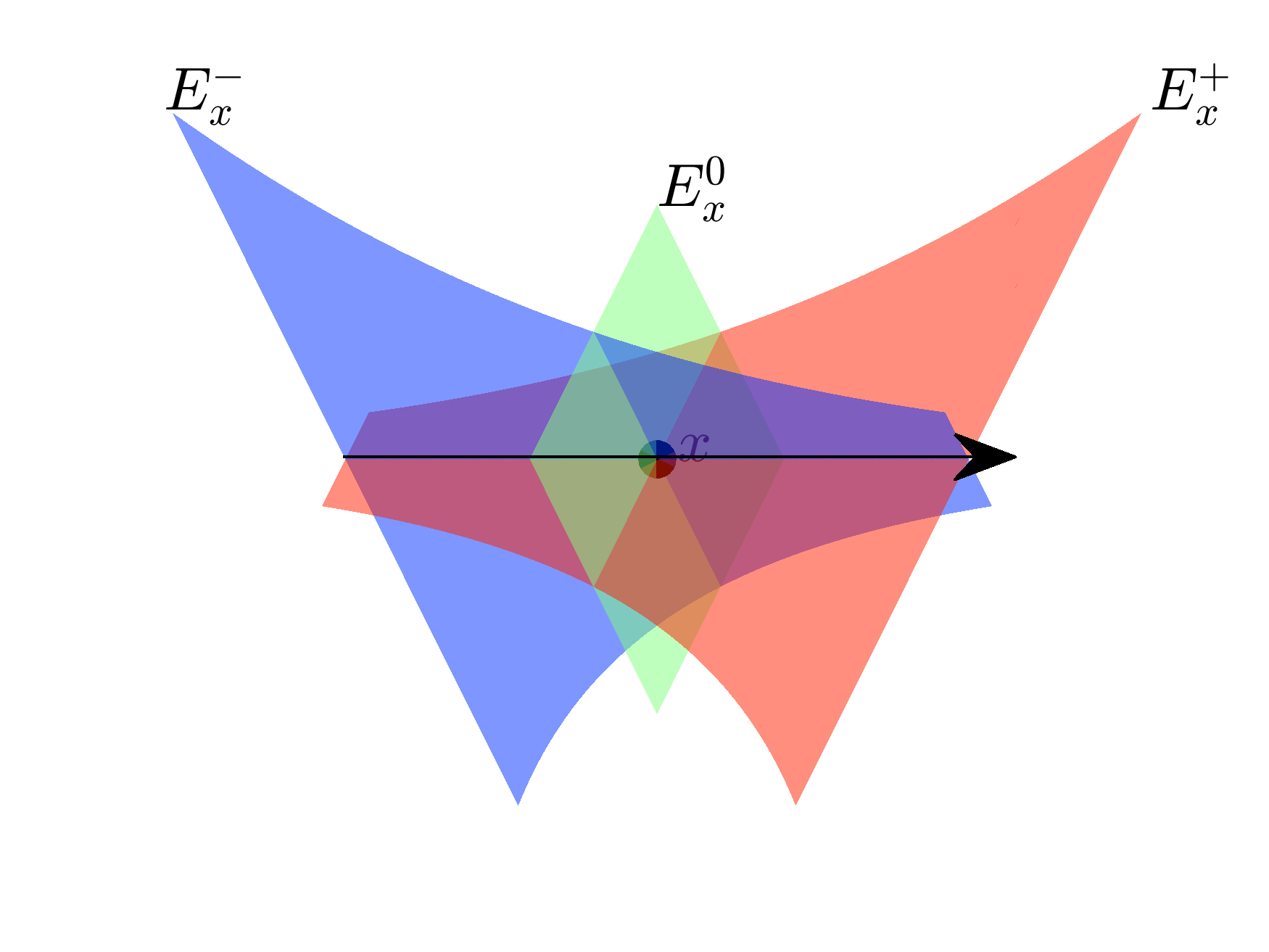}
\caption{ \label{fig:AnosovFlow} Schematic of an Anosov flow, with respective local stable, unstable, and neutral surfaces $E_x^-$, $E_x^+$, and $E_x^0$.}
\end{figure}

Anosov systems enjoy natural discretizations of their phase spaces. These discretizations, called \emph{Markov partitions}, are particular configurations of \emph{rectangles}. 
\footnote{
Strictly speaking, Anosov flows require a related notion called a \emph{Markov section}, but this distinction can be mostly ignored here. See \cite{chernov2002invariant} for details.
}
A rectangle $R$ is a subset of phase space such that the intersection of a local stable surface and a local unstable surface consists of a single point also in $R$: i.e., there is a local product structure compatible with $T$. 
\footnote{
Rectangles in the present context are not, and should not be generically thought of as, right-angled quadrilaterals (indeed, rectangles are generically fractal in character). However, the specific examples we consider will be right-angled quadrilaterals.
}
A partition $\mathcal{R} = \{R_j\}_{j=1}^n$ of phase space into rectangles is Markov if (whenever these sets intersect) the images $TR_j$ stretch completely across $R_k$ in the unstable direction and $R_k$ stretches completely across $TR_j$ in the stable direction, as schematically indicated in Figure \ref{fig:MarkovSchematic}. The utility of the coarse-graining of phase space that Markov partitions provide is largely attributable to the fact that (via the theory of symbolic dynamics) they allow Anosov systems to be treated in much the same way as a spin system \cite{beck1995thermodynamics,gallavotti1999statistical}. This also highlights the relevance of both Anosov and spin systems as extremely generic (or as the chaotic hypothesis argues, completely generic) models for statistical physics.
\footnote{
For example, a unique SRB measure corresponds to the absence of phase transitions in one-dimensional short-ranged spin models \cite{gallavotti1999statistical}. This Anosov-spin system correspondence also suggests the construction of $d$-dimensional lattices of coupled maps corresponding to $(d+1)$-dimensional spin systems capable of exhibiting phase transitions \cite{chazottes2005dynamics}. 
}

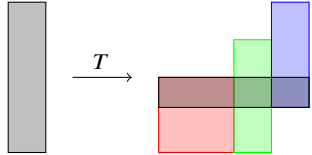
\begin{figure}[htbp]
\centering
	\begin{tikzpicture}[scale = 1,every node/.style={transform shape}]
	\begin{scope}[fill opacity=0.25]
		\draw[draw=black,fill=black] (-2,0) rectangle (-1.5,2);
		\draw[draw=red,fill=red] (0,0) rectangle (1,1);
		\draw[draw=green,fill=green] (1,0) rectangle (1.5,1.5);
		\draw[draw=blue,fill=blue] (1.5,.6) rectangle (2,2);
		\draw[draw=black,fill=black] (0,.6) rectangle (2,1);
	\end{scope}
	\node (v0) at (-1.25,1) {};
	\node (v1) at (-.25,1) {};
	\path [->] (v0) edge node [above] {$T$} (v1);
	\end{tikzpicture}
\caption{ \label{fig:MarkovSchematic} Schematic of the action of an Anosov map on a rectangle that is part of a Markov partition. The image of the rectangle stretches precisely across other rectangles in the partition. If we suppose that the phase space measure corresponds to the Lebesgue measure of the rectangles, then a \emph{greedy refinement} of the gray rectangle can be obtained by drawing a line at the inverse image of the intersection of the gray, red, and green rectangles. That is, a greedy refinement of a Markov partition $\mathcal{R}$ for an Anosov map $T$ starts by considering the forward image under $T$ of a rectangle $R_j \in \mathcal{R}$ with maximal geometrical/phase space (vs. physical/SRB) measure. The intersection of the boundary of $\mathcal{R}$ and $TR_j$ determines subrectangles of $T\mathcal{R}$ that in turn determine various refinements of $\mathcal{R}$ under $T^{-1}$. A greedy refinement has maximal entropy w/r/t the geometrical measure.
}
\end{figure}

\subsection{\label{sec:CatMap} The cat map}

The simplest example of an Anosov map is the Arnol'd-Avez \emph{cat map} defined on the unit torus via $T_A x := Ax \mod 1$, where $A = \left(\begin{smallmatrix}
		2 & 1 \\
		1 & 1
		\end{smallmatrix} \right)$.
\footnote{
The cat map corresponds to unit-frequency projections for the Hamiltonian $\mathcal{H}_A(X,P) = K(P^2-X^2 + XP)$ with $K = \sinh^{-1}(\sqrt{5}/2)/\sqrt{5}$.
}
The eigendecompositon of $A$ determines the stable and unstable directions. The eigenvalues are $\lambda_\pm = \phi^{\pm 2}$, where $\phi = \frac{1+\sqrt{5}}{2}$ is the golden ratio. The corresponding eigenvectors are $e_- = (s,-c)^*$ and $e_+ = (c,s)^*$, where $c = 1/\sqrt{3-\phi}$ and $s = \sqrt{1-c^2}$. Because these eigenvectors have irrational slopes, the stable and unstable surfaces are dense on the torus, so the cat map is mixing.

More generally, matrices in $GL(n,\mathbb{Z})$ with no eigenvalues in $S^1$ correspond to \emph{hyperbolic toral automorphisms} which are also Anosov maps. In dimension $n = 2$, rectangles for these maps are geometrically unions of parallelograms.

\subsection{\label{sec:MarkovGreedy} Markov partitions and greedy refinements}


There are many Markov partitions for the cat map: Figure \ref{fig:MarkovPartitions} shows three, respectively denoted $\mathcal{R}_A$, $\mathcal{R}'_A$, and $\mathcal{R}''_A$.
\begin{figure}[htbp]
\centering
	\includegraphics[trim = 84mm 111mm 84mm 111mm, clip, width=.3\textwidth,keepaspectratio]{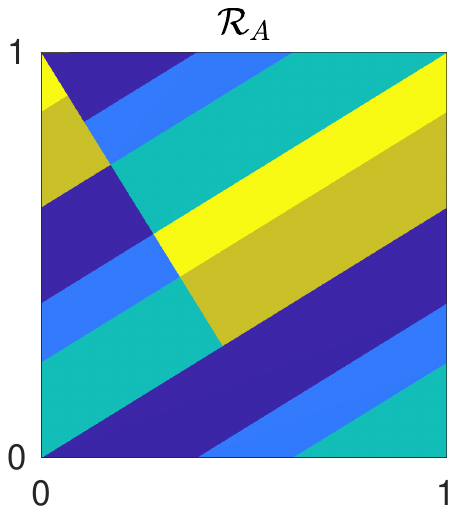} 
	\includegraphics[trim = 84mm 111mm 84mm 111mm, clip, width=.3\textwidth,keepaspectratio]{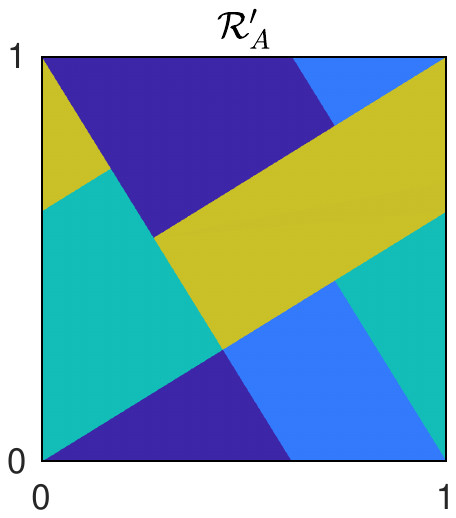} 
	\includegraphics[trim = 84mm 111mm 84mm 111mm, clip, width=.3\textwidth,keepaspectratio]{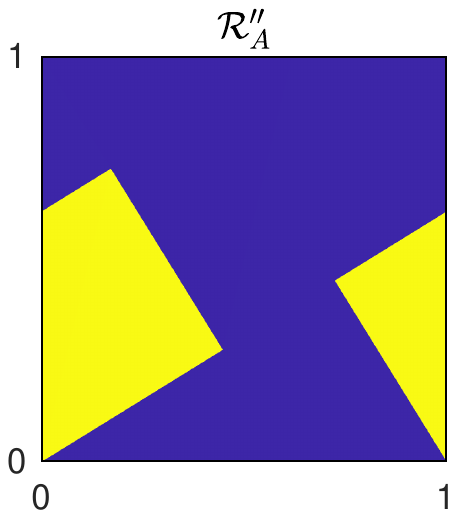}
\caption{ \label{fig:MarkovPartitions} Three Markov partitions for the cat map.}
\end{figure}
A Markov partition $\mathcal{R} = \{R_j\}_{j=1}^n$ induces a probability distribution $p_j := \mu(R_j)$ inherited from the physical/SRB measure $\mu \equiv \mu_{SRB}$. For the cat map (or any other hyperbolic toral automorphism), this measure is just the Lebesgue measure. 

For \emph{any} two-dimensional hyperbolic toral automorphism $T$ (including the cat map), there is a remarkable fact (for details, see \cite{huntsman2010anosov}). Let $\mathcal{R}$ be a Markov partition and let $\mathcal{R}^\vee_m$ be a refinement of $\mathcal{R}$ obtained by taking connected components of intersections of rectangles in $T^j \mathcal{R}$ for $0 \le j \le m$. Now as $m \rightarrow \infty$, $\beta/t_{\infty} =  \lVert p \rVert \cdot \sqrt{\lVert \gamma \rVert^2 +1}$ (which does not depend on $t_\infty$) converges to a finite nonzero value. 

To see why any finite nonzero limit for $\beta/t_\infty$ is nontrivial, consider the following toy example. Define $\mathcal{Y}^{(0)} = [0,1]$ and form $\mathcal{Y}^{(m+1)}$ by subdividing each interval in $\mathcal{Y}^{(m)}$ into two subintervals of relative length $q$ and $1-q$. The corresponding partitions yield $\lim \beta/t_\infty = \infty$ unless $q = 1/2$, in which case the limit is zero. Meanwhile, as we have mentioned, a naive discretization of the free particle/ideal gas has no obvious reasonable scaling limit.

The preceding results involving two-dimensional hyperbolic toral automorphisms indicate that while it might seem useful to consider $\beta = S_n \cdot t_\infty \lVert p \rVert \sqrt{\lVert \gamma \rVert^2 +1}$ with $S_n = \sqrt{n}$ so that $\beta$ is independent of $n$, this is a mirage: we should actually require $S_n$ to be constant in $n$. 

Now while detailed calculations establish that $\lim \beta/t_\infty$ depends on $\mathcal{R}$, and phase space (to say nothing of physical) measures of rectangles in $\mathcal{R}^\vee_m$ vary increasingly more as $m$ increases, there is a straightforward solution. We can construct \emph{greedy refinements} of $\mathcal{R}$ that are more physically natural by maximizing the uniformity of phase space measures at each step of the refinement process. Even when the physical measure $\mu$ and phase space measure $\nu$ disagree,
\footnote{
Typically, $\mu$ will be singular with respect to $\nu$, though for two-dimensional hyperbolic toral automorphisms both measures are equal to Lebesgue measure.
}
considering greedy refinements will tend to minimize $\beta$ and maximize entropy/minimize effective free energy. This is an indication of a generalized variational principle (in the sense of ergodic theory) that can yield a finite limit for $\beta$ even as the entropy of partitions diverges.

The construction of greedy refinements goes as follows. For a rectangle $R_j \in \mathcal{R}$ with maximal phase space measure $\nu(R_j)$, the intersection of $TR_j$ with rectangles in $\mathcal{R}$ determines subrectangles of $T\mathcal{R}$ that in turn determine various refinements of $\mathcal{R}$ under $T^{-1}$. We call such a refinement of maximal entropy (with respect to $\nu$) \emph{greedy}. In general, greedy refinements are not unique, though subsequences of them (corresponding to the result of greedily refining all of the rectangles of maximal $\nu$-measure at a time) will be. Figures \ref{fig:Greedy0}-\ref{fig:Greedy1} illustrate greedy refinements for the Markov partitions in Figure \ref{fig:MarkovPartitions}.


\begin{figure}[htbp]
\centering
	\includegraphics[trim = 30mm 130mm 25mm 125mm, clip, width=\columnwidth,keepaspectratio]{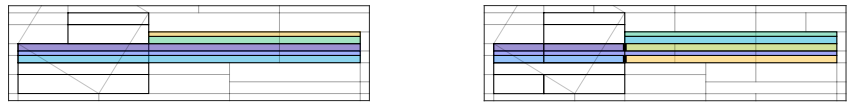}
\caption{ \label{fig:Greedy0} (L) In black, we show the Markov partition $\mathcal{R}_A$ from Figure \ref{fig:MarkovPartitions} in eigencoordinates, with both the unit square and translates in gray. The forward image $T \mathcal{R}_A$ is shown in color. There are two rectangles of maximal $\nu$-measure: the third and fifth from the top. (R) Greedily refining each of the two rectangles of maximal $\nu$-measure by taking intersections as demarcated by bold black lines.}
\end{figure}

\begin{figure}[htbp]
\centering
	\includegraphics[trim = 30mm 130mm 25mm 125mm, clip, width=\columnwidth,keepaspectratio]{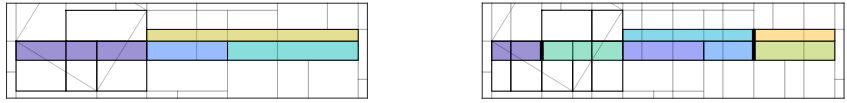}		
\caption{ \label{fig:Greedy1} (L) As in Figure \ref{fig:Greedy0}, but for $\mathcal{R}_A'$. (R) The result of two greedy refinements. Note that the result of a round of greedy refinements is not unique, though the value of the resulting $\nu$-entropy is.}
\end{figure}

\begin{figure}[htbp]
\centering
	\includegraphics[trim = 30mm 145mm 25mm 75mm, clip, width=\columnwidth,keepaspectratio]{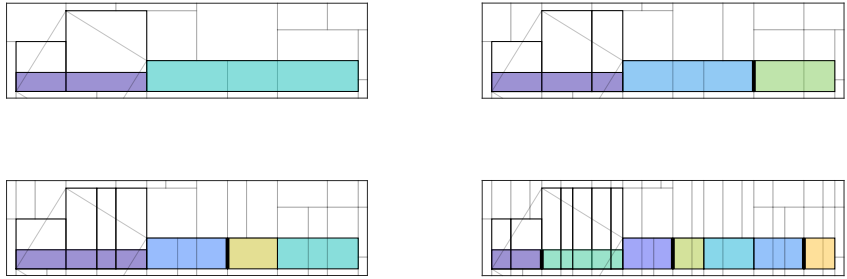}		
\caption{ \label{fig:Greedy2} Successive greedy refinements for $\mathcal{R}_A''$.}
\end{figure}

These greedy refinements stabilize the measures of rectangles. Detailed calculations show that certain greedy refinements of both $\mathcal{R}_A$ and $\mathcal{R}_A'$ contain $L_{m+1}$ and $L_{m+2}$ rectangles of relative measure $1$ and $\phi$, respectively. Here the \emph{Lucas numbers} are defined via $L_{m+2} = L_{m+1} + L_m$ with $L_1 = 1$, $L_2 = 3$. However, $\mathcal{R}_A'$ behaves differently, with maximally uniform greedy refinements containing $L_{m-1}$ and $L_m$ rectangles of relative measure $1$ and $\phi$, respectively. Nevertheless, in each of these three cases there is a common limit $\lim \beta/t_\infty \approx 0.2393$, which is probably minimal/universal for the cat map. In any event, while the detailed measures of greedy refinements depend on an initial Markov partition, we can always consider an extremum over Markov partitions with decreasing size to obtain (by construction) a unique physically natural result.

\subsection{\label{sec:FreeParticle} The geodesic flow on a surface of constant negative curvature}

The geodesic flow on a surface of constant negative curvature is the archetypal Anosov flow \cite{anosov1969geodesic,klingenberg1974riemannian}. It corresponds to the free particle Hamiltonian \cite{collet1984perturbations} $\mathcal{H} = \frac{1}{2m}\sum_{jk}g^{jk}P_j P_k$, where we use typical notation for the inverse of the metric tensor and for momenta on the cotangent bundle. For a surface of constant negative curvature, the geodesic flow is mixing, and as we shall see the effective temperature is apparently insensitive to the surface genus.

We briefly recall the details of the geodesic flow in the Poincar\'e disk model following \cite{adler1991geodesic} (see also \cite{bowen1979markov,gutzwiller1990chaos}). The differential arclength is $ds = dr/(1-r^2)$, and geodesics correspond to circular arcs intersecting $S^1$ at right angles (see Figure \ref{fig:genus2}). A surface of constant negative curvature can be obtained by identifying pairs of edges $s_j$ of a hyperbolic polygon via maps $T_j(s_j) = s_{\sigma(j)}^{-1}$. Here $s_j^{-1}$ denotes the orientation reversal of $s_j$, and the pairing $\sigma$ is defined along the lines shown for the genus 2 case of Figure \ref{fig:genus2}. 
\footnote{
Note that this pairing is not ``twisted.''
}
If there are $8g-4$ edges, this procedure yields a surface of genus $g$. Finally, the Hamiltonian is $\mathcal{H} = (1-r^2)^2 \cdot P^2/2m$.

\begin{figure}[htbp]
	\includegraphics[trim = 28mm 25mm 23mm 25mm, clip, width=.59\textwidth,keepaspectratio]{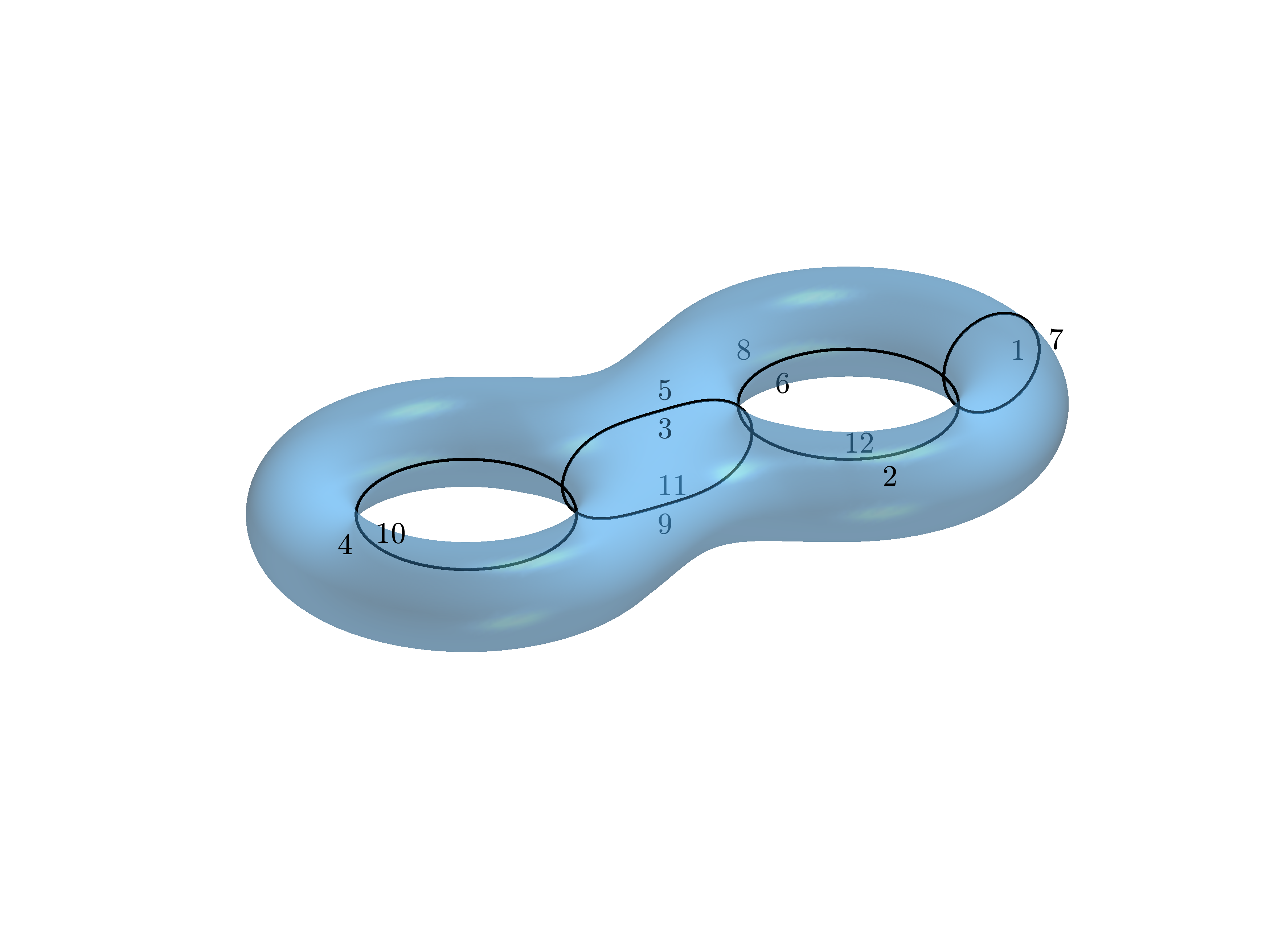}
	\includegraphics[trim = 25mm 5mm 25mm 5mm, clip, width=.4\textwidth,keepaspectratio]{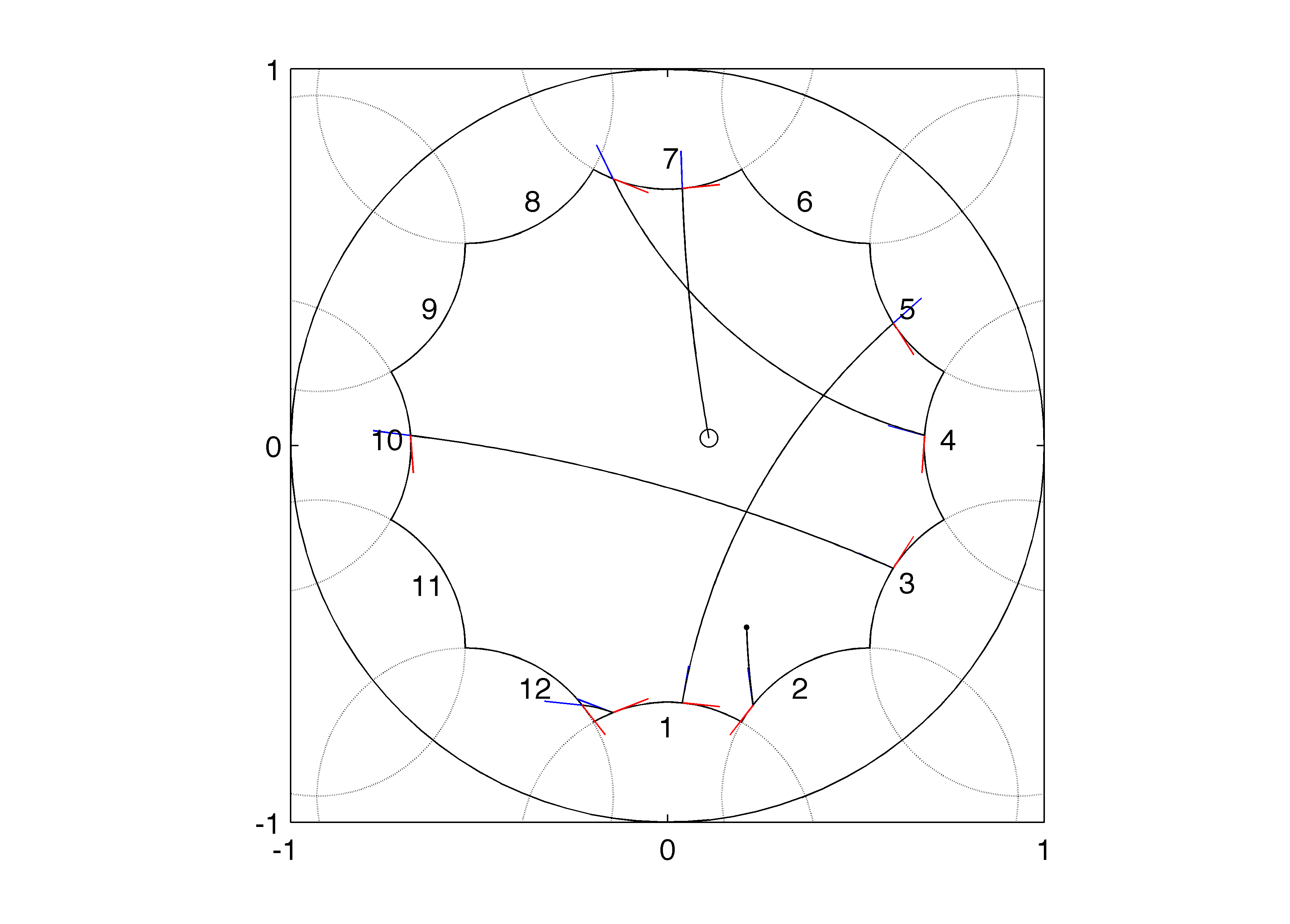}  
\caption{ \label{fig:genus2} (L) A surface with genus $g = 2$. For $g \ge 2$ there are surfaces of constant (or more generally global) negative curvature and whose geodesic flows are mixing (hence also ergodic). The construction of Adler and Flatto provides examples. For $g = 2$, the corresponding 12-gon can be recovered by cutting along the indicated paths. Incrementing $g$ by adding a handle also requires adding two more geodesic loops; one of the new loops and one of the old loops are separated into two arcs by their intersections with neighboring loops, and cutting along these arcs yields eight new edges. This construction underlies the pairing of edges of $F$, which is indicated explicitly here for $g=2$. (R) Model of the geodesic flow for genus $g = 2$ with 12-gon labels as indicated in the left panel. A sample trajectory is also indicated, with initial condition given by the open marker close to the center of the figure. Blue and red segments respectively indicate tangent directions to the flow and 12-gon at the latter's boundary.}
\end{figure}

We can construct a Poincar\'e or timing map and associated Markov partition for the geodesic flow. The first step is to instantiate edge pairing maps $T_j$ \emph{en route} to a map $T_R$ which will be the composition of the Poincar\'e map and an isometry. Then, we perform numerical calculations using the map $T_R$ and a Markov partition $\mathcal{R}$ for it. The advantage of this construction is that ``rectangles are rectangles,'' even though $T_R$ is nonlinear. For example, Figure \ref{fig:genus2partition} shows $\mathcal{R}$, $T_R \mathcal{R}$, and $T_R^2 \mathcal{R}$ for the genus $g=2$ case. From these Markov partitions, we obtain refinements $\mathcal{R}_m^\lor$ by intersecting rectangles in $T^0_R\mathcal{R}, \dots, T^m_R\mathcal{R}$.

\begin{figure}[htbp]
\centering
	\includegraphics[trim = 30mm 115mm 25mm 110mm, clip, width=\textwidth,keepaspectratio]{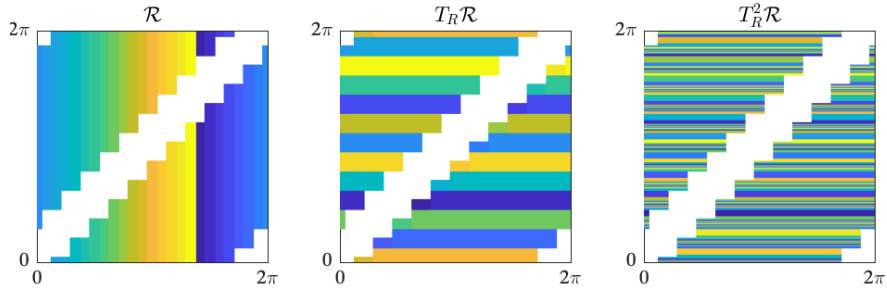}
\caption{ \label{fig:genus2partition} The Markov partitions $\mathcal{R}$, $T_R \mathcal{R}$, and $T_R^2 \mathcal{R}$ for the $g=2$ case. Note that rectangles become disconnected for $T_R^2 \mathcal{R}$.}
\end{figure}

Although as with hyperbolic toral automorphisms we have $\mu = \nu$ in this situation, this common measure is substantially more complicated than Lebesgue measure, viz.
\begin{equation}
\mu([x_1,x_2] \times [y_1,y_2]) = \int_{y_1}^{y_2} \int_{x_1}^{x_2} \frac{\lvert dx \ dy \rvert}{\lvert e^{ix}-e^{iy}\rvert^2}. \nonumber
\end{equation}
Consequently, the numerical computation of measures of rectangles in refinements of $\mathcal{R}$ is nontrivial. It turns out that $\beta/t_\infty$ diverges nearly exponentially for $\mathcal{R}_m^\lor$, whereas for the cat map $\beta/t_\infty$ provably converges due to linearity. 

However, for greedy refinements there is strong numerical evidence that $\lim \inf \beta$ is well-defined, nonzero, finite, and independent of genus $g$ (see Figures \ref{fig:geodesicbeta} and \ref{fig:geodesicbeta2}). This is due not only to the convergence of $\beta/t_\infty$, but also to the fact that the $L^2$ mixing time of the geodesic flow is $1/2$ for $g$ arbitrary.

\subsection{\label{sec:conclusion} Conclusion}

As archetypal Anosov systems, the cat map and the geodesic flow on a surface of constant negative curvature are deeply relevant to statistical physics. These and other Anosov systems exhibit Markovian symbolic dynamics that highlights both chaotic properties and correspondences with spin systems. The chaotic hypothesis seizes on these features to argue that Anosov systems are themselves archetypal model statistical-physical systems. For example, taking copies of the geodesic flow of \S \ref{sec:FreeParticle} with different initial conditions yields an ideal gas, and weakly coupling to this yields a thermometer.

As foreshadowed in \S \ref{sec:AnosovOverview}, the results of \cite{amaricci2007analyticity} and structural stability of Markov partitions for Anosov systems indicate that such a coupling leads to effects that vary analytically with the coupling strength. Moreover, the stability of rapid mixing \cite{field2007stability} gives at least one reason to expect that $t_\infty$ will also behave nicely as a function of coupling strength. SRB measures and mixing times for Anosov systems are well-behaved in many examples, and it is reasonable to expect this for physically relevant examples. 

While of course Markov structures (i.e., partitions or sections) are not unique, it is nevertheless evident that phenomena which hold for \emph{any} Markov structure on an Anosov system are likely to be of relevance to statistical physics. In this vein, the observed limiting behavior of $\beta/t_\infty$ as calculated on greedy partitions for both two-dimensional hyperbolic toral automorphisms and the geodesic flow is remarkable. While rigorously elaborating on this behavior would seem to require the development of new and nontrivial mathematics, it nevertheless appears to be generic.

Of particular importance are the implications that the evidence of this limiting behavior has for a proposed general theory of nonequilibrium steady states. 

\begin{svgraybox}
We have argued here for a comprehensive framework for nonequilibrium statistical physics that simultaneously incorporates and extends the formalism originally introduced by Ruelle and subsequently refined by Gallavotti, Cohen and others. The framework has as its goal a broad theory of nonequilibrium statistical physics that is truly intrinsic: i.e., that provides information about physical observables simply in terms of raw temporal information about the dynamics.
\end{svgraybox}

One reason to consider a proposal of the sort described here, in which the concept of (effective) temperature plays the central role, is because there is no generally accepted physical definition of entropy for non-equilibrium steady states. We hope that the ideas discussed here will serve to elicit fruitful investigations into the fundamental nature of stationary physical systems far from equilibrium.

\begin{acknowledgement}
I thank the editors for organizing the SPIGL Les Houches 2020 school as well as this volume. I am grateful to David Ford for initiating the development of the theory in the earlier parts of \S \ref{sec:2}; the applications that motivated it; and for much else besides. I thank colleagues at IDA and NPS for many fruitful discussions; and L for providing moral support. The ideas discussed here were originally developed with support from NSA, ARDA, DARPA, Equilibrium Networks, and BAE Systems FAST Labs. 
\end{acknowledgement}
%

\bibliographystyle{spphys}
\bibliography{HuntsmanSPIGLbib.bib}

\end{document}